\newcounter{hypothesis}
\renewcommand{\thehypothesis}{\arabic{hypothesis}}
\newtheorem{thm}{\sc Theorem}
\newtheorem{cor}{\sc Corollary}
\newtheorem{lem}{\sc Lemma}
\definecolor{asparagus}{rgb}{0.53, 0.66, 0.42} 
\definecolor{arsenic}{rgb}{0.23, 0.27, 0.29}
\newcommand{\graytext}[1]{\textbf{\textcolor{arsenic}{#1}}}
\definecolor{light}{gray}{.50}
\DeclareMathOperator{\E}{E}					
\DeclareMathOperator{\var}{Var}			
\newcommand{\EX}{\E_{\bX}}					
\newcommand{\varX}{\var_{\bX}}			
\newcommand{\dd}{\,\mathrm{d}\,}		
\newcommand{\tr}{^\mathsf{T}}				
\newcommand{\as}{\textrm{a.s.}}			
\newcommand{\PP}{P}					
\DeclareMathOperator{\cov}{Cov}			
\def\m1{\boldsymbol{\mathbb{1}}}
\def\R{\mathbb{R}}
\def\bzero{\boldsymbol{0}}
\def\b1{\boldsymbol{1}}
\def\bb{\boldsymbol{b}}
\def\bdelta{\boldsymbol{\delta}}
\def\bD{\boldsymbol{D}}
\def\bDelta{\boldsymbol{\Delta}}
\def\be{\boldsymbol{e}}
\def\bgamma{\boldsymbol{\gamma}}
\def\bGamma{\boldsymbol{\Gamma}}
\def\bI{\boldsymbol{I}}
\def\bK{\boldsymbol{K}}
\def\bLambda{\boldsymbol{\Lambda}}
\def\bM{\boldsymbol{M}}
\def\bphi{\boldsymbol{\phi}}
\def\bPhi{\boldsymbol{\Phi}}
\def\bR{\boldsymbol{R}}
\def\btheta{\boldsymbol{\theta}}
\def\bu{\boldsymbol{u}}
\def\bv{\boldsymbol{v}}
\def\bX{\boldsymbol{X}}
\def\bY{\boldsymbol{Y}}
\def\bZ{\boldsymbol{Z}}
\def\bz{\boldsymbol{z}}
\def\bU{\boldsymbol{U}}
\def\cP{\mathcal P}
\def\cS{\mathcal S}
\def\ds{\displaystyle}
\begin{document}

\centerline{\large\bf SCALAR-ON-FUNCTION LOCAL LINEAR} ~ \\ 
\centerline{\large\bf REGRESSION AND BEYOND}

\bigskip
 
\centerline{\large\sc Fr\'ed\'eric Ferraty\footnote{University of Toulouse, Toulouse Mathematics Institute, France. \newline e-mail: \url{ferraty@math.univ-toulouse.fr}} and Stanislav Nagy\footnote{Charles University, Prague, Faculty of Mathematics and Physics, Department of Probability and Math. Statistics, Czech Republic. \newline e-mail: \url{nagy@karlin.mff.cuni.cz}}}

\begin{abstract} 
Regressing a scalar response on a random function is nowadays a common situation. In the nonparametric setting, this paper paves the way for making the local linear regression based on a projection approach a prominent method for solving this regression problem. Our asymptotic results demonstrate that the functional local linear regression outperforms its functional local constant counterpart. Beyond the estimation of the regression operator itself, the local linear regression is also a useful tool for predicting the functional derivative of the regression operator, a promising mathematical object on its own. The local linear estimator of the functional derivative is shown to be consistent. On simulated datasets we illustrate good finite sample properties of both proposed methods. On a real data example of a single-functional index model we indicate how the functional derivative of the regression operator provides an original and fast, widely applicable estimating method. 
\end{abstract}

\noindent {\bf keywords}: {Asymptotics, functional data, functional derivative of regression operator, functional index model, local linear regression, scalar-on-function regression}


\section{Introduction}
Functional data analysis is a toolbox of statistical techniques for dealing with datasets of random functions \cite{RamsS02, RamsS05, HsinE15, KokoR17}. When regressing a scalar response $Y$ on an explanatory random function $X$ using the model $Y = m(X) + error$, with $m$ unknown, common terminology refers to the scalar-on-function regression. In this setting, linear modeling \cite{CardFS99, CaiH06, CramKS09} or its generalized versions \cite{Jame02, MullS05} have been intensively studied in the literature. Although the scalar-on-function linear regression is a powerful tool, its lack of flexibility when nonlinearities occur led the statistical community to develop a nonparametric approach as it was in the multivariate case. For an overview on the functional version of the Nadaraya-Watson kernel estimator see \cite{FerrV06}. Since the development of local linear estimation for multivariate data in the 1990's (see for instance \cite{Fan92, FanG92, Fan93, RuppW94, ChenFM97, HallM97} and the monograph \cite{FanG96}), it is well known that local linear regression outperforms the usual Nadaraya-Watson kernel estimator (i.e. the local constant regressor). Therefore, it became perhaps the most popular nonparametric regression technique. Surprisingly, in the framework of functional data, there are only two papers that focus on the estimation of the regression operator in the scalar-on-function local linear regression model. In \cite{BailG09}, a projection approach to the problem similar to the study presented here is proposed, but the asymptotics derived in that paper appears to suffer from a lack of rigorousness. Paper \cite{BerlEM11} is a pure theoretical work providing an alternative estimating procedure by regularizing a non-bounded linear operator. The latter method does not directly relate to the approach taken here, as it does not rely on the use of projections. All in all, the theory and practice of scalar-on-function local linear regression is severely underdeveloped, and thus the method is far from being as popular as it is in the multivariate case. 

This paper introduces the local linear regression as an indispensable tool in the setting of scalar-on-function nonparametric regression. It turns out that the \emph{functional local linear regression} (that is, local linear regression when the regressor is a random function) is not only a smart method of estimating the regression operator. As an exciting by-product we obtain an easy and fast method for estimating the functional derivative $m'_x$ of the regression operator $m$ at any function $x$. The functional derivative is a linear functional that represents a local linear approximation to the regression operator $m$ around $x$ (for a precise statement see (\prettyref{hypo:taylor}) below). In what follows, we use the Riesz representation theorem, and identify the functional derivative $m'_x$ with its unique representing function. What makes the estimation of functional derivatives of such great interest? A first motivation is given in the pioneering works \cite{HallMY09, MullY10} where estimating procedures are developed without considering the local linear regression setting. There, it was convincingly demonstrated that the concept of functional derivative greatly facilitates the interpretation of results. As a further step in the pursuit for understanding how one can use the functional derivative in a natural way, let us consider the functional Taylor expansion of the regression operator. For a small positive real $\eta$ and a direction $u$ (i.e. a function $u$ such that $\|u\|=1$), Taylor's expansion and the Riesz representation theorem allow us to write $m(x + \eta \, u) - m(x) = \eta \langle m_x', u \rangle + O(\eta^2)$. A first order approximation of the magnitude of the difference $m(x + \eta \, u) - m(x)$ is therefore the interval $[-\eta \|m'_x\|, \eta \|m'_x\| ]$ --- the smaller $\|m'_x\|$ is, the less sensitive to small perturbations in $x$ is $m$. In a sense, $\|m'_x\|$ can be seen as a measure of reliability for the prediction of $m$ at $x$. For another example where the functional derivative appears as a successful tool in interesting statistical problems, consider, for instance, the single-functional index model \cite{AmatAD06, AitFKV08, ChenHM11, JianW11}. That model takes the form $m(x) = \mu + g\left( \langle \beta, x   \rangle \right)$, where $\mu$ is an unknown scalar and the scalar response interacts with the functional covariate only through an unknown functional direction $\beta$ combined with an unknown real-valued link function $g$. Extending the \emph{average derivative estimation} method introduced in \cite{HardS89} to the functional setting, it is easy to show that $\E\left(m'_X \right)$ is proportional to the functional direction $\beta$. Thus, given a sample $(X_1, Y_1), \ldots,(X_n, Y_n)$, as soon as one is able to obtain estimates $\widehat{m_{X_1}'}, \ldots, \widehat{m_{X_n}'}$ of  the functional derivatives $m_{X_1}', \ldots, m_{X_n}'$, one can compute $\widehat{\E\left(m'_X \right)} \coloneqq n^{-1}\sum_i \widehat{m_{X_i}'}$. The quantity $\widehat{\E\left(m'_X \right)}  / \| \widehat{\E\left(m'_X \right)} \|$ is a reasonable estimator of the functional index $\beta$. 

All the examples above emphasize the major role that the functional derivative of the regression operator plays in important aspects of statistics: interpretation, reliability and methodology. This is why we propose to revisit the functional local linear regression by focusing not only on the  regression operator, but also on its functional derivative. In this work, a projection approach to functional local linear estimation is adopted. Theoretical properties of the estimator of the regression operator $m$ are stated. Simultaneously, a local linear estimator of the functional derivative $m'_x$ at $x$ is proposed. For both estimators of the regression operator and its functional derivative, original technical tools are developed in order to study their theoretical behavior. If the implementation of the estimator of the regression operator is straightforward, selection of the smoothing parameter for estimating its functional derivative poses a major challenge. This is why an ad hoc bootstrap procedure is introduced to pilot the bandwidth choice. Ease of implementation as well as nice finite sample properties of our local linear estimators are highlighted in a simulation study. The paper is concluded with a benchmark real dataset that is used to illustrate the important role of the functional derivatives in functional data analysis. Most of the proofs given in the paper are deferred to Appendices~\ref{app:A} and~\ref{app:B}. The paper is complemented by extensive supplementary material that includes an efficient \texttt{R} \cite{R18} implementation of the newly proposed methods\footnote{Available at \url{https://bitbucket.org/StanislavNagy/fllr}} and some additional examples in Appendix~\ref{app:C}, and theoretical results in Appendix~\ref{app:D}.
\section{Functional local linear estimation}
Let $X$ be an $H$-valued random function where $H$ is the separable Hilbert space of square integrable functions defined on $[0, 1]$ equipped with the inner product $\langle \cdot , \cdot \rangle$, and let $\| \cdot \|$ be the associated norm. This work focuses on the relationship between $X$ and a scalar response $Y$ by considering the nonparametric regression model $Y \, = \, m(X) \, + \, \varepsilon$ where $\E(\varepsilon|X)=0$. The regression operator $m$ mapping $H$ into $\R$ is unknown, and is assumed to be smooth enough in a neighborhood $\mathcal{N}_x$ of a given $x \in H$. 
\begin{itemize}
\item[\refstepcounter{hypothesis} \label{hypo:taylor} (H\thehypothesis)] For any $u \in \mathcal{N}_x$, there exists $\zeta=x + t\,u$ with $t\in (0, 1)$ such that
$$
m(x+u) \, = \, m(x) + \langle m_x', u \rangle + \frac{1}{2} \langle m_\zeta'' u, u \rangle,
$$  
where  $m'_x \in H$, $m''_\zeta$ is a Hilbert-Schmidt linear operator mapping $H$ into $H$, and $v \mapsto m_{v}''$ is Lipschitz in $v \in \mathcal{N}_x$.
\end{itemize}
In other words, one focuses on those regression operators $m$ for which the second order Taylor expansion is valid. Note that condition (\prettyref{hypo:taylor}) is the functional counterpart of what is standardly required in the finite-dimensional local linear regression setting.

Based on an $n$-sample $(X_i, \, Y_i)_{i=1,\ldots,n}$ of independent identically distributed (iid) copies of $(X,\,Y)$, our main task is to estimate the regression operator $m$, as well as the functional derivative $m'_x$. To this end, one extends the sum of weighted squared errors (SWSE) to the functional setting
	\[	SWSE(a; \, \beta)\, \coloneqq \, \sum_{i=1}^n \left( Y_i - a - \langle \beta, X_i-x \rangle \right)^2 K\left( h^{-1} \| X_i - x\| \right),	\]
where $K(\cdot)$ is a kernel function defined on $[0,1]$ and $h$ is a positive smoothing parameter (a bandwidth). The principle of the local linear criterion is to linearize the regression operator in a neighborhood of $x$. In other words, for any $X_i$ close to $x$, one considers that $\E(Y_i|X_i)=a + \langle \beta, X_i-x \rangle$. Then, the real $a$ (resp. the square integrable function $\beta$) can be interpreted as the regression operator (resp. the functional derivative) at $x$. Consequently, the estimation of the functional derivative $m'_x$ will be based on the estimation of the function $\beta$. Instead of focusing on $\beta$ itself, it is computationally advantageous to consider its projection $\sum_{j\leq J} \langle \phi_j, \beta \rangle \phi_j$ onto the $J$-dimensional subspace $\cS_J$ of $H$ spanned by the orthonormal sequence $\phi_1,\ldots, \phi_J$ that is completed by $ \phi_{J+1}, \phi_{J+2},\ldots$ in order to get an orthonormal basis of $H$. In this notation the SWSE criterion is approximated by 
$$
SWSE_J(a; \, b_1, \ldots, b_J)\, \coloneqq \, \sum_{i=1}^n \left( Y_i - a - \sum_{j=1}^J b_j \langle \phi_j,\, X_i -x \rangle \right)^2 K\left( h^{-1} \| X_i - x\| \right), 
$$
where for any $j$, $b_j \coloneqq \langle \phi_j, \beta \rangle$. 
Then, $\widehat{m}(x) \coloneqq \widehat{a}$ and $\widehat{m'_x} \coloneqq \sum_{j=1}^J\widehat{b}_j \phi_j$ where $ (\widehat{a}; \, \widehat{b}_1, \ldots, \widehat{b}_J)\, \coloneqq \, \arg \inf_{(a; \, b_1, \ldots, b_J)} SWSE_J(a; \, b_1, \ldots, b_J).$
By using vector and matrix notations, one is able to express the local linear estimators. Let $\bY\coloneqq[Y_1, \dots, Y_n]\tr$, $\bK\coloneqq\mbox{diag}\left\{  K\left( h^{-1} \| X_1 - x\| \right), \dots, K\left( h^{-1} \| X_n - x\| \right)  \right\}$ the diagonal $n\times n$ matrix and $\bPhi$ the following $n\times(J+1)$ matrix
$$
\bPhi \coloneqq \left[  
\begin{array}{cccc}
1 & \langle \phi_1, \, X_1-x \rangle & \cdots & \langle \phi_J, \, X_1-x \rangle \\ 
\vdots & \vdots & \ddots & \vdots \\
1 & \langle \phi_1, \, X_n-x \rangle & \cdots & \langle \phi_J, \, X_n-x \rangle 
\end{array}
\right].
$$
Define $\widehat{\bb} \coloneqq \left[\widehat{b}_1, \dots,  \widehat{b}_J\right]\tr$, $\bphi \coloneqq \left[\phi_1, \dots, \phi_J\right]\tr$, $\bzero$ the $J\times 1$ null vector, $\be$ the $(J+1)$-dimensional vector $\left[1, \bzero\tr\right]\tr$ and $\bI$ the $J\times J$ identity matrix. Then, it is easy to see that $\left[ \widehat{a} \, | \, \widehat{\bb}\tr \right]\tr \, = \, \left( \bPhi\tr \bK \bPhi \right)^{-1} \bPhi\tr \bK \bY$, $\widehat{m}(x) \, = \,  \be\tr \,\left( \bPhi\tr \bK \bPhi  \right)^{-1} \bPhi\tr \bK \bY$ and $\widehat{m'_x}\, = \,  \bphi\tr \, \left[ \bzero | \bI \right] \left( \bPhi\tr \bK \bPhi \right)^{-1} \bPhi\tr \bK \bY$. The local linear approach has the nice property that both $\widehat{m}(x)$ and $\widehat{m'_x}$ are based on the common terms $\left( \bPhi\tr \bK \bPhi \right)^{-1} \bPhi\tr \bK \bY$ and this is why the raw computational cost of $\widehat{m'_x}$ is not much higher than the one for $\widehat{m}(x)$. 
\section{Asymptotic study}
Let us first focus on the assumptions we need to derive the asymptotic behavior of $\widehat{m}(x)$ and $\widehat{m'_x}$.

\begin{itemize}
\item[\refstepcounter{hypothesis} \label{hypo:kernel} (H\thehypothesis)] The kernel function $K$ is continuously differentiable on its support $(0, 1)$ with $K'(s)\leq 0$ for all $s \in (0,1)$ and $K(1)>0$.
\item[\refstepcounter{hypothesis} \label{hypo:gammafunction} (H\thehypothesis)] For any integers $j_1,\ldots, j_M, p_1,\ldots, p_M \geq 0$ with $M\geq 1$, let us define  \linebreak  $\gamma_{j_1,\ldots,j_M}^{p_1,\ldots,p_M}(t) \coloneqq \E \left( \langle \phi_{j_1}, X_1-x \rangle^{p_1}  \cdots  \langle \phi_{j_M}, X_1-x \rangle^{p_M} | \| X_1 - x \|^{p_1 + \cdots + p_M} = t \right)$ \linebreak and let $ {\gamma_{j_1,\ldots,j_M}^{p_1,\ldots,p_M}}^{'}(t)$ be its derivative at $t$. The functions $\gamma_{j_1}^1$, $\gamma_{j_1, j_2}^{1,1}$, $\ldots$, $\gamma_{j_1,\ldots, j_4}^{1,\ldots,1}$, $\gamma_{j_1}^2$, $\gamma_{j_1,j_2,j_3}^{2,1,1}$, $\gamma_{j_1, \dots, j_5}^{2,1,1,1,1}$ and $\gamma_{j_1, j_2}^{2,2}$ are assumed to be continuously differentiable around zero and the smallest eigenvalue $\lambda_J$ of the $J\times J$ matrix $\boldsymbol{ \Gamma }$, whose $(j,k)$-th element is defined by $\left[\boldsymbol{ \Gamma }\right]_{jk}\coloneqq{\gamma_{j,k}^{1,1}}^{'}(0)$, is strictly positive.
\item[\refstepcounter{hypothesis} \label{hypo:asymptotics} (H\thehypothesis)] $h=h_n$ tends to 0 with $n$,  $J=J_n$ and $n \, \pi_x(h)$ grow to infinity with $n$ so that $h \, J^{1/2} = o(1)$,  $h^{-1} \lambda_J^{-1}  \{n \, \pi_x(h) \}^{-1/2} = o(1)$ and $h^{-1} (\lambda_J/J)^{-1/2} \{n \, \pi_x(h) \}^{-1/2} = o(1)$, where $ \pi_x(h)\coloneqq P\left( \| X_1 - x \| < h \right)$.
\item[\refstepcounter{hypothesis} \label{hypo:sbp1} (H\thehypothesis)] For all $s$ in $[0,1]$, the ratio $\ds \tau_{x,h}(s)\coloneqq\frac{\pi_x(hs)}{\pi_x(h)}$ tends to $\tau_{x}(s)$ as $h$ goes to 0.
\item[\refstepcounter{hypothesis} \label{hypo:condvar} (H\thehypothesis)] The conditional variance of the error $\sigma^2(x) = \var(Y | X = x)$ mapping $H$ into $\R$ is a uniformly continuous operator.
\end{itemize}

The original hypothesis (\prettyref{hypo:gammafunction}) introduces a particular family of functions. It is shown in {\sc \prettyref{lem:gammafunction} } (postponed to Appendix~\ref{app:B}) that the functions $\gamma_{j_1,\ldots,j_M}^{p_1,\ldots,p_M}$ have interesting properties, in particular the matrix $\boldsymbol{ \Gamma }$ is positive semi-definite. Of course, since the size $J$ of the square matrix $\boldsymbol{ \Gamma }$ tends to infinity with $n$, it is clear that its smallest eigenvalue $\lambda_J$ tends to zero with $n$.  Nevertheless, it is not too much restrictive to require $\lambda_J$ strictly positive for any fixed $J$. It is worth noting that the differentiability assumptions imposed on $\gamma_{j_1,\ldots,j_M}^{p_1,\ldots,p_M}$ make this function very useful to approximate $\E\left\{\langle \phi_{j_1}, X_1-x \rangle^{p_1}  \cdots  \langle \phi_{j_M}, X_1-x \rangle^{p_M}\,  K^q \left( h^{-1} \| X_1 - x \|  \right)\right\}$ that plays a major role in the asymptotic behavior of the estimators $\widehat{m}_x$ and $\widehat{m}'_x$. The reader will find in Appendix~\ref{app:D} (see {\sc Lemma} \ref{lem:aboutH3}) an interesting result providing a general situation where (\prettyref{hypo:gammafunction}) is fulfilled. Condition (\prettyref{hypo:sbp1}) is a more classical assumption. For standard families of processes an explicit form of the function $\tau_x(s)$ is available (for more details, see \cite{FerrMV07} and references therein). 

Let $\bX$ stand for the sample $X_1, \ldots, X_n$ and let $\EX$ (resp. $\varX$) be the conditional expectation (resp. variance) with respect to $\bX$. The asymptotic conditional bias and variance of $ \widehat{m}(x)$ are provided in the following theorem.
\begin{thm} \label{thm:biasvar}  Under conditions  (\prettyref{hypo:taylor})--(\prettyref{hypo:condvar}), 
\begin{enumerate}[label=(\roman*)]
\item $\EX \left\{ \widehat{m}(x) \right\} \, = \, m(x) \, + \, O_P\left( \| \cP_{\cS_J^\perp} m'_x \| \, h \right)  \, + \, O_P\left( h^2 \right),$ 
\item $\varX \left\{ \widehat{m}(x) \right\} \, = \, O_P\left( \{n \, \pi_x(h)  \}^{-1} \right)$,
\end{enumerate}
where $\cS_J^\perp$ is the orthogonal complement to the space $\cS_J$ in $H$, and $\cP_{\cS_J^\perp} \colon H \to \cS_J^{\perp}$ is the orthogonal projection onto $\cS_J^\perp$.
\end{thm}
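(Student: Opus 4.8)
The plan is to work directly from the closed form $\widehat m(x) = \be\tr(\bPhi\tr\bK\bPhi)^{-1}\bPhi\tr\bK\bY$ and to isolate the part of the signal that the local linear fit reproduces exactly. Substituting $Y_i = m(X_i)+\varepsilon_i$ and expanding $m(X_i)$ by (\prettyref{hypo:taylor}) with $\zeta_i = x + t_i(X_i-x)$, I would write
$$
m(X_i) = m(x) + \langle \cP_{\cS_J} m'_x,\, X_i - x\rangle + r_i, \qquad r_i \coloneqq \langle \cP_{\cS_J^\perp} m'_x,\, X_i - x\rangle + \tfrac12\langle m''_{\zeta_i}(X_i-x),\, X_i-x\rangle .
$$
The constant-plus-$\cS_J$-linear part is exactly $\bPhi\bbeta_0$ with $\bbeta_0\coloneqq[m(x),\langle\phi_1,m'_x\rangle,\dots,\langle\phi_J,m'_x\rangle]\tr$, so that $\bY = \bPhi\bbeta_0 + \bR + \boldsymbol\varepsilon$ with $\bR\coloneqq[r_1,\dots,r_n]\tr$. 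Since $(\bPhi\tr\bK\bPhi)^{-1}\bPhi\tr\bK\bPhi = \bI$ and $\be\tr\bbeta_0 = m(x)$, writing $\boldsymbol{w}\tr\coloneqq\be\tr(\bPhi\tr\bK\bPhi)^{-1}\bPhi\tr\bK$ for the vector of local linear weights yields the two exact identities
$$
\EX\{\widehat m(x)\} - m(x) = \boldsymbol{w}\tr\bR, \qquad \varX\{\widehat m(x)\} = \sum_{i=1}^n w_i^2\,\sigma^2(X_i),
$$
the second because the $\varepsilon_i$ are conditionally independent with conditional variances $\sigma^2(X_i)$. The reproducing identities $\sum_i w_i = 1$ and $\sum_i w_i\langle\phi_j,X_i-x\rangle = 0$ (the rows of $\boldsymbol{w}\tr\bPhi = \be\tr$) are what make $r_i$ the only surviving signal in the bias.

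Next I would set up the moment machinery governing $(\bPhi\tr\bK\bPhi)^{-1}$. Because $K$ is supported on $(0,1)$, every entry is a sum of terms $\langle\phi_{j},X_i-x\rangle\langle\phi_k,X_i-x\rangle K(h^{-1}\|X_i-x\|)$, whose expectation I would evaluate by conditioning on $\|X_1-x\|$, invoking the $\gamma$-functions of (\prettyref{hypo:gammafunction}), the vanishing at zero and differentiability from {\sc \prettyref{lem:gammafunction}}, and integrating by parts against $K$ using (\prettyref{hypo:kernel}) and the small-ball ratio $\tau_x$ of (\prettyref{hypo:sbp1}). Each inner product $\langle\phi_j,X_i-x\rangle$ contributes a factor $h$, so with the scaling matrix $\bD\coloneqq\mathrm{diag}(1,h\bI)$ one obtains $\bD^{-1}(\bPhi\tr\bK\bPhi)\bD^{-1} = n\,\pi_x(h)\,\bM + (\text{remainder})$, where $\bM$ has $O(1)$ corner and border entries and lower-right block proportional to $\bGamma$; the same computation with $K^2$ gives $\bD^{-1}(\bPhi\tr\bK^2\bPhi)\bD^{-1} = n\,\pi_x(h)\,\bM_2 + (\text{remainder})$. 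Using $\be\tr\bD^{-1} = \be\tr$, I would then rewrite $\boldsymbol{w}\tr\bR = \be\tr\{\bD^{-1}(\bPhi\tr\bK\bPhi)\bD^{-1}\}^{-1}\{\bD^{-1}\bPhi\tr\bK\bR\}$ and $\sum_i w_i^2 = \be\tr\{\cdots\}^{-1}\{\bD^{-1}(\bPhi\tr\bK^2\bPhi)\bD^{-1}\}\{\cdots\}^{-1}\be$, so both reduce to products of $O(1)$-scale matrices once the $n\,\pi_x(h)$ factors are extracted.

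For the bias I would bound $\bR$ uniformly on the support of the weights: on $\{K_i\ne0\}$ one has $\|X_i-x\|<h$, whence by Cauchy--Schwarz $|\langle\cP_{\cS_J^\perp}m'_x,X_i-x\rangle|\le\|\cP_{\cS_J^\perp}m'_x\|\,h$, while the Lipschitz and Hilbert--Schmidt control of $m''$ in (\prettyref{hypo:taylor}) gives $|\langle m''_{\zeta_i}(X_i-x),X_i-x\rangle|\le C h^2$; hence the entries of $\bD^{-1}\bPhi\tr\bK\bR$ are $O_P(n\,\pi_x(h)\,(\|\cP_{\cS_J^\perp}m'_x\|h + h^2))$. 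Combining with $\{\bD^{-1}(\bPhi\tr\bK\bPhi)\bD^{-1}\}^{-1}=O_P(\{n\,\pi_x(h)\}^{-1})$ and reading off the leading constant-block contribution produces the two stated bias terms $O_P(\|\cP_{\cS_J^\perp}m'_x\|h)$ and $O_P(h^2)$. For the variance, (\prettyref{hypo:condvar}) lets me replace $\sigma^2(X_i)$ by $\sigma^2(x)+o(1)$ uniformly on $\{K_i\ne0\}$ as $h\to0$, so that $\varX\{\widehat m(x)\} = (\sigma^2(x)+o_P(1))\sum_i w_i^2$, and the matrix identity above gives $\sum_i w_i^2 = \{n\,\pi_x(h)\}^{-1}\,\be\tr\bM^{-1}\bM_2\bM^{-1}\be\,(1+o_P(1)) = O_P(\{n\,\pi_x(h)\}^{-1})$.

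The main obstacle is the control of the inverse $\{\bD^{-1}(\bPhi\tr\bK\bPhi)\bD^{-1}\}^{-1}$ uniformly in the growing dimension $J$: the lower-right block is only of size $\lambda_J\to0$, the border blocks couple the well-conditioned constant coordinate to the $J$ ill-conditioned ones, and the random remainders must be shown negligible relative to the deterministic leading matrix in operator norm. This is precisely where the rate conditions of (\prettyref{hypo:asymptotics}) relating $h$, $J$, $\lambda_J$ and $n\,\pi_x(h)$ enter: they guarantee both that the remainders in the $\gamma$-expansions are of smaller order than $n\,\pi_x(h)\,\bM$ and that the border coupling and the $J$-dimensional cross terms in $\be\tr\{\cdots\}^{-1}\{\bD^{-1}\bPhi\tr\bK\bR\}$ contribute only lower-order corrections, leaving the constant-block terms to dictate the orders claimed. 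I expect the positivity of $\lambda_J$ from (\prettyref{hypo:gammafunction}) together with a concentration argument for $\bD^{-1}(\bPhi\tr\bK\bPhi)\bD^{-1}$ around its mean to be the technical heart of the argument.
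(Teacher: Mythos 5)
Your setup coincides with the paper's: the decomposition $m(X_i)=m(x)+\sum_{j\le J}\langle m_x',\phi_j\rangle\langle\phi_j,X_i-x\rangle+r_i$ with $r_i=\langle\cP_{\cS_J^\perp}m_x',X_i-x\rangle+\tfrac12\langle m_{\zeta_i}''(X_i-x),X_i-x\rangle$, the exact reproducing identity $\boldsymbol{w}\tr\bPhi=\be\tr$, and the resulting formulas for the conditional bias and variance are precisely \eqref{eq:expansion}, \eqref{eq:bias} and the variance display in the paper, and the pointwise bounds $|\langle\cP_{\cS_J^\perp}m_x',X_i-x\rangle|\le\|\cP_{\cS_J^\perp}m_x'\|\,h$ and $|\langle m_{\zeta_i}''(X_i-x),X_i-x\rangle|\le Ch^2$ on the support of $K$ are also how the paper controls the raw entries. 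Up to that point the plan is sound and is the same route.

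The gap is in the final step, where you propose to combine $\{\bD^{-1}(\bPhi\tr\bK\bPhi)\bD^{-1}\}^{-1}=O_P(\{n\,\pi_x(h)\}^{-1})$ with the entrywise size of $\bD^{-1}\bPhi\tr\bK\bR$ and ``read off the constant-block contribution.'' That inverse bound cannot hold in operator norm: the lower-right block of the normalized matrix is asymptotically proportional to $\bGamma$, whose smallest eigenvalue $\lambda_J$ tends to zero, so the inverse has operator norm of order $\lambda_J^{-1}\{n\,\pi_x(h)\}^{-1}$, and the $J$-dimensional part of $\bD^{-1}\bPhi\tr\bK\bR$ has Euclidean norm up to $\sqrt{J}$ times its entrywise size. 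A generic norm-times-norm estimate therefore yields $O_P(\lambda_J^{-1}\sqrt{J}\,\|\cP_{\cS_J^\perp}m_x'\|\,h)$ rather than the claimed $O_P(\|\cP_{\cS_J^\perp}m_x'\|\,h)$, and (\prettyref{hypo:asymptotics}) does not allow you to absorb the factor $\lambda_J^{-1}\sqrt{J}$. What rescues the rate in the paper is an exact cancellation that a concentration-plus-operator-norm argument does not see: by the block inversion \eqref{eq:inverse}, the weight on the constant coordinate is the Schur complement $\mu=(\delta_0-\bdelta\tr\bDelta^{-1}\bdelta)^{-1}$, and since $\bdelta\tr\bDelta^{-1}\bdelta$ is of the order of $\bgamma\tr\bGamma^{-1}\bgamma=O_P(\lambda_J^{-1})$ one gets $\mu=O_P(\lambda_J)$, which exactly offsets the $\lambda_J^{-1}$ produced by $\bGamma^{-1}$ in the cross terms (via {\sc Lemma~\ref{lem:innerproductupperbound}}). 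The second essential ingredient is that the vectors hit by $\bGamma^{-1}$ have $\ell_2$ norms bounded \emph{uniformly in $J$}, namely $\|\bgamma\|_2\le 1$ and $\|\boldsymbol{\alpha_{x,n}^{bias}}\|_2\le\|\cP_{\cS_J^\perp}m_x'\|$, which come from the summability statements of {\sc Lemma~\ref{lem:gammafunction}}-$(ii)$ and {\sc Lemma~\ref{lem:A0Aj}}, not from the pointwise bound on $r_i$. The same cancellation ($\mu^2$ against $\bdelta\tr\bDelta^{-1}\widetilde{\bDelta}\bDelta^{-1}\bdelta=O_P(\{\lambda_J\,n\,\pi_x(h)\}^{-1})$) is what keeps the variance at $O_P(\{n\,\pi_x(h)\}^{-1})$ without a $\lambda_J^{-1}$ loss. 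Until the plan replaces ``read off the constant block'' with this Schur-complement bookkeeping and the dimension-free $\ell_2$ bounds, it does not deliver the stated rates.
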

In the same situation (see for instance \cite{FerrV06}), the conditional bias of the estimated functional local constant regression is of order $h$ with the same conditional variance. However, the quantity $\| \cP_{\cS_J^\perp} m'_x \|$ tends to zero when $n$ grows to infinity since $m'_x$ is a square integrable function. Therefore $\widehat{m}(x)$ outperforms the asymptotic behavior of the kernel estimator of the functional local constant regression. Note that the conditional variance in {\sc \prettyref{thm:biasvar}} involves the small ball probability $\pi_x(h)$, which is standard in the functional nonparametric setting \cite{FerrV06}.

The functional setting involves implicitly the dimension $J$ of the approximating subspace  $\cS_J$ in the rate of convergence. Indeed, the asymptotic behavior of the conditional bias depends on the quantity $\| \cP_{\cS_J^\perp} m'_x \|$, which assesses the approximation error of the functional derivative $m'_x$ in $\cS_J$. From a theoretical point of view, $ \widehat{m}(x)$ involves $J\times J$ matrices and $J$-dimensional vectors with $J$ converging to infinity. This makes the asymptotic study much harder in comparison with the multivariate (finite-dimensional) setting. The issue of infinite dimension is overcome by deriving accurately, in an element-wise sense, the asymptotic behavior of the matrices and vectors involved in both local linear estimators.

\begin{proof}[Proof of the conditional bias of $\widehat{m}(x)$] Here only the main guidelines of the proof are given. Details and technical lemmas are postponed to the appendix. 

For any $X_i$, the Taylor expansion of $m$ at $x$ can be expressed as 
\begin{equation}\label{eq:taylor}
m(X_i) \, = \, m(x) + \langle m_x', X_i-x \rangle + \frac{1}{2}\langle m_\zeta''(X_i-x), X_i-x \rangle \quad \as,
\end{equation}
where $\zeta=x + t(X_i-x)$ with $t\in (0, 1)$. By using a basis expansion of $m_x'$ in terms of $\phi_1, \phi_2, \ldots$, we get $m(X_i) = m(x) +  \sum_{j\leq J} \langle m_x', \phi_j \rangle \langle \phi_j, X_i-x \rangle  + \langle\cP_{\cS_J^\perp} m_x', X_i-x \rangle + R_{\zeta, x, i}/2$ almost surely with $R_{\zeta, x, i} \coloneqq \langle m_\zeta''(X_i-x), X_i-x \rangle$. Let $\nabla m_x \coloneqq \left[\langle m_x', \phi_1 \rangle, \ldots, \langle m_x', \phi_J \rangle  \right]\tr$ be the first $J$ coordinates of the gradient of $m$ at $x$ and $\bR_{\zeta, x} \coloneqq [R_{\zeta, x, 1}, \ldots,R_{\zeta, x, n}]\tr$. Then we can write
\begin{equation}	\label{eq:expansion}
\left[ \begin{array}{c} m(X_1) \\ \vdots \\ m(X_n) \end{array} \right] \, = \, \bPhi \, \left[ \begin{array}{c} m(x) \\  \nabla m_x \end{array} \right] +  \left[ \begin{array}{c} \langle \cP_{\cS_J^\perp} m_x', X_1-x \rangle \\ \vdots \\ \langle \cP_{\cS_J^\perp} m_x', X_n-x \rangle  \end{array} \right]  +  \frac{1}{2} \,  \bR_{\zeta, x} \quad \as 
\end{equation}
According to the definition of $\widehat{m}(x)$, 
\begin{equation} \label{eq:bias}
\begin{aligned} 
\EX \left\{ \widehat{m}(x) \right\} & = \be\tr \,\left( \bPhi\tr \bK \bPhi  \right)^{-1} \bPhi\tr \bK \left[m(X_1), \dots, m(X_n)\right]\tr\\ 
 & = m(x) \, + \, T_1 \, + \, \frac{1}{2} \, T_2
\end{aligned}
\end{equation}
where $\ds T_1 \coloneqq  \be\tr \,\left( \bPhi\tr \bK \bPhi  \right)^{-1} \bPhi\tr \bK \left[\langle \cP_{\cS_J^\perp} m_x', X_1-x \rangle, \dots, \langle \cP_{\cS_J^\perp} m_x', X_n-x \rangle  \right]\tr$ and $\ds T_2 \coloneqq  \be\tr \,\left( \bPhi\tr \bK \bPhi  \right)^{-1} \bPhi\tr \bK \bR_{\zeta, x}\tr$. The next lemma focuses on the asymptotic behavior of $T_1$ and $T_2$. Its proof can be found in Appendix~\ref{app:A}.

\begin{lem}\label{lem:bias} As soon as conditions  (\prettyref{hypo:taylor})--(\prettyref{hypo:sbp1}) are fulfilled,
	\begin{enumerate}[label=(\roman*)]
	\item $T_1 \, = \, O_P\left( \|  \cP_{\cS_J^\perp} m_x'  \| \, h \right)$,
	\item $T_2 \, = \, O_P\left( h^2 \right)$.
	\end{enumerate}
\end{lem}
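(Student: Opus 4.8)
The plan is to treat $T_1$ and $T_2$ simultaneously through the row vector of equivalent-kernel weights $\boldsymbol{\ell}\tr \coloneqq \be\tr(\bPhi\tr\bK\bPhi)^{-1}\bPhi\tr\bK$, with entries $\ell_1,\dots,\ell_n$, so that $T_1 = \sum_{i=1}^n\ell_i\langle\cP_{\cS_J^\perp}m_x',X_i-x\rangle$ and $T_2 = \sum_{i=1}^n\ell_i R_{\zeta,x,i}$. Each $\ell_i$ carries the weight $w_i\coloneqq K(h^{-1}\|X_i-x\|)\ge0$ (positivity by (\prettyref{hypo:kernel})), so only indices with $\|X_i-x\|<h$ contribute to either sum. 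On that set, Cauchy--Schwarz in $H$ gives $|\langle\cP_{\cS_J^\perp}m_x',X_i-x\rangle|\le\|\cP_{\cS_J^\perp}m_x'\|\,\|X_i-x\|\le\|\cP_{\cS_J^\perp}m_x'\|\,h$, while (\prettyref{hypo:taylor}) --- Hilbert--Schmidt boundedness together with the Lipschitz continuity of $v\mapsto m_v''$ on $\cN_x$ --- bounds $|R_{\zeta,x,i}| = |\langle m_\zeta''(X_i-x),X_i-x\rangle|\le C\|X_i-x\|^2\le C h^2$ for a fixed constant $C$ and all $n$ large. Hence $|T_1|\le\|\cP_{\cS_J^\perp}m_x'\|\,h\sum_{i}|\ell_i|$ and $|T_2|\le C h^2\sum_{i}|\ell_i|$, and the whole lemma reduces to the single claim $\sum_{i=1}^n|\ell_i| = O_P(1)$.

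To establish this without any explicit dependence on the growing dimension $J$, I would use a self-normalizing Cauchy--Schwarz argument. Writing $\bz_i$ for the transpose of the $i$-th row of $\bPhi$, we have $\ell_i = w_i\,\be\tr(\bPhi\tr\bK\bPhi)^{-1}\bz_i$; applying Cauchy--Schwarz with the weights $w_i$ and then the identity $\sum_i w_i\bz_i\bz_i\tr = \bPhi\tr\bK\bPhi$ yields
\[
\sum_{i=1}^n|\ell_i| \;\le\; \Big(\sum_{i=1}^n w_i\Big)^{1/2}\Big(\be\tr(\bPhi\tr\bK\bPhi)^{-1}\be\Big)^{1/2}.
\]
Introducing the diagonal rescaling $\bDelta\coloneqq\mathrm{diag}(1,h^{-1},\dots,h^{-1})$ and the normalized matrix $\bM\coloneqq\{n\,\pi_x(h)\}^{-1}\bDelta\,\bPhi\tr\bK\bPhi\,\bDelta$, and using $\bDelta\be=\be$, the right-hand side equals $([\bM]_{11}/s)^{1/2}$, where $[\bM]_{11}$ is the intercept entry and $s$ is defined by $[\bM^{-1}]_{11} = s^{-1}$, i.e. the Schur complement of the trailing $J\times J$ block of $\bM$. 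The factor $[\bM]_{11} = \{n\,\pi_x(h)\}^{-1}\sum_i w_i$ tends in probability to a positive constant by the standard small-ball evaluation of $\E\,K(h^{-1}\|X-x\|)$ under (\prettyref{hypo:kernel}) and (\prettyref{hypo:sbp1}), so the entire matter rests on proving $s^{-1} = O_P(1)$.

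This Schur-complement bound is the crux of the lemma, and the point where the infinite-dimensional nature of the problem bites: because $J = J_n\to\infty$, element-wise convergence of $\bM$ does not by itself control $s$. The route I would take is to match each entry of $\bM$ to a deterministic limit through the moment approximations supplied by (\prettyref{hypo:gammafunction}) (and the properties of the $\gamma$-functions recorded in {\sc \prettyref{lem:gammafunction}}): the trailing $J\times J$ block is asymptotically proportional to $\boldsymbol{\Gamma}$, whose strictly positive smallest eigenvalue $\lambda_J$ keeps it invertible, while the vector coupling the intercept to the gradient directions is governed by the functions $\gamma_j^1$ and their derivatives at $0$. The delicate part is that these entrywise approximations each carry an $O(h)$ error, and when they are assembled into the $J$-dimensional quadratic form defining $s$ the errors threaten to accumulate; the condition $h\,J^{1/2} = o(1)$ in (\prettyref{hypo:asymptotics}), together with the remaining rate conditions there that calibrate the stochastic fluctuations of $\bM$ against the scale $\lambda_J$, is exactly what caps this accumulation and forces $s$ to inherit the strictly positive lower bound of its block positive-definite deterministic limit (equivalently, $\bPhi\tr\bK\bPhi$ is invertible with high probability). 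Once $s^{-1} = O_P(1)$ is secured, the two displays of the first paragraph yield (i) and (ii).
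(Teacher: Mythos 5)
Your reduction is attractive and, as far as it goes, correct: writing $T_1=\sum_i \ell_i\langle \cP_{\cS_J^\perp}m_x',X_i-x\rangle$ and $T_2=\sum_i\ell_i R_{\zeta,x,i}$ with the equivalent-kernel weights $\boldsymbol{\ell}\tr=\be\tr(\bPhi\tr\bK\bPhi)^{-1}\bPhi\tr\bK$, the compact support of $K$ plus Cauchy--Schwarz and the uniform bound on $m_\zeta''$ from (\prettyref{hypo:taylor}) do give $|T_1|\le \|\cP_{\cS_J^\perp}m_x'\|\,h\sum_i|\ell_i|$ and $|T_2|\le Ch^2\sum_i|\ell_i|$, and your self-normalized Cauchy--Schwarz step $\sum_i|\ell_i|\le(\sum_i w_i)^{1/2}\,(\be\tr(\bPhi\tr\bK\bPhi)^{-1}\be)^{1/2}$ is a clean identity. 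This is genuinely different from the paper's route, which never bounds $\sum_i|\ell_i|$ but instead expands $T_1$ and $T_2$ through the explicit block inverse \prettyref{eq:inverse} and tracks each piece ($\bdelta$, $\bDelta$, the vectors $[A_0,\dots,A_J]$, $[B_0,\dots,B_J]$, $[C_0,\dots,C_J]$) via the entrywise moment asymptotics of {\sc Lemmas}~\ref{lem:Delta}, \ref{lem:delta}, \ref{lem:A0Aj}, \ref{lem:B0Bj} and the quadratic-form bound of {\sc Lemma}~\ref{lem:innerproductupperbound}. Your version would, if completed, be more robust: the $\lambda_J^{-1}$ factors that the paper generates and must then cancel against its bound on $\mu$ never appear, and you do not need $h\sqrt{J}=o(1)$ to dispose of the $T_{22}$-type remainder.

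The gap is that everything has been compressed into the single claim $s^{-1}=O_P(1)$, and that claim is left as a sketch --- yet it is precisely the hard part. In the paper's notation $s$ is, up to the normalizing constant $b_{x,0,1}$, the Schur complement $\delta_0-\bdelta\tr\bDelta^{-1}\bdelta=\mu^{-1}$, so you are asserting $\mu=O_P(1)$; the paper devotes essentially the whole of Appendix~\ref{app:A} and the technical lemmas of Appendix~\ref{app:B} to controlling exactly this object (arriving at \prettyref{eq:mu}). Your proposed route --- entrywise limits of $\bM$ plus the rate conditions of (\prettyref{hypo:asymptotics}) forcing $s$ to ``inherit the strictly positive lower bound of its block positive-definite deterministic limit'' --- does not work as stated, for two reasons. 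First, positive definiteness of the trailing block's limit $\boldsymbol{\Gamma}$ (i.e.\ $\lambda_J>0$) says nothing about the \emph{bordered} Schur complement: the deterministic limit of $s$ is of the form $b_{x,0,1}-b_{x,1,1}^2b_{x,2,1}^{-1}\,\bgamma\tr\bGamma^{-1}\bgamma$, and no hypothesis in the paper directly bounds $\bgamma\tr\bGamma^{-1}\bgamma$ away from the critical value as $J\to\infty$ ({\sc Lemma}~\ref{lem:innerproductupperbound} only gives the useless upper bound $\lambda_J^{-1}$); one would need to exploit the interpretation of $\bgamma,\bGamma$ as conditional first and second moments to get $\bgamma\bgamma\tr\preceq\bGamma$, and then still argue strictness uniformly in $J$. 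Second, even granting a nondegenerate limit, passing from entrywise convergence of a $(J+1)\times(J+1)$ random matrix with $J\to\infty$ to convergence of the scalar $s$ is exactly the accumulation problem you acknowledge but do not resolve; invoking $hJ^{1/2}=o(1)$ by name is not a proof. As it stands the proposal reduces the lemma to an unproved statement that is equivalent in difficulty to the paper's central estimate, so it cannot be accepted as a proof.
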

Now, it is enough to plug in {\sc \prettyref{lem:bias}} with \prettyref{eq:bias} to get the claimed conditional bias.
\end{proof}

\begin{proof}[Proof of the conditional variance of $\widehat{m}(x)$] By the assumptions, the covariance matrix of $\bY$ given $X_1, \ldots,X_n$ is the diagonal matrix $\mbox{diag} \{ \sigma^2(X_1), \ldots, \sigma^2(X_n) \}$ and $\varX \{\widehat{m}(x) \}   =   \be\tr  \left( \bPhi\tr \bK \bPhi  \right)^{-1}  \bPhi\tr  \bD   \bPhi  \left( \bPhi\tr \bK \bPhi  \right)^{-1}  \be$. Here, $\bD$ is also a diagonal matrix such that $[\bD]_{ii} \, = \, \sigma^2(X_i) \, [\bK]_{ii}^2$. According to (\prettyref{hypo:kernel}), (\prettyref{hypo:asymptotics}) and (\prettyref{hypo:condvar}), $[\bD]_{ii} \, = \, \{ \sigma^2(x) \, + \, o(1) \} \, [\bK]_{ii}^2$ and $\varX \{\widehat{m}(x) \} \,  = \, \{ \sigma^2(x) \, + \, o(1) \} \,  \be\tr \, \left( \bPhi\tr \bK \bPhi  \right)^{-1} \, \bPhi\tr \, \bK^2 \,  \bPhi$ $\left( \bPhi\tr \bK \bPhi  \right)^{-1} \be.$ The remainder of the proof is based on the technical {\sc Lemma~\ref{lem:variance}}, which is postponed to Appendix~\ref{app:A}. 
\end{proof}
Once the theoretical properties of $\widehat{m}(x)$ have been given, a natural and interesting issue concerns the asymptotic behavior of the functional derivative $\widehat{m}_x'$ of $m$ at $x$. The next result details the conditional bias and variance of $\widehat{m}_x'$. 
\begin{thm} \label{thm:biasvarderiv}  As soon as  (\prettyref{hypo:taylor})--(\prettyref{hypo:condvar}) are fulfilled, conditionally to $X_1, \ldots, X_n$, 
\begin{eqnarray*}
\|  \widehat{m}_x'  - m_x' \| & = & O_P\left( \lambda_J^{-1} \| \cP_{\cS_J^\perp} m'_x \|  \right)  \, + \, O_P\left( \lambda_J^{-1} h \right)\\ & & + \, O_P\left( h^{-1} \{ \lambda_J^2 \, n \, \pi_x(h) \}^{-1/2} \right) \, + \, O_P\left(  h^{-1} \{\lambda_J \, n \, \pi_x(h) \}^{-1/2} \sqrt{J}\right).
\end{eqnarray*}
\end{thm}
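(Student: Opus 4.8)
The plan is to transfer everything onto the coefficient vector $\widehat{\bb}$ and then reuse the bias/variance machinery behind \prettyref{thm:biasvar}. Since $\phi_1,\ldots,\phi_J$ are orthonormal, $\|\bphi\tr\bc\|=\|\bc\|$ for any $\bc\in\R^J$, and because $\cP_{\cS_J}m_x'=\bphi\tr\nabla m_x$ while $m_x'=\cP_{\cS_J}m_x'+\cP_{\cS_J^\perp}m_x'$ with orthogonal summands, the triangle inequality gives
$$
\|\widehat{m}_x' - m_x'\| \,\le\, \|\widehat{\bb} - \nabla m_x\| \,+\, \|\cP_{\cS_J^\perp}m_x'\|.
$$
The last summand is dominated by the first term of the statement (as $\lambda_J\to0$, eventually $\lambda_J^{-1}\ge 1$), so the whole task reduces to bounding $\|\widehat{\bb}-\nabla m_x\|$. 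I would split this, exactly as in \prettyref{eq:bias}, into a conditional-bias part $\EX\widehat{\bb}-\nabla m_x$ and a stochastic part $\widehat{\bb}-\EX\widehat{\bb}$. Writing $\bPhi=[\b1\,|\,\bZ]$ with $[\bZ]_{ij}=\langle\phi_j,X_i-x\rangle$ and using the block-inverse (Schur-complement) formula, the lower $J$ rows of $(\bPhi\tr\bK\bPhi)^{-1}\bPhi\tr\bK$ equal $\bS^{-1}[\bZ\tr-\boldsymbol{s}_1 s_0^{-1}\b1\tr]\bK$, where $s_0=\b1\tr\bK\b1$, $\boldsymbol{s}_1=\bZ\tr\bK\b1$ and $\bS=\bZ\tr\bK\bZ-\boldsymbol{s}_1 s_0^{-1}\boldsymbol{s}_1\tr$ is the Schur complement. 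Since $(\bPhi\tr\bK\bPhi)^{-1}\bPhi\tr\bK\bPhi=\bI$, substituting \prettyref{eq:expansion} makes the leading $\bPhi[m(x);\nabla m_x]\tr$ block reproduce $\nabla m_x$ exactly, so only the projection residual and the quadratic remainder contribute to the bias.

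The engine of the proof --- and its main obstacle --- is the element-wise asymptotic analysis of this random $(J+1)\times(J+1)$ system as $J\to\infty$ and $\lambda_J\to0$. Using the differentiability of the $\gamma$-functions at $0$ from (\prettyref{hypo:gammafunction}) together with the small-ball behaviour in (\prettyref{hypo:sbp1})--(\prettyref{hypo:asymptotics}), I expect to show that, up to kernel constants, $s_0\asymp n\,\pi_x(h)$, $\boldsymbol{s}_1\approx c'\,n\,\pi_x(h)\,h\,\boldsymbol{\nabla\gamma}$ with $\boldsymbol{\nabla\gamma}=[{\gamma_1^1}'(0),\ldots,{\gamma_J^1}'(0)]\tr$, and $\bS\approx c\,n\,\pi_x(h)\,h^2\,\boldsymbol{\Gamma}$. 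Two facts must be extracted. First, $\|\boldsymbol{\nabla\gamma}\|=O(1)$ uniformly in $J$: here I would invoke {\sc \prettyref{lem:gammafunction}}, since $[\gamma_j^1(t)]_j$ collects the $\phi_j$-coordinates of a single element of $H$ (a conditional mean of $X-x$), hence is square-summable. Second, the smallest eigenvalue of $\bS$ is of exact order $n\,\pi_x(h)\,h^2\lambda_J$, so that $\|\bS^{-1}\|_{\mathrm{op}}\asymp\{n\,\pi_x(h)\,h^2\lambda_J\}^{-1}$; the delicate point is checking that the rank-one correction $\boldsymbol{s}_1 s_0^{-1}\boldsymbol{s}_1\tr$ does not spoil the eigenvalue lower bound inherited from $\boldsymbol{\Gamma}$, and that the fluctuations of $\bS$ around its conditional mean are negligible in operator norm. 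This is the infinite-dimensional difficulty flagged after \prettyref{thm:biasvar}, handled by the same kind of entry-wise technical lemmas as {\sc \prettyref{lem:bias}} and {\sc \prettyref{lem:variance}}.

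With these matrix estimates in hand, the bias follows from \prettyref{eq:expansion}. The projection-residual vector, whose $i$-th entry is $\langle\cP_{\cS_J^\perp}m_x',X_i-x\rangle$, contributes $\bS^{-1}$ times a $J$-vector of order $n\,\pi_x(h)\,h^2$ whose length is controlled by $\|\cP_{\cS_J^\perp}m_x'\|$ through the off-diagonal $\boldsymbol{\Gamma}$-type couplings between $\phi_j$, $j\le J$, and $\phi_l$, $l>J$; multiplying by $\|\bS^{-1}\|_{\mathrm{op}}\asymp\lambda_J^{-1}\{n\,\pi_x(h)\,h^2\}^{-1}$ yields the first term $O_P(\lambda_J^{-1}\|\cP_{\cS_J^\perp}m_x'\|)$. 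The quadratic remainder $\bR_{\zeta,x}$, whose entries are $O(\|X_i-x\|^2)=O(h^2)$ on the support of $\bK$, carries one extra power of $h$ relative to the previous vector and so produces the second term $O_P(\lambda_J^{-1}h)$; uniform control of $m''$ through its Lipschitz property in (\prettyref{hypo:taylor}) is what keeps this remainder well behaved.

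Finally, the stochastic part is $\bS^{-1}[\bZ\tr-\boldsymbol{s}_1 s_0^{-1}\b1\tr]\bK\boldsymbol{\varepsilon}$ with $\boldsymbol{\varepsilon}\coloneqq[\varepsilon_1,\ldots,\varepsilon_n]\tr$, and by (\prettyref{hypo:condvar}) its conditional covariance is $\{\sigma^2(x)+o(1)\}\,\bS^{-1}[\bZ\tr-\boldsymbol{s}_1 s_0^{-1}\b1\tr]\bK^2[\cdots]\tr\bS^{-1}$. I would bound $\|\widehat{\bb}-\EX\widehat{\bb}\|$ by $O_P$ of the square root of the trace of this matrix (conditional Markov inequality) and treat the two pieces separately. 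The leading piece $\bS^{-1}\bZ\tr\bK\boldsymbol{\varepsilon}$ has covariance $\asymp\{n\,\pi_x(h)\,h^2\}^{-1}\boldsymbol{\Gamma}^{-1}$ (the $K^2$-matrix being proportional to $\boldsymbol{\Gamma}$, the directional structure being unchanged), whose trace is $\le\{n\,\pi_x(h)\,h^2\}^{-1}J\lambda_J^{-1}$; its square root is the fourth term $O_P(h^{-1}\{\lambda_J\,n\,\pi_x(h)\}^{-1/2}\sqrt{J})$. The coupling piece $\bS^{-1}\boldsymbol{s}_1 s_0^{-1}(\b1\tr\bK\boldsymbol{\varepsilon})$ factorises as $\bS^{-1}\boldsymbol{s}_1$ times a scalar of size $O_P(\{n\,\pi_x(h)\}^{-1/2})$; since $\bS^{-1}\boldsymbol{s}_1\approx h^{-1}\boldsymbol{\Gamma}^{-1}\boldsymbol{\nabla\gamma}$ with $\|\boldsymbol{\Gamma}^{-1}\boldsymbol{\nabla\gamma}\|\le\lambda_J^{-1}\|\boldsymbol{\nabla\gamma}\|=O(\lambda_J^{-1})$ --- exactly where $\|\boldsymbol{\nabla\gamma}\|=O(1)$ is used --- it produces the third term $O_P(h^{-1}\{\lambda_J^2\,n\,\pi_x(h)\}^{-1/2})$. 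Collecting the four contributions by the triangle inequality gives the asserted rate, the randomness of $\bX$ entering only through the $O_P$ statements on the matrix entries established in the second step.
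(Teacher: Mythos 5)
Your proposal follows essentially the same route as the paper: the same split into a conditional-bias part (projection residual giving $O_P(\lambda_J^{-1}\|\cP_{\cS_J^\perp}m_x'\|)$ plus Taylor remainder giving $O_P(\lambda_J^{-1}h)$, with the $-\cP_{\cS_J^\perp}m_x'$ term absorbed since $\lambda_J^{-1}\to\infty$) and a conditional-variance part bounded via the trace of the covariance of the coefficient vector, whose two pieces yield exactly the third and fourth terms; your Schur complement $\bS$ is just the other block of the same inverse \prettyref{eq:inverse} the paper uses (its lower-right block $\bDelta^{-1}+\mu\,\bDelta^{-1}\bdelta\bdelta\tr\bDelta^{-1}$ equals $\bS^{-1}$), and the technical inputs you invoke ({\sc \prettyref{lem:gammafunction}} for $\|\boldsymbol{\nabla\gamma}\|=O(1)$, the element-wise $\gamma$-function asymptotics, the eigenvalue bound through $\lambda_J$) are the ones the paper's {\sc Lemmas}~\ref{lem:biasderiv} and \ref{lem:variance} supply. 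The proposal is correct in outline and matches the paper's argument.
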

The first two summands in the formula above correspond to the conditional bias of $\widehat{m}_x'$, while the remaining terms come from its conditional variance. As $n$ tends to infinity, $h$ tends to zero, the dimension $J$ goes to infinity, and the smallest eigenvalue $\lambda_J$ to zero. Therefore, the rate of convergence of $\widehat{m}_x' $ is slower than that of $\widehat{m}(x)$. Compared to {\sc \prettyref{thm:biasvar}}, one $h$ is removed in the conditional bias, and $h^{-1}$ is added to the terms that relate to conditional variance, which corresponds to the standard degradation of the convergence rate observed in the multivariate case. But the functional setting adds specific terms like $J$ and $\lambda_J$ which deteriorate the asymptotic behavior. Nevertheless, as pointed out below, the finite sample properties of $\widehat{m}_x' $ are surprisingly good. 

\begin{proof}[Proof of {\sc \prettyref{thm:biasvarderiv}}]  The asymptotic behavior of $\|  \widehat{m}_x'  - m_x' \|$ is a direct by-product of the next results which detail the conditional bias and variance of the functional derivative $\widehat{m}_x'$.
\begin{lem} \label{lem:biasvarderiv}  As soon as  (\prettyref{hypo:taylor})--(\prettyref{hypo:condvar}) are fulfilled, 
	\begin{enumerate}[label=(\roman*)]
	\item $\| \EX \left( \widehat{m}_x'  \right) - m_x' \| \, = \, O_P\left( \lambda_J^{-1} \| \cP_{\cS_J^\perp} m'_x \|  \right)  \, + \, O_P\left( \lambda_J^{-1} h \right),$ 
	\item and conditionally on $X_1, \ldots, X_n$,
$$\| \widehat{m}_x' - \EX \left( \widehat{m}_x' \right)  \|  \, = \, O_P\left( h^{-1} \{ \lambda_J^2 n \pi_x(h) \}^{-1/2} \right) \, + \, O_P\left(  h^{-1} \{\lambda_J n \pi_x(h) \}^{-1/2} \sqrt{J}\right).$$
	\end{enumerate}
\end{lem}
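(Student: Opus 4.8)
The plan is to mirror the treatment of $\widehat m(x)$ in \prettyref{thm:biasvar}, but to track every quantity element-wise because the objects now live in $\R^J$ with $J\to\infty$. The first simplification I would use is that, since $\phi_1,\dots,\phi_J$ are orthonormal, $\|\bphi\tr\bv\|=\|\bv\|_2$ for every $\bv\in\R^J$; thus each functional norm below reduces to a Euclidean norm of a coefficient vector. Writing $\bY=\left[m(X_1),\dots,m(X_n)\right]\tr+\boldsymbol{\varepsilon}$ with $\EX(\boldsymbol{\varepsilon})=\bzero$ and inserting \eqref{eq:expansion}, the leading term $[\bzero\,|\,\bI]\left(\bPhi\tr\bK\bPhi\right)^{-1}\bPhi\tr\bK\bPhi\left[m(x);\nabla m_x\right]\tr$ collapses to $\nabla m_x$, so $\bphi\tr$ applied to it is $\cP_{\cS_J}m_x'$. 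Hence for the bias I would write $\EX(\widehat m_x')-m_x'=-\cP_{\cS_J^\perp}m_x'+\widetilde T_1+\tfrac12\widetilde T_2$, where $\widetilde T_1,\widetilde T_2$ are the exact analogues of $T_1,T_2$ from \eqref{eq:bias} with $\be\tr$ replaced by $\bphi\tr[\bzero\,|\,\bI]$; for the variance the stochastic part is $\widehat m_x'-\EX(\widehat m_x')=\bphi\tr[\bzero\,|\,\bI]\left(\bPhi\tr\bK\bPhi\right)^{-1}\bPhi\tr\bK\boldsymbol{\varepsilon}$, whose conditional second moment equals $\mathrm{tr}\!\left\{[\bzero\,|\,\bI]\left(\bPhi\tr\bK\bPhi\right)^{-1}\bPhi\tr\bD\bPhi\left(\bPhi\tr\bK\bPhi\right)^{-1}[\bzero\,|\,\bI]\tr\right\}$, with $[\bD]_{ii}=\sigma^2(X_i)[\bK]_{ii}^2$.

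The common engine is a sharp, uniform-in-$J$ description of $\left(\bPhi\tr\bK\bPhi\right)^{-1}$ through its $(1+J)$ block partition into a scalar block $A$, off-diagonal blocks $\bB\tr$ (size $1\times J$) and $\bB$ (size $J\times 1$), and a $J\times J$ block. Using the moment approximations attached to the $\gamma$-functions of (\prettyref{hypo:gammafunction}) (and the properties collected in \prettyref{lem:gammafunction}), $A$ is of order $n\pi_x(h)$, $\bB$ of order $n h\pi_x(h)$, and the $J\times J$ block is $n h^2\pi_x(h)$ times a matrix asymptotically proportional to $\bGamma$, whose smallest eigenvalue is $\lambda_J$. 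Reading the bottom $J$ rows of the inverse off the Schur complement $\boldsymbol{S}$ of the scalar block $A$ gives $[\bzero\,|\,\bI]\left(\bPhi\tr\bK\bPhi\right)^{-1}=[\,-\boldsymbol{S}^{-1}\bB A^{-1}\mid\boldsymbol{S}^{-1}\,]$, where $\boldsymbol{S}\asymp n h^2\pi_x(h)\bGamma$, so $\|\boldsymbol{S}^{-1}\|_{\mathrm{op}}\asymp\{n h^2\pi_x(h)\lambda_J\}^{-1}$, and the corner $-\boldsymbol{S}^{-1}\bB A^{-1}$ has Euclidean norm $\asymp\{n h\pi_x(h)\}^{-1}\lambda_J^{-1}$. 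The point that keeps this last estimate free of a $\sqrt J$ factor is that $\big[{\gamma_j^1}'(0)\big]_{j\le J}$ is square-summable uniformly in $J$: by the conditional Cauchy--Schwarz inequality $\sum_{j\le J}\gamma_j^1(t)^2\le\E\!\left(\|X_1-x\|^2\mid\|X_1-x\|=t\right)=t^2$, and dividing by $t^2$ as $t\to0$ (recall $\gamma_j^1(0)=0$) yields $\sum_{j\le J}({\gamma_j^1}'(0))^2\le1$.

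For the bias I would bound the two remainders separately. In $\widetilde T_1$ the driving vector $\bPhi\tr\bK\big[\langle\cP_{\cS_J^\perp}m_x',X_i-x\rangle\big]_i$ has scalar entry of order $n h\pi_x(h)\|\cP_{\cS_J^\perp}m_x'\|$ and a $J$-block whose leading term is $n h^2\pi_x(h)$ times the image, under the cross block $\big[{\gamma_{j,l}^{1,1}}'(0)\big]_{j\le J<l}$, of the tail $\big[\langle m_x',\phi_l\rangle\big]_{l>J}$; boundedness of that cross operator (again from \prettyref{lem:gammafunction}) keeps its Euclidean norm at $O(n h^2\pi_x(h)\|\cP_{\cS_J^\perp}m_x'\|)$, with no $\sqrt J$. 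Multiplying by the two blocks of the inverse gives $\|\widetilde T_1\|=O_P(\lambda_J^{-1}\|\cP_{\cS_J^\perp}m_x'\|)$. The same computation applied to $\bPhi\tr\bK\bR_{\zeta,x}$, whose entries carry one extra power of $h$ because $R_{\zeta,x,i}=\langle m_\zeta''(X_i-x),X_i-x\rangle=O(\|X_i-x\|^2)$ and $v\mapsto m_v''$ is Hilbert--Schmidt and Lipschitz by (\prettyref{hypo:taylor}), gives $\|\widetilde T_2\|=O_P(\lambda_J^{-1}h)$. Since $\lambda_J\to0$, the projection error $\|\cP_{\cS_J^\perp}m_x'\|$ is absorbed into $O_P(\lambda_J^{-1}\|\cP_{\cS_J^\perp}m_x'\|)$, which is (i).

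For (ii) I would first replace $[\bD]_{ii}$ by $\{\sigma^2(x)+o(1)\}[\bK]_{ii}^2$, exactly as in the proof of the conditional variance of $\widehat m(x)$, reducing the trace to that of $[\bzero\,|\,\bI]\left(\bPhi\tr\bK\bPhi\right)^{-1}\bPhi\tr\bK^2\bPhi\left(\bPhi\tr\bK\bPhi\right)^{-1}[\bzero\,|\,\bI]\tr$. Since $\bPhi\tr\bK^2\bPhi$ shares the block orders above with its $J\times J$ block again proportional to $\bGamma$, the dominant piece $\boldsymbol{S}^{-1}(\bPhi\tr\bK^2\bPhi)_{22}\boldsymbol{S}^{-1}$ is of order $\{n h^2\pi_x(h)\}^{-1}\bGamma^{-1}$, whose trace is at most $\{n h^2\pi_x(h)\}^{-1}\,\mathrm{tr}(\bGamma^{-1})\le\{n h^2\pi_x(h)\}^{-1}J\lambda_J^{-1}$; taking square roots this is the term $h^{-1}\{\lambda_J n\pi_x(h)\}^{-1/2}\sqrt J$. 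The corner $-\boldsymbol{S}^{-1}\bB A^{-1}$, paired with the scalar entry of $\bPhi\tr\bK^2\bPhi$ of order $n\pi_x(h)$, contributes $n\pi_x(h)\,\|\boldsymbol{S}^{-1}\bB A^{-1}\|_2^2=O(h^{-2}\lambda_J^{-2}\{n\pi_x(h)\}^{-1})$, i.e. the term $h^{-1}\{\lambda_J^2 n\pi_x(h)\}^{-1/2}$, and the cross terms are dominated by these two. The hardest part throughout is the uniform-in-$J$ control of the growing matrices: one must establish the Schur-complement spectral estimate $\boldsymbol{S}\asymp n h^2\pi_x(h)\bGamma$ element-wise together with the square-summability and operator-boundedness of the $\gamma$-slope quantities, so that the bias terms escape the $\sqrt J$ inflation that legitimately survives in the variance, which is precisely what the auxiliary lemmas of the appendix must deliver.
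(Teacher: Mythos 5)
Your proposal is correct and follows essentially the same route as the paper: the same decomposition of the bias into $-\cP_{\cS_J^\perp}m_x'$ plus two remainder terms handled through the block inverse of $\bPhi\tr\bK\bPhi$, the same reliance on the square-summability of the $\gamma$-slopes (the paper's {\sc Lemma~\ref{lem:gammafunction}}) to avoid $\sqrt{J}$ inflation in the bias, and the same trace computation yielding the $\lambda_J^{-2}u_n$ and $J\lambda_J^{-1}u_n$ contributions to the variance. The only cosmetic difference is that you read the bottom $J$ rows of the inverse off the Schur complement of the scalar block, while the paper uses the equivalent formula \prettyref{eq:inverse} built on the Schur complement of the $J\times J$ block.
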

\begin{proof}[Proof of {\sc \prettyref{lem:biasvarderiv}}-$(i)$] Let us first focus on the conditional bias of $\widehat{m}_x'$. By the definition of $\widehat{m}_x'$, $\EX\left( \widehat{m}_x'  \right) \, = \,   \bphi\tr \, [\bzero | \bI]  \,\left( \bPhi\tr \bK \bPhi  \right)^{-1} \bPhi\tr \bK \left[m(X_1), \dots, m(X_n)  \right]\tr$ so that, by \eqref{eq:taylor} and \eqref{eq:expansion},
\begin{equation}
\label{eq:biasderiv}
\EX\left( \widehat{m}_x'  \right) - m_x'\, = \, - \cP_{\cS_J^\perp} m_x' \, + \, Q_1 \, + \, \frac{1}{2} \, Q_2,
\end{equation}
for $\ds Q_1 =  \bphi\tr \, [\bzero | \bI] \,\left( \bPhi\tr \bK \bPhi  \right)^{-1} \bPhi\tr \bK \left[\langle \cP_{\cS_J^\perp} m_x', X_1-x \rangle, \dots, \langle \cP_{\cS_J^\perp} m_x', X_n-x \rangle  \right]\tr$, and $\ds Q_2 =  \bphi\tr \, [\bzero | \bI] \,\left( \bPhi\tr \bK \bPhi  \right)^{-1} \bPhi\tr \bK \bR_{\zeta, x}\tr$. {\sc \prettyref{lem:biasderiv}} given in Appendix~\ref{app:A} together with \prettyref{eq:biasderiv} provide the claimed expression for the conditional bias.
\end{proof}

\begin{proof}[Proof of {\sc \prettyref{lem:biasvarderiv}}-$(ii)$] Following the proof of the conditional variance of $\widehat{m}(x)$
	\[
	\begin{aligned}
	\EX \left( \| \widehat{m}_x' - \EX\left( \widehat{m}_x' \right)  \|^2 \right) = \, \int_0^1 \varX \left\{ \widehat{m}_x'(t) \right\} \dd t = \, \sigma^2(x) \, Q \,  \{ 1 \, + \, o(1) \},
	\end{aligned}
	\]
where $\ds Q = \int_0^1 \,  \bphi(t)\tr \, [\bzero | \bI] \left( \bPhi\tr \bK \bPhi  \right)^{-1} \, \bPhi\tr \, \bK^2 \,  \bPhi \, \left( \bPhi\tr \bK \bPhi  \right)^{-1} [\bzero | \bI]\tr \bphi(t) \dd t$. The remainder of the proof consists in decomposing this conditional variance according to formula \prettyref{eq:inverse} given in Appendix~\ref{app:A} with by-products of {\sc \prettyref{lem:variance}}. All details are postponed to the appendix. 
\end{proof}

The statement of {\sc \prettyref{thm:biasvarderiv}} now follows immediately.
\end{proof}
\section{Examples of approximating bases}	\label{sec:interpolation}
The functional local linear estimator depends on the basis $\phi_1, \phi_2,\ldots$ In this section we specify the asymptotic behavior of our estimator when considering particular bases. We start with usual approximating function spaces (cases~1 and~2 below). A more challenging issue consists in replacing the deterministic basis $\phi_1, \phi_2,\ldots$ with a data driven one. This important question is investigated in case~3.

\bigskip

{\sc Case~1: Orthogonal B-spline basis}. B-spline basis is a well-known and useful tool for approximating smooth functions. Consider the set of functions that are polynomials of degree $q$ on each interval $[(t-1)/k,\, t/k]$ for $t=1,\ldots,k$ and are $(q-1)$ times continuously differentiable on $[0,\,1]$. This $(k+q)$-dimensional subspace of $H$ defines the well-known space of splines. One can derive an orthogonal basis of B-splines of that space $\{B_{k,1}, \ldots,  B_{k,k+q}  \}$ (see \cite{Boor78} for an overview on spline functions and \cite{Redd12} for the orthogonalization of B-spline basis functions). In this situation, one sets $J=k+q$ and for any $j\leq J$, $\phi_j = B_{k,j} / \| B_{k,j} \|$. In this setting $k=k_n$ is a sequence that grows to infinity with $n$. Now, let the functional $m_x'$ be smooth enough so that its $p$th derivative is a H\"older function:
\begin{itemize}
\item[\refstepcounter{hypothesis} \label{hypo:mod1} (H\thehypothesis)] $| m_x'^{(p)}(u) - m_x'^{(p)}(v) | \leq C | u - v |^\nu$ with $\nu \in [0,1]$.
\end{itemize}
From Theorem XII.1 in \cite{Boor78} and (\prettyref{hypo:mod1}),  $ \| \cP_{\cS_J^\perp} m_x'\| \leq C\, (J-q)^{-(p+\nu)}$. Because $\| P_{\cS_J} m_x' - m_x'\| = \| P_{\cS_J^\perp} m_x'\|$, the rate of convergence of the conditional bias becomes
\begin{equation}\label{eq:bias-spline}
\EX\left\{  \widehat{m}(x) \right\}  \, = \, m(x) \, + \,  O_P(h \, J^{-p-\nu}) \, + \,  O_P(h^2).
\end{equation}

\bigskip

{\sc Case~2: Fourier basis}. When one suspects some periodic features for the functional $m_x'$, it can be advantageous to expand the functions by means of the Fourier basis $\phi_1(t) = 1$, $\phi_{2j}(t) = \sqrt{2} \sin(2\pi j t)$, and $\phi_{2j+1}(t) = \sqrt{2} \cos(2\pi j t)$ for $j=1,2,\ldots$ If one assumes that 
\begin{itemize}
\item[\refstepcounter{hypothesis} \label{hypo:mod2} (H\thehypothesis)] $m_x'$ is a periodic function,
\end{itemize}
then, according to \cite{Zygm02} and assumptions (\prettyref{hypo:mod1}) and (\prettyref{hypo:mod2}), one has $ \| \cP_{\cS_J^\perp} m_x'\| \, = \, O\left(J^{-p-\nu}\right)$ which leads to the same rate of convergence (\ref{eq:bias-spline}).

\bigskip

{\sc Case~3: Data driven basis}. The functional principal components analysis (FPCA) allows to expand a random function $X$ into a basis $\phi_1, \phi_2,\ldots$ in an optimal way (see \cite{Karh46, Loev46, Rao58, DauxPR82} for precursor works and \cite{Bosq00, YaoMW05, HallH06, HallMW06} for more recent statistical developments). In this setting, functions $\phi_j$ are the eigenfunctions of the covariance operator of $X$ and the eigenanalysis of the empirical covariance operator provides a data driven basis $\widehat{\phi}_1, \widehat{\phi}_2, \ldots$ (by convention, we assume that $\langle \phi_j, \widehat{\phi}_j \rangle >0$). We propose to investigate the asymptotic properties of the functional local linear estimator when replacing $\phi_1, \ldots, \phi_J$ with the data driven basis $\widehat{\phi}_1, \ldots, \widehat{\phi}_J$. This results in the new estimator $\widehat{\widehat{m}}(x) \, = \, \be\tr \,\left( \widehat{\bPhi} \tr \bK \widehat{\bPhi}  \right)^{-1} \widehat{\bPhi} \tr \bK \bY$ where, for $i=1,\ldots,n$, $[\widehat{\bPhi}]_{i1} = 1$ and for $j=2,\ldots,J$, $[\widehat{\bPhi}]_{ij} = \langle \widehat{\phi}_j, \, X_i-x \rangle$. Thanks to \cite{Bosq00} and \cite{CardFS99}, as soon as $\E \|X\|^4 < \infty$, one has for any $j=1,2,\ldots$ that $\| \widehat{\phi}_j - \phi_j \| = O_P(a_J^{-1} n^{-1/2})$ with $a_J \coloneqq \min_{j\leq J} \left\{\rho_j-\rho_{j+1}, \rho_{j-1}-\rho_j \right\}$. Here, $\rho_j$ are the eigenvalues of the covariance operator of $X$ (placed in descending order). 
\begin{thm} \label{thm:ddb} If $\E \|X\|^4 < \infty$, $a_J^{-1} n^{-1/2} = o(1)$ and (\prettyref{hypo:taylor})--(\prettyref{hypo:condvar}) hold, 
	\begin{enumerate}[label=(\roman*)]
	\item $\EX \left\{ \widehat{\widehat{m}}(x) \right\} =  m(x) +  O_P\left( J^{1/2} \,  \| \cP_{\cS_J^\perp} m_x'\| \, h  \right)  +  O_P\left( J^{1/2} \, h^2 \right)$\\ \hspace*{2.7cm}$+ \, O_P\left(a_J^{-1} \, n^{-1/2} \, J^{1/2} \,  h \right)$,
	\item $\varX \left\{ \widehat{\widehat{m}}(x) \right\} \, = \, O_P\left( \{n \, \pi_x(h)  \}^{-1} \right)$.
	\end{enumerate}
\end{thm}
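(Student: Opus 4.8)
The plan is to treat $\widehat{\widehat m}(x)$ as a perturbation of the deterministic-basis estimator analysed in {\sc \prettyref{thm:biasvar}}, the perturbation being driven entirely by the eigenfunction estimation error $\|\widehat\phi_j-\phi_j\|=O_P(a_J^{-1}n^{-1/2})$ (this is where $\E\|X\|^4<\infty$ is used). Writing $\bDelta\coloneqq\widehat\bPhi-\bPhi$, the first column of $\bDelta$ vanishes and $[\bDelta]_{ij}=\langle\widehat\phi_j-\phi_j,X_i-x\rangle$ for $j\ge 2$. The pivotal estimate is that, because by (\prettyref{hypo:kernel}) the weights $\bK$ are supported on $\{\|X_i-x\|<h\}$, Cauchy--Schwarz gives the entrywise bound $|[\bDelta]_{ij}|\le\|\widehat\phi_j-\phi_j\|\,\|X_i-x\|=O_P(a_J^{-1}n^{-1/2}h)$ over every index $i$ that contributes. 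Everything else is built on this single bound.

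For part~$(i)$ I would replace $\bY$ by $[m(X_1),\dots,m(X_n)]\tr$ using $\E(\varepsilon|X)=0$ and run the Taylor expansion \prettyref{eq:expansion} in the \emph{random} basis, obtaining the exact analogue of \prettyref{eq:bias}, namely $\EX\{\widehat{\widehat m}(x)\}=m(x)+\widehat T_1+\tfrac12\widehat T_2$, where $\widehat T_1,\widehat T_2$ are the hatted versions of the quantities of {\sc \prettyref{lem:bias}} with $\widehat\bPhi$, $\widehat\bA\coloneqq\widehat\bPhi\tr\bK\widehat\bPhi$ and the residual $\widehat r\coloneqq[\langle\cP_{\widehat\cS_J^\perp}m_x',X_1-x\rangle,\dots,\langle\cP_{\widehat\cS_J^\perp}m_x',X_n-x\rangle]\tr$ replacing $\bPhi$, $\bA\coloneqq\bPhi\tr\bK\bPhi$ and the deterministic residual in \prettyref{eq:expansion}. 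I would then compare the hatted objects with their deterministic counterparts through $\widehat\bPhi=\bPhi+\bDelta$, $\widehat\bA=\bA+(\widehat\bA-\bA)$ with $\widehat\bA-\bA=\bDelta\tr\bK\bPhi+\bPhi\tr\bK\bDelta+\bDelta\tr\bK\bDelta$, and the projection discrepancy $\|\cP_{\widehat\cS_J^\perp}m_x'-\cP_{\cS_J^\perp}m_x'\|=O_P(J^{1/2}a_J^{-1}n^{-1/2})$, the latter obtained by summing the $J$ orthogonal contributions $\langle m_x',\widehat\phi_j-\phi_j\rangle$ in Euclidean norm. The decisive point is that the random basis destroys the element-wise cancellations that made {\sc \prettyref{thm:biasvar}} sharp, so one must bound the $J$-dimensional factors multiplying $\widehat\bA^{-1}$ by coarser Euclidean and operator norms; a $J$-vector with coordinates of size $\delta$ has norm $J^{1/2}\delta$, and this is the mechanism that inflates the two deterministic rates $O_P(\|\cP_{\cS_J^\perp}m_x'\|h)$ and $O_P(h^2)$ by the factor $J^{1/2}$. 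The extra additive summand $O_P(a_J^{-1}n^{-1/2}J^{1/2}h)$ is then produced by feeding $\widehat r=r+(\widehat r-r)$ through the bound of {\sc \prettyref{lem:bias}}$(i)$, the contribution of $\widehat r-r$ being governed by the projection discrepancy above.

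For part~$(ii)$ I would start from the hatted analogue of the representation used in the proof of the conditional variance of $\widehat m(x)$, namely $\varX\{\widehat{\widehat m}(x)\}=\{\sigma^2(x)+o(1)\}\,\be\tr\widehat\bA^{-1}\widehat\bPhi\tr\bK^2\widehat\bPhi\,\widehat\bA^{-1}\be$, substitute $\widehat\bPhi=\bPhi+\bDelta$ and $\widehat\bA=\bA+(\widehat\bA-\bA)$, and show that every term carrying a factor of $\bDelta$ is negligible relative to the deterministic main term. Under $a_J^{-1}n^{-1/2}=o(1)$ the entrywise bound forces all these corrections to lower order, so the variance coincides at first order with that of {\sc \prettyref{thm:biasvar}}$(ii)$, i.e.\ $O_P(\{n\,\pi_x(h)\}^{-1})$.

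The main obstacle is the bookkeeping of the diverging dimension $J$ together with the randomness of $\widehat\bA^{-1}$: one must verify that the $O_P(a_J^{-1}n^{-1/2}h)$ entrywise perturbations accumulate across the $J$ rows and columns of $\widehat\bA-\bA$ and across the $J$-vectors multiplying $\bA^{-1}$ at \emph{exactly} rate $J^{1/2}$ rather than a cruder $J$, and that both the quadratic term $\bDelta\tr\bK\bDelta$ and the second-order resolvent remainder in $\widehat\bA^{-1}=\bA^{-1}-\bA^{-1}(\widehat\bA-\bA)\bA^{-1}+\cdots$ are of strictly smaller order under $a_J^{-1}n^{-1/2}=o(1)$. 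Replacing the sharp element-wise asymptotics behind {\sc \prettyref{thm:biasvar}} with norm bounds that are tight enough to give precisely $J^{1/2}$ and no more — so that $\widehat\bA^{-1}$ may be substituted by $\bA^{-1}$ up to lower-order terms while still inheriting the scalings of {\sc \prettyref{thm:biasvar}} — is the delicate step that yields exactly the three summands in the stated bias.
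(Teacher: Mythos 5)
Your proposal is correct and follows essentially the same perturbation strategy as the paper: both treat $\widehat{\widehat{m}}(x)$ as a perturbation of the fixed-basis estimator of {\sc \prettyref{thm:biasvar}} driven by $\|\widehat{\phi}_j-\phi_j\|=O_P(a_J^{-1}n^{-1/2})$, both obtain the $J^{1/2}$ inflation of the two bias terms by abandoning the element-wise $\gamma$-function asymptotics (which are tied to the deterministic basis) in favour of Euclidean-norm bounds on $J$-vectors with entries of uniform size, and both kill the variance correction using $a_J^{-1}n^{-1/2}=o(1)$. The one place you genuinely diverge is the origin of the third bias term: you run the Taylor expansion \eqref{eq:expansion} in the random basis, so the exact part cancels and the basis-estimation error is booked into the projection discrepancy $\|\cP_{\widehat{\cS}_J^\perp}m_x'-\cP_{\cS_J^\perp}m_x'\|=O_P(J^{1/2}a_J^{-1}n^{-1/2})$, whereas the paper keeps the deterministic expansion, so the exact part does not cancel and leaves the explicit remainder $\widehat{T}_3=\be\tr\,(\widehat{\bPhi}\tr\bK\widehat{\bPhi})^{-1}\widehat{\bPhi}\tr\bK(\bPhi-\widehat{\bPhi})\left[m(x)\,|\,\nabla{m_x}\tr\right]\tr$, which it evaluates directly; both routes deliver the stated summand $O_P(a_J^{-1}n^{-1/2}J^{1/2}h)$. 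One small caution on your version: the $J$ contributions to the projection discrepancy are not mutually orthogonal as written, so to land on $J^{1/2}$ rather than $J$ you should split the difference as $\sum_{j\leq J}\langle m_x',\widehat{\phi}_j-\phi_j\rangle\widehat{\phi}_j+\sum_{j\leq J}\langle m_x',\phi_j\rangle(\widehat{\phi}_j-\phi_j)$ and use orthonormality of $\{\widehat{\phi}_j\}$ for the first sum and the Cauchy--Schwarz inequality for the second.
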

Consideration of the data driven basis degrades slightly the conditional bias by introducing in both original terms the quantity $J^{1/2}$ and by adding a third term $O_P\left(a_J^{-1} \, n^{-1/2} \, J^{1/2} \,  h \right)$. However, the conditional variance is not sensitive to the introduction of the data driven basis.

\section{Implementation}
In this section we discuss the practical aspects, and assess the finite sample performance of our local linear estimators for functional data. Beyond standard issues such as the choice of the tuning parameters in the estimation of the regression operator, a novel heuristic is developed for selection of the bandwidth for the functional derivative. In a comparative simulation study, finite sample performance of our estimator is compared with its competitors available in the literature. We conclude this section with a real data example and an application of the functional local linear estimator. On a benchmark growth dataset we demonstrate a strong link between the considered scalar-on-function local linear regression, and the important single-functional index model, widely considered in the literature.

\subsection{Selection of tuning parameters \label{subsec:ParameterSelection}}
{\sc Methodology}. According to the definition of $\widehat{m}(x)$ and $\widehat{m'_x}$, two parameters have to be selected: the dimension $J$ of the approximating subspace and the bandwidth $h$. Theoretical results of {\sc Theorems} \ref{thm:biasvar} and \ref{thm:biasvarderiv} emphasize different asymptotic behaviors for the estimators of regression and functional derivative. This is why the optimal parameters for estimating the regression operator do not match necessarily those designed for the functional derivative. 

Choosing optimal  parameters for the functional derivative is quite challenging because no quantity is directly available to compare with. The following ad hoc methodology is proposed. Firstly, optimal parameters $h_{reg}$ and $J_{reg}$ for the estimator of the regression operator $\widehat{m}$ are selected. The \texttt{R} implementation of the estimating procedure that can be found in the online supplementary material allows for the use of two standard criteria --- (leave-one-out) cross-validation \cite{Alle74, Ston74, Ston77} and an adaptation of the corrected Akaike information criterion \cite{Akai73, HurvST98} to functional data; a general overview of these criteria can be found in \cite{FrieHT01}. Secondly, the parameters $h_{reg}$ and $J_{reg}$ are used to build, for each random function $X_i$ in the sample, a pilot estimator $\widehat{m_{X_i}'}^{boot}$ of the functional derivative at $X_i$ by means of a wild bootstrap procedure \cite{Wu86, Mamm93, FerrKV10}. Then, the optimal parameters $h_{deriv}$ and $J_{deriv}$ are those minimizing the mean squared error 
	\begin{equation}	\label{bootstrap}
	n^{-1} \sum_{i=1}^n \left\Vert \widehat{m_{X_i}'}^{boot} - \widehat{m_{{X_i},-i}'} \right\Vert^2,
	\end{equation}
where $\widehat{m'_{{X_i},-i}}$ is the estimator of the functional derivative at $X_i$ from a dataset with the $i$th observation removed.
An important methodological point consists in translating the bandwidths into the number of nearest neighbors. Given a function $x$ in $H$ and an integer $k$, a $k$-nearest neighbors bandwidth at $x$ corresponds to the smallest bandwidth $h$ such that the number of functional regressors $X_i$ belonging to ball of radius $h$ centered at $x$ is equal to $k$. An advantage of such a local approach is that it provides more flexible estimators while reducing a continuous set of candidates to a discrete one. From now on, all bandwidths in our results are expressed in terms of nearest neighbors. 

\bigskip

{\sc Focus on bandwidth choice.} In order to assess the quality of the bandwidth choice for both estimators, a first model (M1) is simulated so that the regression operator as well as the functional derivatives can be expressed analytically. 

\medskip 

{\em Simulated model} (M1). Let $X_1, \dots, X_n, X_{n+1}, \dots, X_{n+500}$ be iid copies of a functional predictor $X$. The estimators are based on the training set $X_1,\ldots, X_n$; the 500 remaining functions $X_{n+1},\ldots,X_{n+500}$ are used to assess the quality of the estimators. The model takes the form $Y \coloneqq m(X) + \varepsilon$, where $\varepsilon$ is an independent, centered and normally distributed error term with variance $\sigma_{\varepsilon}^2$. Let $\phi_1, \ldots, \phi_4$ be the first four elements of the Fourier basis; the random function $X$ is equal to the linear combination $\sum_{j=1}^4 U_j \, \phi_j$ where $U_j$ are iid uniform random variables on $[-1,1]$. The regression operator is given by $m(X) \coloneqq \sum_{j=1}^4 \exp(-U_j^2)$. From the expression for $m$, one can derive its functional derivative at $x$ that takes the form $m_x'(t) = -2 \sum_{j=1}^4 U_j \, \exp(-U_j^2) \, \phi_j(t)$. Here, the size of the approximating subspace is $J=4$. We consider $n=100, 150, 200,\ldots,500$. At last, the noise-to-signal ratio ($nsr$) is also controlled by setting $\sigma_{\varepsilon}^2 \coloneqq nsr \times \mbox{Var}\{m(X)\}$, with $nsr=0.05,0.2,0.4$. Because we focus on the bandwidth selection, in the present model in the estimation procedure we deliberately use the true approximating subspace (i.e. $J=4$ and the first four Fourier basis elements $\phi_1,\ldots, \phi_4$). 

\medskip

{\em Regression operator estimation}. The first step is to assess the quality of the bandwidth selection for estimating the regression operator. To this end, the optimal bandwidth $h_{reg}$ is the one minimizing the CV (cross-validation) or AIC$_C$ (corrected Akaike information criterion). Because in (M1) the true regression operator $m$ is known, one can compute the oracle relative mean squared error of prediction
	\begin{equation}	\label{ORMSEP reg}
	ORMSEP_{reg} \coloneqq \frac{1/500 \sum_{i=n+1}^{n+500} \left\{m(X_i) - \widehat{m}(X_i) \right\}^2}{1/500 \sum_{i=n+1}^{n+500} \left\{ m(X_i) - \overline{m(X)} \right\}^2 }	
	\end{equation}
where $\overline{m(X)} \coloneqq 1/500 \sum_{i=n+1}^{n+500} m(X_i)$. Repeating the simulation scheme 100 times in various situations, Figure~\ref{fig:reg_rmsep} assesses the quality of the local linear estimator of the regression operator. The consistency of the estimator appears clearly even for a large noise-to-signal ratio. Concerning the bandwidth selection, both methods CV and AIC$_C$ provide similar results, although CV seems to outperform slightly AIC$_C$. 

\begin{figure}[htpb] 
 \centerline{\includegraphics[scale = 0.5]{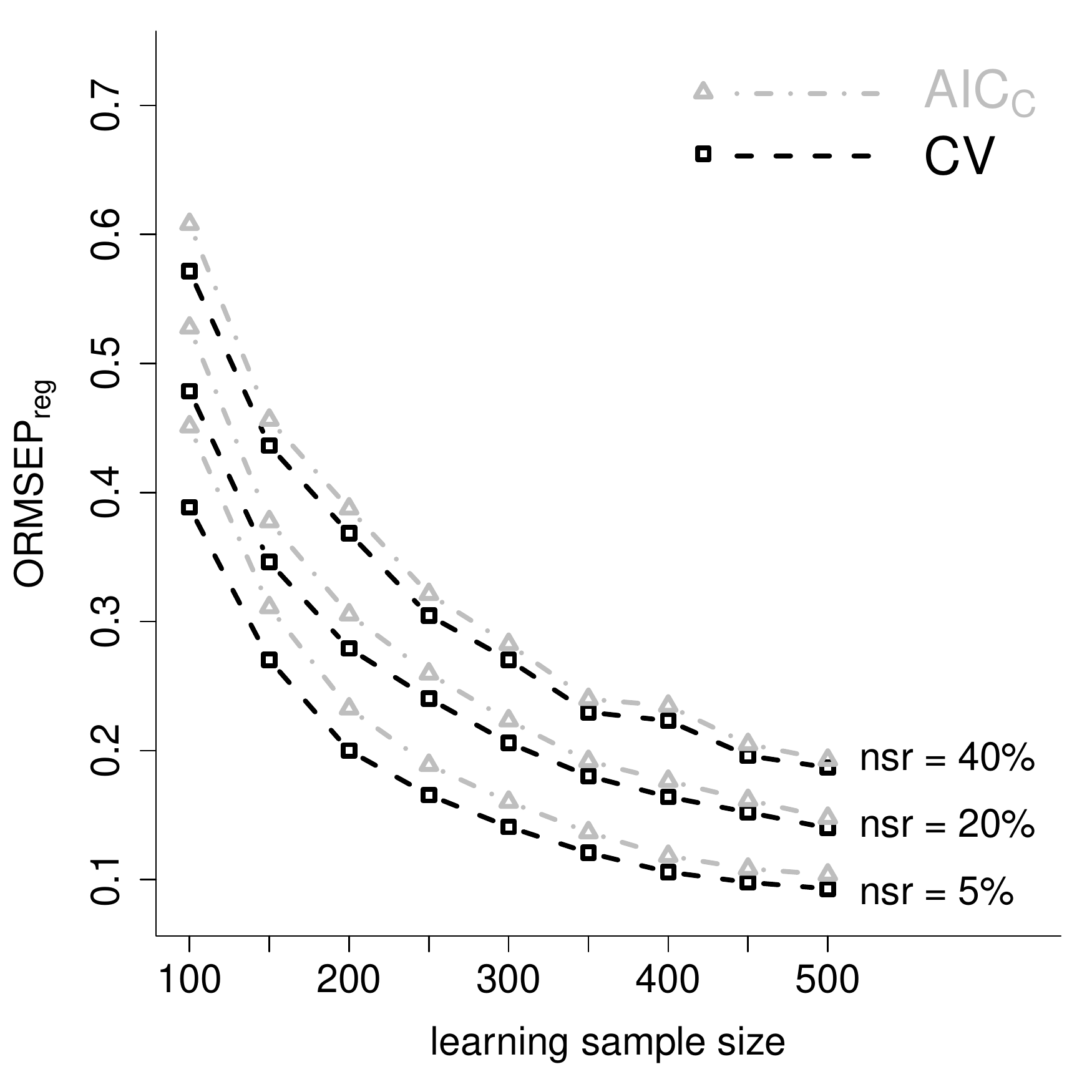}}
\caption{Mean (squares or triangles) of $ORMSEP_{reg}$ (each time over 100 runs) according to different noise-to-signal ratios ($nsr$), learning sample sizes and bandwidth selection methods (CV/AIC$_C$).} \label{fig:reg_rmsep}
\end{figure}
Figure~\ref{fig:reg_scatterplot} displays the true values $m(X_i)$ against their predictions $\widehat{m}(X_i)$ when cross-validation is used. Even in the worst case (small learning sample size and high noise-to-signal ratio), the local linear estimator provides reliable results.
\begin{figure}[htpb] 
\begin{tabular}{cc}
(a) worst case & (b) most favorable case\\
\includegraphics[width = 0.45\linewidth]{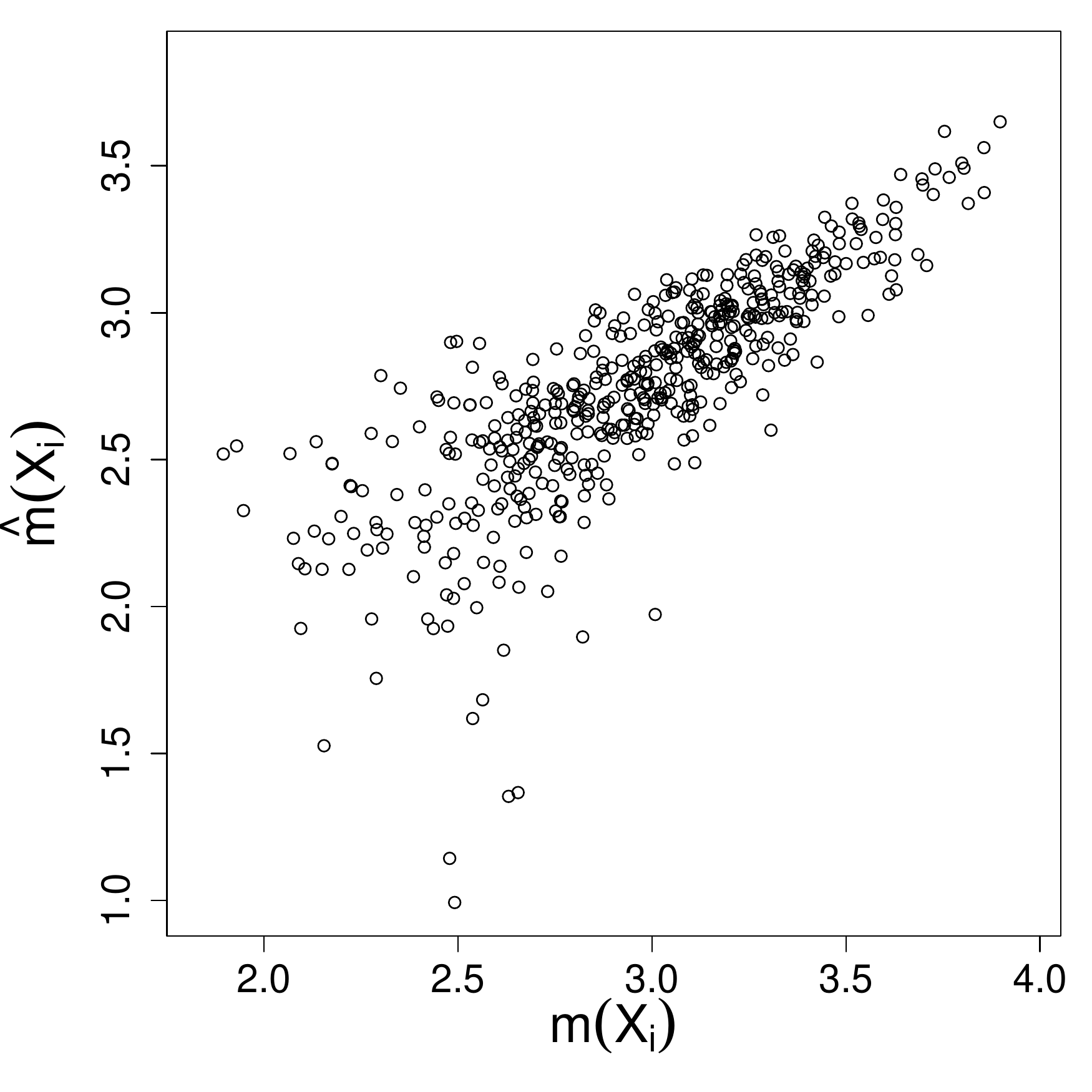}
& 
\includegraphics[width = 0.45\linewidth]{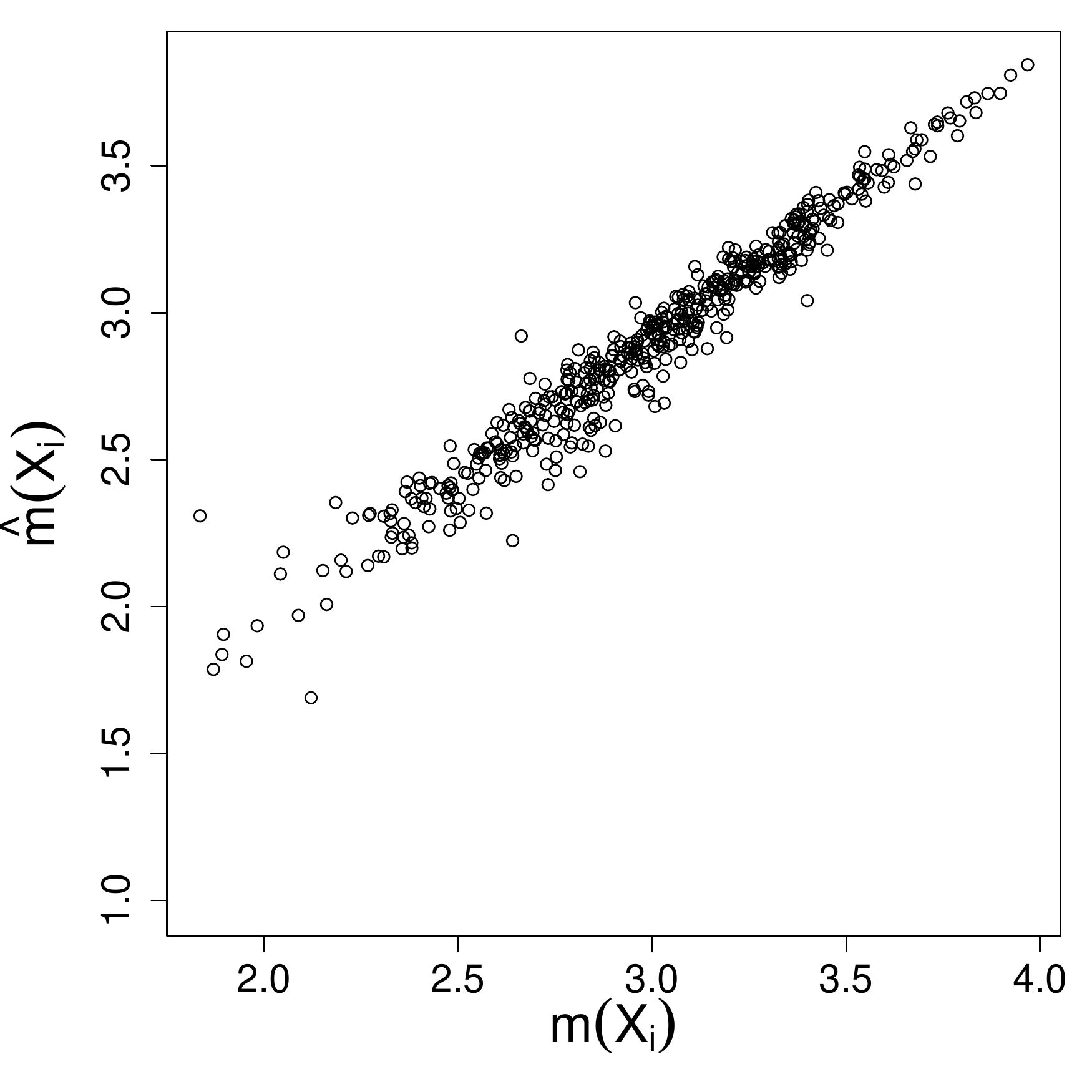}
\end{tabular}
\caption{(a) $n=100$ and $nsr=0.4$, (b) $n=500$ and $nsr=0.05$.}
\label{fig:reg_scatterplot}
\end{figure}

\medskip

{\em Functional derivative estimation}. The selection of the bandwidth for estimating the functional derivative is much more challenging because there is no standard criterion to minimize. An original bandwidth selection based on the wild bootstrap procedure is proposed. It aims to build a pilot estimator of the functional derivative:
	\begin{enumerate}[label=(\roman*), ref=(\roman*)]
	\item \label{step1} use $h_{reg}$ for estimating the model error $\widehat{\varepsilon}_i \coloneqq Y_i - \widehat{m}(X_i)$ for $i=1,\ldots,n$,
	\item given iid centered random variables $V_1,\ldots, V_n$ independent of $\widehat{\varepsilon}_i$ such that their first moments equal $1$, compute the bootstrapped errors $\varepsilon_i^{(b)} \coloneqq \widehat{\varepsilon}_i  \times V_i$,
	\item \label{step3} derive a bootstrapped sample $\mathcal{S}^{(b)} \coloneqq \big(X_i,\ Y_i^{(b)} \coloneqq  \widehat{m}(X_i) + \varepsilon_i^{(b)}\big)_{i = 1, \ldots, n}$ and compute the bootstrapped estimator $\widehat{m'_{X_i}}^{(b)}$ from $\mathcal{S}^{(b)}$.
	\end{enumerate}
Repeat steps \ref{step1}--\ref{step3} independently $B$ times and denote $\widehat{m'_{X_i}}^{boot} \coloneqq 1/B \sum_{b=1}^B \widehat{m'_{X_i}}^{(b)}$ what we name the pilot estimator of the functional derivative at $X_i$. The optimal bandwidth $h_{deriv}$ is defined as the one minimizing \eqref{bootstrap}. In order to assess the relevance of this bandwidth choice, a simulation study with model (M1) is conducted with $B=100$. Similarly as in the study of the estimator of the regression operator, Figure~\ref{fig:deriv_rmsep}(a) displays the means of the oracle relative mean squared error
	\begin{equation}	\label{ORMSEP}
	ORMSEP_{deriv} \coloneqq \frac{1/500 \sum_{i=n+1}^{n+500} \left\Vert m'_{X_i} -  \widehat{m'_{X_i}} \right\Vert^2}{1/500 \sum_{i=n+1}^{n+500} \left\Vert m'_{X_i} -  \overline{m'_{X}} \right\Vert^2}	
	\end{equation}
where $ \overline{m'_{X}} \coloneqq 1/500 \sum_{i=n+1}^{n+500} m'_{X_i}$. Firstly, selecting $h_{reg}$ with CV or AIC$_C$ has no significant impact on the prediction quality of the estimator of functional derivatives. Secondly, this plot demonstrates the consistency of the local linear estimator of the functional derivative. Its rate of convergence seems to be slightly slower than the one observed for the local linear estimator of the regression operator, as supported by the asymptotic results. Figure~\ref{fig:deriv_rmsep}(b) reproduces the same plot but $\widehat{m_{X_i}'}$ is built with $h_{reg}$ (the bandwidth used for estimating the regression operator with cross-validation) instead of $h_{deriv}$ (the specific bandwidth computed for estimating the functional derivative). In this situation, it is the AIC$_C$ method used for selecting $h_{reg}$ which provides better predictions. Nevertheless, consistency is far less obvious than in part (a). Figures~\ref{fig:deriv_rmsep}(a) and~\ref{fig:deriv_rmsep}(b) illustrate the importance of selecting a specific bandwidth for the estimation of functional derivatives.
\begin{figure}[htpb] 
\begin{tabular}{cc}
(a) bandwidth $h_{deriv}$ & (b) bandwidth $h_{reg}$ \\ 
\includegraphics[scale = 0.4]{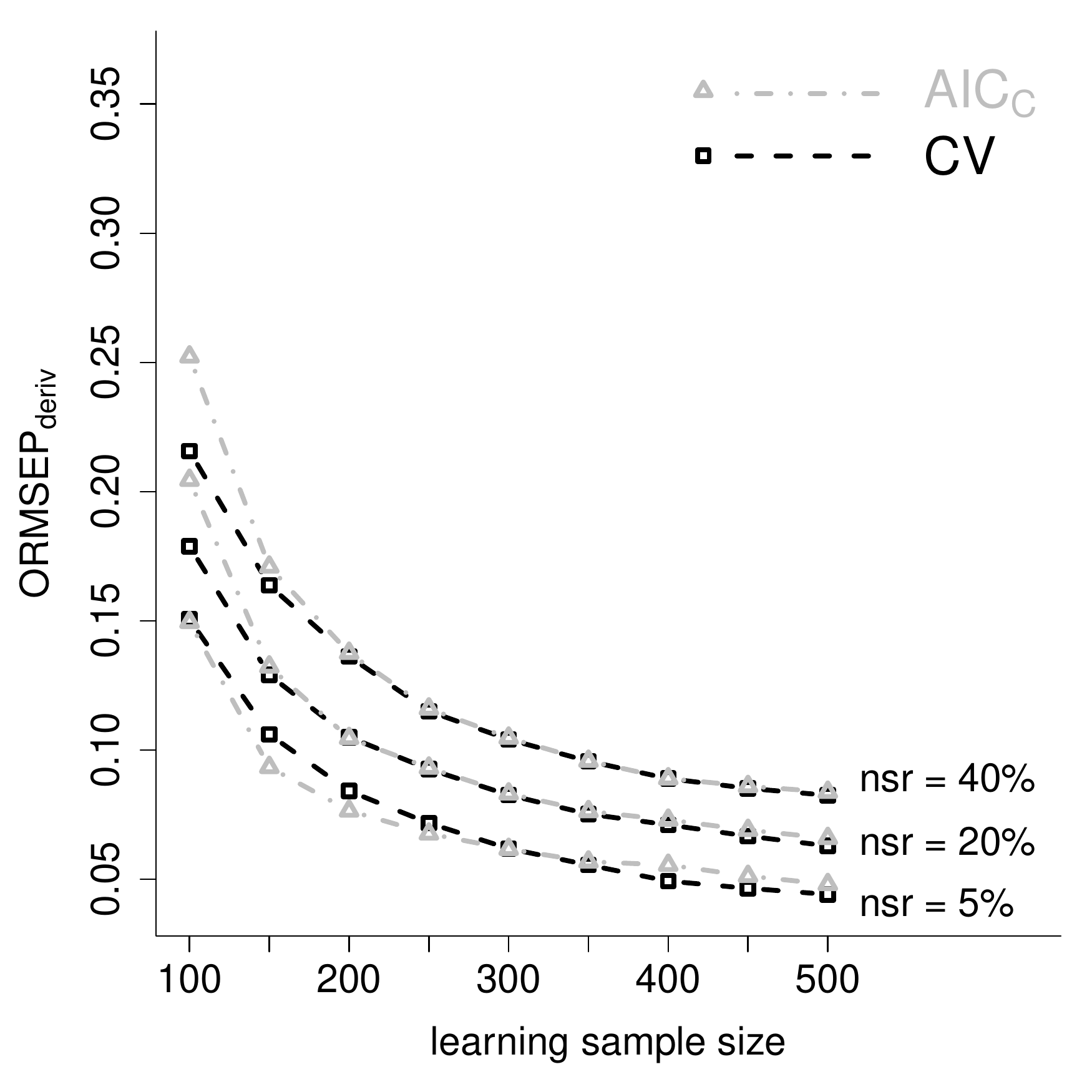}
& 
\includegraphics[scale = 0.4]{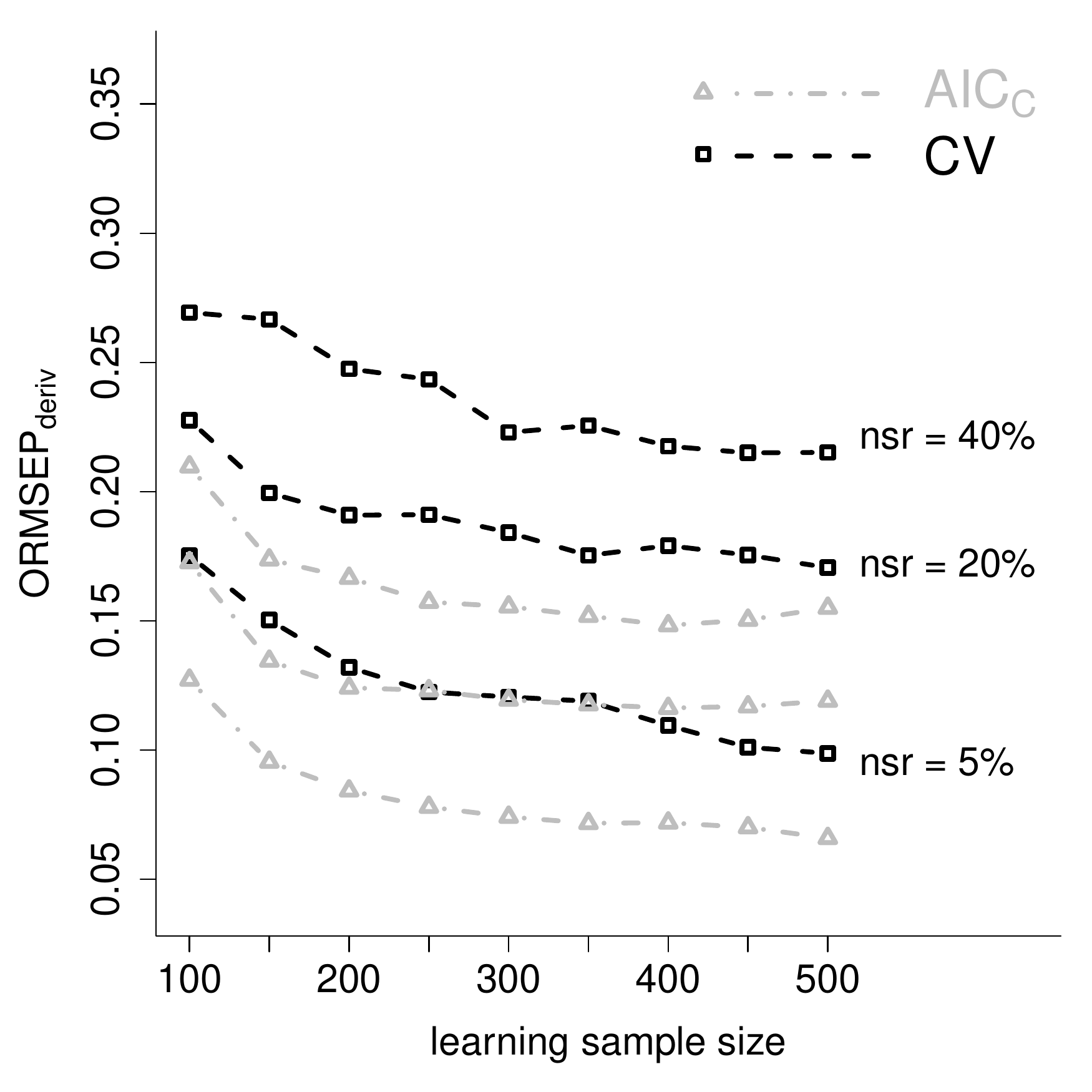}
\end{tabular}
\caption{Mean (squares or triangles) of $ORMSEP_{deriv}$ (each time over 100 runs) according to different noise-to-signal ratios ($nsr$), learning sample sizes and bandwidth selection methods (CV or AIC$_C$) used for computing $\widehat{m}$; (a) $\widehat{m_{X_i}'}$ is built with $h_{deriv}$, (b) $\widehat{m_{X_i}'}$ is built with $h_{reg}$. Smaller values of $nsr$ correspond to lower $ORMSEP_{deriv}$.}
\label{fig:deriv_rmsep}
\end{figure}
Figure~\ref{fig:deriv_comparison} compares a sample of true functional derivatives $m_{X_1}', \ldots, m_{X_n}'$ with corresponding predictions $\widehat{m_{X_1}'}, \ldots, \widehat{m_{X_n}'}$ (a) in the worst case ($n=100$ and $nsr = 0.4$) and (b) in the most favorable situation ($n=500$ and $nsr = 0.05$). Even in the worst situation, the predictions remain adequate. 
\begin{figure}[htpb] 
\begin{tabular}{cc}
(a) worst case & (b) most favorable case \\
\includegraphics[scale = 0.4]{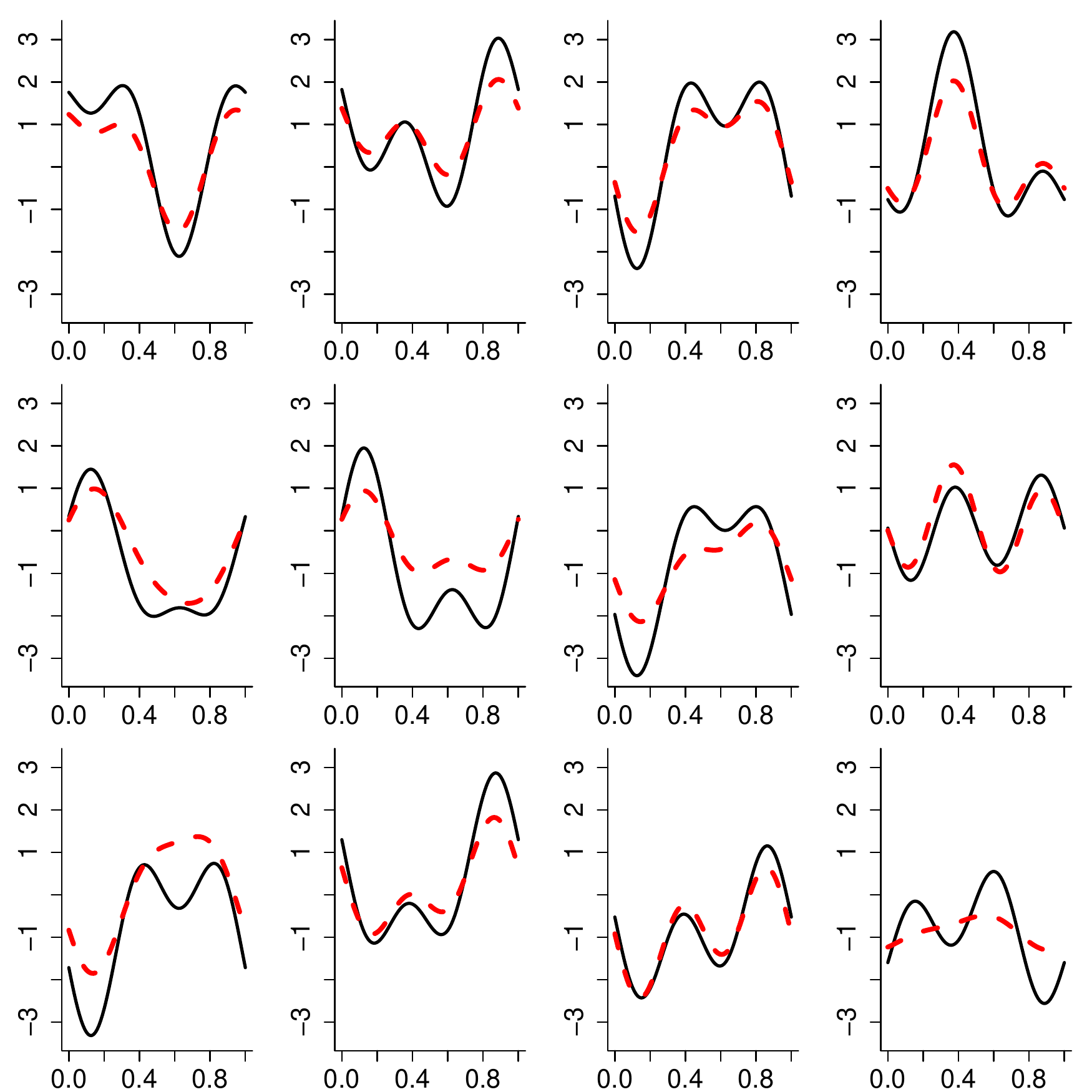}
&
\includegraphics[scale = 0.4]{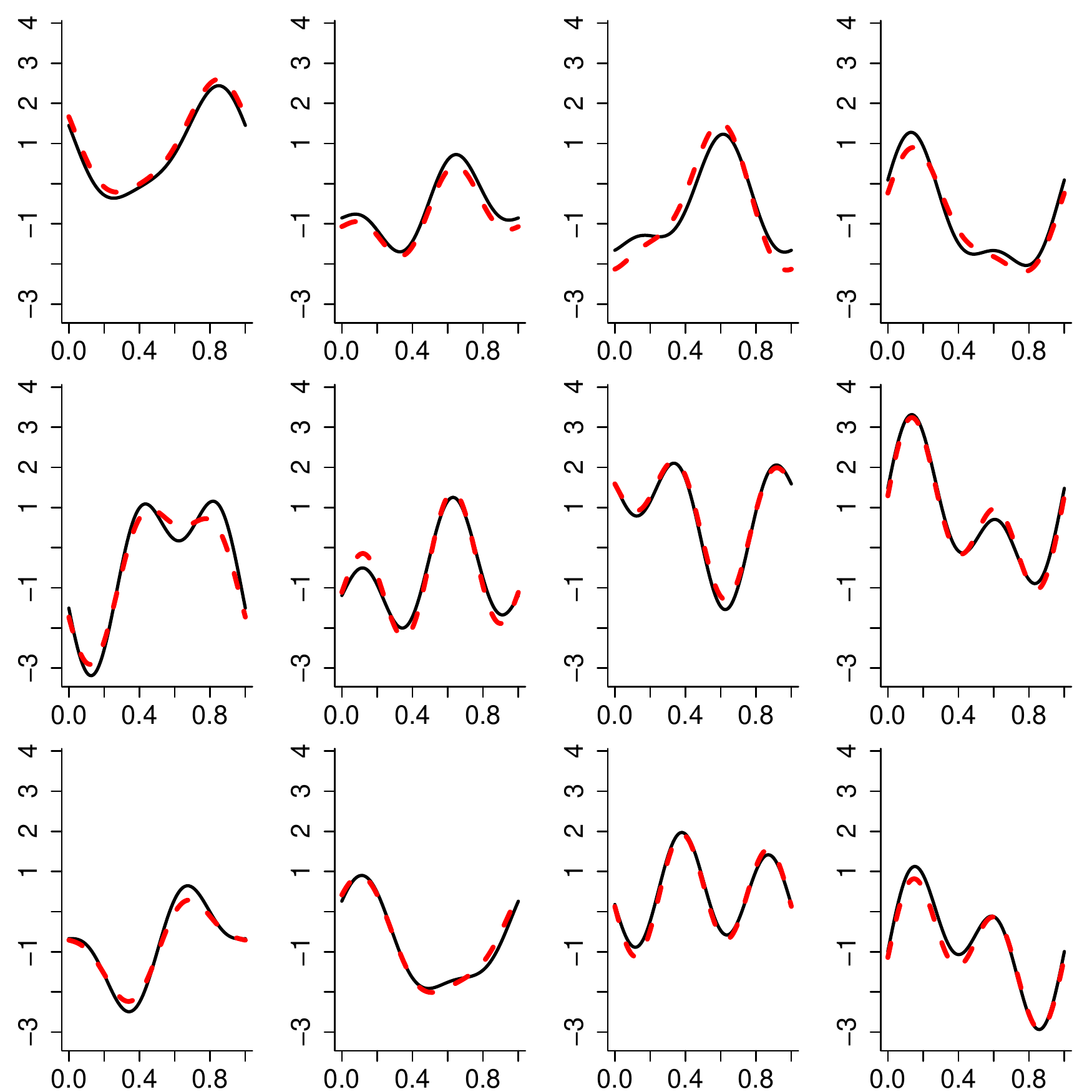}
\end{tabular}
\caption{Functional derivatives $m_{X_i}'$ (solid lines) and their predictions $\widehat{m_{X_i}'}$ with $h_{deriv}$ (dashed lines).}
\label{fig:deriv_comparison}
\end{figure}

Another way to assess the bandwidth choice for estimating the functional derivative is to compare the selected bandwidth $h_{deriv}$ itself with the oracle one $h_{deriv}^{oracle}$ minimizing the oracle relative mean squared error $ORMSE_{deriv}$ from \eqref{ORMSEP} computed from the random sample functions $X_1, \dots, X_n$. Given a learning sample size $n$ and a noise-to-signal ratio ($nsr$), 100 simulated datasets are drawn from (M1). Our local linear estimating procedure provides 100 triples of bandwidths $h_{reg}$, $h_{deriv}$ and $h_{deriv}^{oracle}$. Setting $n=100, 150, 200,\ldots,500$ and $nsr=0.05,0.2,0.4$, Figure~\ref{fig:der_bandwidth_choice} displays simultaneously $h_{deriv}^{oracle}$ versus $h_{deriv}$ and $h_{deriv}^{oracle}$ versus  $h_{reg}$. For each type of bandwidth, the mean (solid circle for $h_{deriv}$ and square for $h_{reg}$) and standard deviation (whiskers) over the 100 runs are displayed for each of the $9\times3$ pairs $(n, \, nsr)$. The bandwidth $h_{deriv}$ works quite well even if it underestimates slightly the oracle version $h_{deriv}^{oracle}$. On the other hand, $h_{reg}$ fails drastically, especially for larger sample sizes $n$. Summary statistics of the raw data are provided in Table~\ref{tab:der_bandwidth_choice} in Appendix~\ref{app:C}.
\begin{figure}[htpb] 
 \centerline{\includegraphics[scale = 0.45]{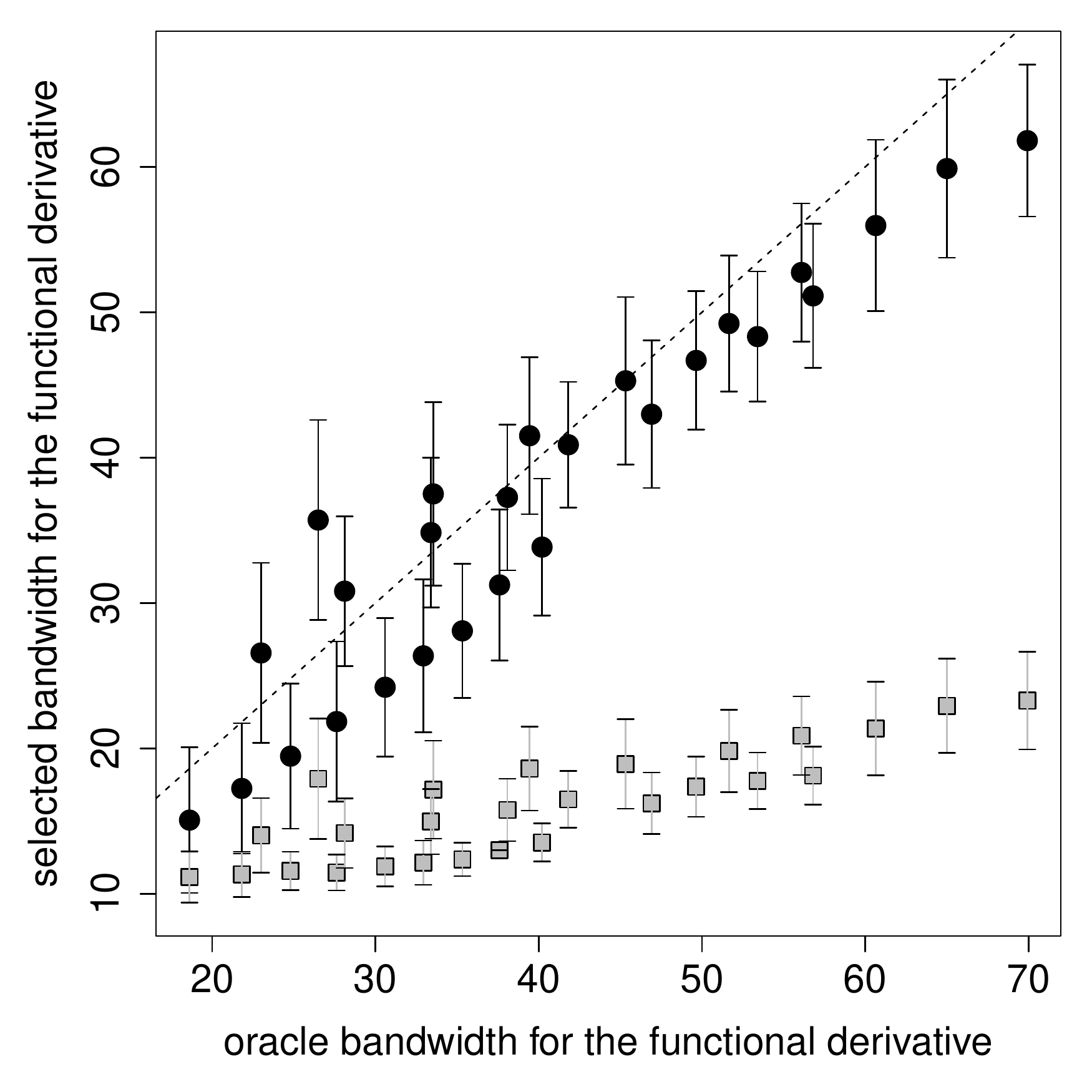}}
\caption{The oracle bandwidths $h_{deriv}^{oracle}$ versus $h_{deriv}$ (solid circles), and versus $h_{reg}$ (squares). On the horizontal axis, the averages of the oracle bandwidths $h_{deriv}^{oracle}$ are displayed.}
 \label{fig:der_bandwidth_choice}
\end{figure}
It is worth noting that the bootstrap bandwidth selection introduces additional randomness into the local linear estimation of the functional derivatives. However, Table~\ref{tab:stability} given in Appendix~\ref{app:C} indicates clearly that our procedure gives very stable results.

\medskip

{\em Conclusion.} To summarize this section devoted to bandwidth selection, we may conclude that: \begin{enumerate*}[label=(\roman*)] \item cross-validation is a useful method for determining $h_{reg}$, the bandwidth for estimating the local linear regression operator, and \item the bootstrap procedure which provides the bandwidth for estimating the functional derivative works well, even for small learning samples size and high noise-to-signal ratios. \end{enumerate*}

\bigskip

{\sc Automatic choice of the approximating subspace.} So far we focused only on the bandwidth selection. To make our method fully automatic, one has to determine also the approximating subspace spanned by $\phi_1,\ldots,\phi_J$ and its dimension $J$. As explained in Section~\ref{sec:interpolation} (case~3), functional principal component analysis is a very useful tool for expanding a random function onto the eigenfunctions of the covariance operator. Let $\phi_1,\ldots,\phi_J$ be the first $J$ eigenfunctions of the covariance operator of the functional predictor $X$ associated to the $J$ largest eigenvalues, and let $\widehat{\phi}_1,\ldots,\widehat{\phi}_J$ be their estimates from the empirical covariance operator. To make estimating procedure fully automatic, we may proceed in three steps: \begin{enumerate*}[label=(\roman*)] \item compute the first $J$ eigenfunctions $\widehat{\phi}_1,\ldots,\widehat{\phi}_J$ of the empirical covariance operator with $J$ large enough, \item carry out the local linear estimator of the regression operator with $h_{reg}$ and $J_{opt}$ obtained by minimizing the CV criterion with respect to $J_{opt} \in \left\{0, 1, \dots, J \right\}$ and $h_{reg}$, \item determine $h_{deriv}$ using the bootstrap procedure by minimizing \eqref{bootstrap} and compute the corresponding local linear estimator of the functional derivative. \end{enumerate*} 

\medskip

{\em Robustness of the selection procedure for $J_{opt}$}. To make the choice of the dimension $J$ more challenging, we add structural perturbation to the functional predictors $X$. Given the first eight Fourier basis elements $\phi_1,\ldots,\phi_4,\phi_5,\ldots,\phi_8$, set $X \coloneqq \sum_{j=1}^4 U_j \, \phi_j + \eta$ where $\eta \coloneqq \sum_{j=5}^8 V_j \, \phi_j$ with $U_j$ (resp. $V_j$) iid uniform random variables defined on $[-1,\, 1]$ (resp. $[-b,\, b]$). The second part $\eta$ provides a structural noise that is controlled by the ratio $\rho \coloneqq \E\left(\| \eta \|^2 \right) / \E\left(\| X \|^2 \right) = b^2 / (1 + b^2)$. We refer to this model as (M2). Given any $\rho \in (0,1)$, one can always find a corresponding bound $b$ for simulating the functional predictors. Table~\ref{tab:dimension_choice} shows how our estimating procedure is robust according to 4 structural noise ratios ($\rho=0.05, \,0.1, \,0.2, \,0.4$) and different learning sample sizes ($n=100, 150, \ldots, 500$). In each situation, model (M2) is simulated 100 times (with $nsr = 0.05$) resulting in 100 estimates $J_{opt}$. The larger $n$ is, the more often the dimension is correctly detected. The results degrade as the ratio $\rho$ increases. The detection is almost perfect for large sample sizes and ratios $\rho$ up to 0.2. Table~\ref{tab:ormsep_J4_nsr005} gives respectively the corresponding $ORMSEP_{reg}$ and $ORMSEP_{deriv}$, each time averaged over 100 runs with standard deviations in brackets. This additional table confirms that our estimating procedure is not too sensitive to perturbations. Complete results of this simulation study which include various choices of the true dimension $J$, $nsr$, and $\rho$, can be found in Appendix~\ref{app:C}.

\bigskip

{\sc Running time}. One could think that the introduction of the bootstrap procedure cumulates with the selection of two bandwidths and one dimension, and requires a computation that is quite intensive. Nevertheless, the running time of our \texttt{R} procedure is surprisingly short --- at most about 3 seconds (according to the previous simulation scheme with a processor Intel Core i7 2.7 GHz with 16 GB RAM) are necessary to carry out the estimation/prediction for both the regression operator and the functional derivative, including FPCA for the basis expansion $\widehat{\phi}_1, \ldots, \widehat{\phi}_{J_{opt}}$ with automatic computation of $J_{opt}$, and automatic bandwidths ($h_{reg}$ and $h_{deriv}$) selection, see the last column of Table~\ref{tab:dimension_choice}. 

\begin{table}[htpb]
\centering
\caption{Number of times, out of $100$, that the dimension is correctly selected, and running times (in s.).} 
\label{tab:dimension_choice}
\begin{tabular}{c|cccc|c}
$n$  & $\rho = 0.05$ & $\rho = 0.1$ & $\rho = 0.2$ & $\rho = 0.4$ & Timing \\ 
  \hline
100 & 27 & 20 & 21 & 11 & 0.60 \\ 
  150 & 46 & 48 & 30 & 11 & 0.69 \\ 
  200 & 56 & 59 & 46 & 13 & 0.51 \\ 
  250 & 79 & 67 & 54 & 12 & 0.89 \\ 
  300 & 85 & 86 & 56 & 20 & 1.03 \\ 
  350 & 86 & 87 & 67 & 20 & 1.62 \\ 
  400 & 96 & 93 & 71 & 23 & 1.92 \\ 
  450 & 94 & 90 & 83 & 18 & 2.39 \\ 
  500 & 98 & 97 & 86 & 29 & 3.29 \\ 
  \end{tabular}
\end{table}

\begin{table}[ht]
\centering
\begin{tabular}{cc|cccc}
                                                                                  & $n$            & \multicolumn{1}{l}{   $\rho = 0.05$} & \multicolumn{1}{l}{    $\rho = 0.1$} & \multicolumn{1}{l}{    $\rho = 0.2$} & \multicolumn{1}{l}{    $\rho = 0.4$} \\ 
   \hline
\parbox[t]{2mm}{\multirow{9}{*}{\rotatebox[origin=c]{90}{$ORMSEP_{reg}$}}}   & $100$ & 0.388 \,(0.058) & 0.391 \,(0.055) & 0.438 \,(0.060) & 0.593 \,(0.065) \\ 
                                                                                  & $150$ & 0.271 \,(0.042) & 0.288 \,(0.039) & 0.354 \,(0.056) & 0.518 \,(0.049) \\ 
                                                                                  & $200$ & 0.216 \,(0.032) & 0.231 \,(0.030) & 0.294 \,(0.033) & 0.459 \,(0.040) \\ 
                                                                                  & $250$ & 0.176 \,(0.026) & 0.195 \,(0.022) & 0.260 \,(0.029) & 0.431 \,(0.043) \\ 
                                                                                  & $300$ & 0.151 \,(0.018) & 0.172 \,(0.022) & 0.233 \,(0.022) & 0.397 \,(0.036) \\ 
                                                                                  & $350$ & 0.135 \,(0.022) & 0.151 \,(0.019) & 0.218 \,(0.025) & 0.369 \,(0.033) \\ 
                                                                                  & $400$ & 0.117 \,(0.014) & 0.137 \,(0.017) & 0.201 \,(0.022) & 0.350 \,(0.030) \\ 
                                                                                  & $450$ & 0.107 \,(0.014) & 0.125 \,(0.014) & 0.184 \,(0.018) & 0.333 \,(0.027) \\ 
                                                                                  & $500$ & 0.098 \,(0.011) & 0.116 \,(0.012) & 0.174 \,(0.018) & 0.323 \,(0.029) \\ 
   \hline
\parbox[t]{2mm}{\multirow{9}{*}{\rotatebox[origin=c]{90}{$ORMSEP_{deriv}$}}} & $100$ & 0.374 \,(0.181) & 0.399 \,(0.171) & 0.420 \,(0.162) & 0.604 \,(0.119) \\ 
                                                                                  & $150$ & 0.256 \,(0.155) & 0.256 \,(0.150) & 0.348 \,(0.166) & 0.528 \,(0.116) \\ 
                                                                                  & $200$ & 0.207 \,(0.145) & 0.202 \,(0.130) & 0.257 \,(0.126) & 0.476 \,(0.107) \\ 
                                                                                  & $250$ & 0.134 \,(0.128) & 0.170 \,(0.129) & 0.220 \,(0.119) & 0.446 \,(0.095) \\ 
                                                                                  & $300$ & 0.104 \,(0.095) & 0.113 \,(0.093) & 0.205 \,(0.115) & 0.404 \,(0.112) \\ 
                                                                                  & $350$ & 0.104 \,(0.126) & 0.103 \,(0.083) & 0.173 \,(0.107) & 0.378 \,(0.101) \\ 
                                                                                  & $400$ & 0.064 \,(0.053) & 0.086 \,(0.097) & 0.159 \,(0.104) & 0.352 \,(0.095) \\ 
                                                                                  & $450$ & 0.070 \,(0.096) & 0.084 \,(0.077) & 0.124 \,(0.088) & 0.353 \,(0.095) \\ 
                                                                                  & $500$ & 0.052 \,(0.038) & 0.061 \,(0.041) & 0.116 \,(0.083) & 0.326 \,(0.111) \\ 
  \end{tabular}
\caption{Average and standard deviation (in brackets) of $ORMSEP$ with $J=4$ and $nsr=0.05$.} 
\label{tab:ormsep_J4_nsr005}
\end{table}


\subsection{A comparative study} \label{subsec:ComparativeStudy}
We now conduct a simulation study, in which the finite sample performance of the local linear estimator is compared to its competitors from the literature. In the simulated datasets, we extend models (M1) and (M2) to consider a whole spectrum of scenarios, from a linear to a nonlinear additive one. 

\bigskip

{\sc Simulated model (M3).} A perturbed functional predictor $X \coloneqq \sum_{j=1}^4 U_j \, \phi_j + \eta$ is built according to the scheme given in (M2). We now consider $Y \coloneqq m_a(X)  + \varepsilon$ where $m_a(X) \coloneqq (1 - a) \langle \beta, X \rangle + a \sum_{j=1}^4 \exp(-U_j^2)$ and $\beta \coloneqq \sum_{j=1}^4 \phi_j$. Note that the choice $a = 0$ corresponds to a standard functional linear model, whereas $a = 1$ represents the nonlinear regression model (M2). For all choices of $a$, the model (M3) is additive \cite{MullY08, MullY10}, which means that the regression operator can be expressed as a sum of components where each component is a function that depends only on a single principal score of the regressor $X$. Additivity allows direct computation of the functional derivative of $m_a$. The derivative takes the form $m_{a,x}'(t) = (1 - a) \beta(t) - 2 a \sum_{j=1}^4 U_j \exp(-U_j^2) \, \phi_j(t)$.

For any $X_j$ in the testing sample, the predictive performance of our local linear estimators $m(X_j)$ and $m'_{X_j}$ is compared with:
	\begin{itemize}
	\item[(L)] \emph{Functional linear regression estimator}: the standard linear regression model applied to the projections of all the involved (centered) functional data into the first $J$ basis functions \cite{ReisO07}. The estimator of $m(x)$ is the intercept estimated by this model. A sensible estimator of (the Riesz representation of) the functional derivative can be obtained as $\bphi(t)\tr \widehat{\bb_L}$, where $\widehat{\bb_L}$ is the estimate of the non-intercept terms in the linear model with $J$ regressors and the intercept. Expansion into the eigenbasis estimated from the random sample functions is considered.
	\item[(LC)] \emph{Functional local constant Nadaraya-Watson kernel estimator}:
		\[	\widehat{m}(X_j) \coloneqq \sum_{i=1}^n Y_i \, K\left(\left\Vert X_j - X_i \right\Vert/h\right) / \sum_{i=1}^n K\left(\left\Vert X_j - X_i \right\Vert/h\right)	\]
	for $K$ a kernel function, and $h$ a bandwidth \cite{FerrV06}. The local constant estimator does not allow direct estimation of the functional derivative $m'_{X_j}$.
	\item[(LL)] \emph{Functional local linear regression estimator}: the estimators of $m(X_j)$ and $m'_{X_j}$ proposed in this paper (with expansions into the eigenbasis given by the empirical covariance operator of the random sample curves). All parameters (bandwidths and approximating subspace dimension) are automatically selected.	
	\item[(MY)]  \emph{M\"uller-Yao functional additive model estimator}: the (centered) functional data are first projected into the univariate spaces given by their first $J$ estimated eigenfunctions $\widehat{\phi}_1, \dots, \widehat{\phi}_J$ to obtain their principal component scores. For all $j = 1, \dots, J$, local polynomial estimates ${\widehat{f_j}}$ and ${\widehat{f_j}}^\prime$ of the regression and its derivative, respectively, in the model of (centered) responses against the univariate scores are obtained. The additive regression operator $m$ is estimated by the sum of the functional values ${\widehat{f_j}}$ evaluated at the principal scores of $x$, plus the average of the responses $Y_i$. The final estimator of the functional derivative $m_x$ is the sum of functions $\widehat{\phi}_j$ weighted by the corresponding estimated derivatives ${\widehat{f_j}}^\prime$ evaluated at the principal scores of $x$. According to the guidelines in \cite{MullY08, MullY10}, $J$ is chosen so that the first $J$ estimated eigenfunctions explain $90~\%$ of the variability in the data. 
	\end{itemize}
Note that both the kernel estimator (LC) and the linear regression estimator (L) are special cases of the local linear estimator (LL) --- for $J = 0$ we recover the kernel estimator, and for a kernel $K(t)$ continuous at $t = 0$ from the right, the local linear smoother approaches the standard functional linear regression estimator as the bandwidth $h$ tends to infinity. 

For all competitors, the asymmetric Epanechnikov kernel $K(t) = 0.5(1 - t^2)$ for $t \in [0,\, 1]$ is used. The bandwidths, as well as the dimension $J$ in the functional (local) linear regression, are chosen by a leave-one-out cross-validation procedure. 

\bigskip

{\sc Assessing performances}. The learning and testing sample sizes are set to 500. Two perturbations are considered as in Section~\ref{subsec:ParameterSelection}: the noise-to-signal ratio $nsr$ of the regression model and the structural perturbation $\rho$ acting on regressors. Parameters $(nsr,\, \rho)$ are set to $(0.05,\, 0.05)$ and $(0.4, \, 0.4)$, corresponding to a low/high perturbation level and $a\, = \, 0, \, 0.25, \, 0.5$, $0.75, \, 1$ successively. Results for other combinations of perturbation levels are provided in Appendix~\ref{app:C}. 100 runs are performed in each case. To assess the prediction quality, we use the $ORMSEP$ criteria from \eqref{ORMSEP reg} and \eqref{ORMSEP}, except for $a = 0$ and the derivative, where the denominator of $ORMSEP_{deriv}$ is null since $m'_{0,x} \equiv \beta$ for any $x$. In the latter case, we report only the numerator of $ORMSEP_{deriv}$ from \eqref{ORMSEP}. Mean and standard deviation (in brackets) can be found in Tables~\ref{tab:R-M2-1} and \ref{tab:R-M2-4}, each table corresponding to a particular noise level.
\begin{table}[htpb]
\centering
\caption{Model (M3) with $nsr = 0.05$ and $\rho=0.05$.} 
\label{tab:R-M2-1}
\resizebox{\textwidth}{!}{\begin{tabular}{c|c|ccccc}
&   & $a = 0$ & $a = 0.25$ & $a = 0.5$ & $a = 0.75$ & $a = 1$ \\ 
  \hline
\parbox[t]{2mm}{\multirow{4}{*}{\rotatebox[origin=c]{90}{Reg.}}}& L & 0.001 \,(0.000) & 0.014 \,(0.001) & 0.111 \,(0.009) & 0.531 \,(0.032) & 1.008 \,(0.010) \\ 
&   LC & 0.048 \,(0.006) & 0.051 \,(0.007) & 0.070 \,(0.011) & 0.159 \,(0.017) & 0.265 \,(0.027) \\ 
&   LL & 0.026 \,(0.021) & 0.026 \,(0.018) & 0.035 \,(0.020) & 0.061 \,(0.011) & 0.099 \,(0.012) \\ 
&   MY & 0.028 \,(0.029) & 0.033 \,(0.021) & 0.071 \,(0.022) & 0.165 \,(0.027) & 0.273 \,(0.071) \\ 
   \hline
\parbox[t]{2mm}{\multirow{3}{*}{\rotatebox[origin=c]{90}{Deriv.}}}& L & \graytext{1.852 \,(0.024)} & 1.058 \,(0.079) & 1.016 \,(0.025) & 1.011 \,(0.016) & 1.004 \,(0.011) \\ 
&   LL & \graytext{0.102 \,(0.089)} & 8.679 \,(8.054) & 0.989 \,(1.169) & 0.092 \,(0.157) & 0.055 \,(0.048) \\ 
&   MY & \graytext{0.397 \,(0.126)} & 3.266 \,(2.804) & 0.573 \,(0.433) & 0.257 \,(0.127) & 0.216 \,(0.069) \\ 
  \end{tabular}}
\end{table}

\begin{table}[htpb]
\centering
\caption{Model (M3) with $nsr = 0.4$ and $\rho=0.4$.} 
\label{tab:R-M2-4}
\resizebox{\textwidth}{!}{\begin{tabular}{c|c|ccccc}
&   & $a = 0$ & $a = 0.25$ & $a = 0.5$ & $a = 0.75$ & $a = 1$ \\ 
  \hline
\parbox[t]{2mm}{\multirow{4}{*}{\rotatebox[origin=c]{90}{Reg.}}}& L & 0.008 \,(0.004) & 0.021 \,(0.004) & 0.119 \,(0.011) & 0.542 \,(0.037) & 1.014 \,(0.020) \\ 
&   LC & 0.153 \,(0.023) & 0.158 \,(0.024) & 0.206 \,(0.028) & 0.403 \,(0.036) & 0.629 \,(0.051) \\ 
&   LL & 0.086 \,(0.063) & 0.081 \,(0.060) & 0.120 \,(0.051) & 0.252 \,(0.040) & 0.393 \,(0.039) \\ 
&   MY & 0.055 \,(0.044) & 0.058 \,(0.025) & 0.103 \,(0.027) & 0.245 \,(0.077) & 0.369 \,(0.079) \\ 
   \hline
\parbox[t]{2mm}{\multirow{3}{*}{\rotatebox[origin=c]{90}{Deriv.}}}& L & \graytext{1.851 \,(0.028)} & 1.201 \,(0.090) & 1.034 \,(0.017) & 1.012 \,(0.007) & 1.005 \,(0.006) \\ 
&   LL & \graytext{0.849 \,(0.138)} & 7.561 \,(7.474) & 1.419 \,(0.937) & 0.572 \,(0.172) & 0.461 \,(0.074) \\ 
&   MY & \graytext{0.565 \,(0.204)} & 7.702 \,(10.505) & 1.022 \,(0.527) & 3.782 \,(33.740) & 0.307 \,(0.111) \\ 
  \end{tabular}}
\end{table}

From the results of the simulation study we conclude the following: \begin{enumerate*}[label=(\roman*)] \item the functional linear estimator is the best method when the model is linear, or close to linear. In the situation when the model is strongly nonlinear, the estimator fails as expected. \item The local linear estimator of the regression operator convincingly outperforms the local constant estimator in all considered scenarios. \item For higher values of $a$ which correspond to models far from linear, (MY) performs worse than the local linear estimator, for both regression operator and functional derivatives. This corroborates the good finite sample properties of the local linear estimator observed before, as all models considered in (M3) in the simulation study satisfy the additivity condition, under which (MY) was designed. Note that for (LL) additivity of the regression operator is not required. \item Another practical issue regarding the behavior of the estimators is their numerical stability. In the complete results of this simulation study, given in Appendix~\ref{app:C}, we observed that (MY) tends to be numerically unstable, especially for small learning sample sizes. The instabilities occur mostly when the principal scores of a predictor lie outside the range of the scores of the data, in which case either the functional values $f_j$, or the derivatives $f_j^\prime$ have to be extrapolated. Remarkably, (LL) does not appear to suffer from such drawbacks. \end{enumerate*}

\subsection{Benchmark growth data analysis} \label{subsec:GrowthDataset}
Berkeley growth data trace back to the pioneering work \cite{TuddS54} and were reconsidered in \cite{GassKMKLMP84, KneiG92, RamsL98, GervG05}. Recently, in \cite{HallMY09} an interesting analysis of this dataset involving estimated functional derivatives was performed. To better understand the growth mechanism, the relationship between the growth velocity profile up to 10 years of age (the functional predictor $X$) and the adult height (observed at 18 years, scalar response $Y$) of the boys (39 individuals) was investigated. Here, we consider the same problem using the local linear methodology. Our approach allows to estimate the functional derivatives corresponding to individual regressors $X$ directly, which greatly facilitates the interpretation of the results. In Figure~\ref{fig:growth_dataset}(a) we see the growth velocity profiles obtained via standard univariate local linear regression and in~Figure \ref{fig:growth_dataset}(b) the estimated functional derivatives $\widehat{m'_{X_1}},\ldots,\widehat{m'_{X_{39}}}$ are displayed. Focusing on the estimated functional derivatives, a sharp increase at around 6 years is observed for all boys. A possible interpretation is that the growth velocity profile prior to the age of 6 years has little impact on the adult height of an individual. Compared to the previous analyses of the growth dataset, this finding appears to be original.

To better understand the shape of the estimated functional derivatives, we propose to model the relationship between the adult height at 18 and the growth velocity up to 10 using a single-functional index model
$$
\mbox{height at 18} \ = \ g\left( \langle \mbox{growth velocity up to 10}, \, \beta \rangle \right) \ + \ error,
$$
where the link function $g$ and the functional direction $\beta$ are unknown. The single-functional index model is well suited for the studied problem, as all the estimated functional derivatives share a common shape. Therefore, the average functional derivative is a good representative of the collection of the estimated derivatives. Using the average derivative estimation method described in the introduction, we can estimate the functional parameter $\beta$ and the link function $g$ as follows: 1) $\widehat{\E m'_X} \coloneqq 1/39 \sum_{i=1}^{39} \widehat{m'_{X_i}}$ and $\widehat{\beta} = \widehat{\E m'_X} / \| \widehat{\E m'_X}  \|$, 2) based on the sample $(Z_1, Y_1),\ldots,(Z_{39}, Y_{39})$ where $Z_i \coloneqq \langle X_i, \widehat{\beta} \rangle$, one gets an estimator $\widehat{g}$ of the link function $g$ by any standard univariate nonparametric regression method. 
\begin{figure}[htpb] 
\centering
\begin{tabular}{cc}
 (a) growth velocity & (b) estimated derivatives \\
\includegraphics[scale = 0.4]{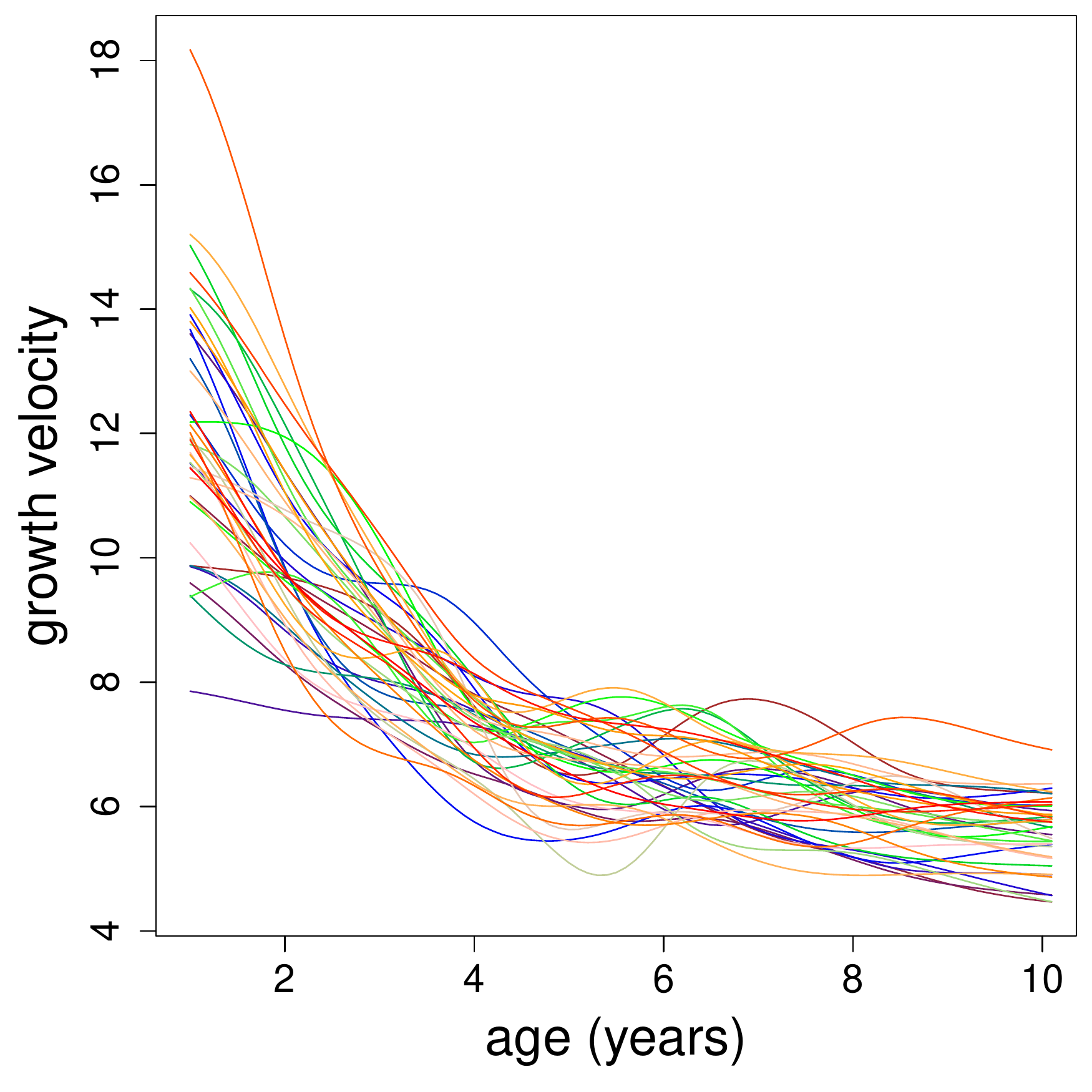}
&
\includegraphics[scale = 0.4]{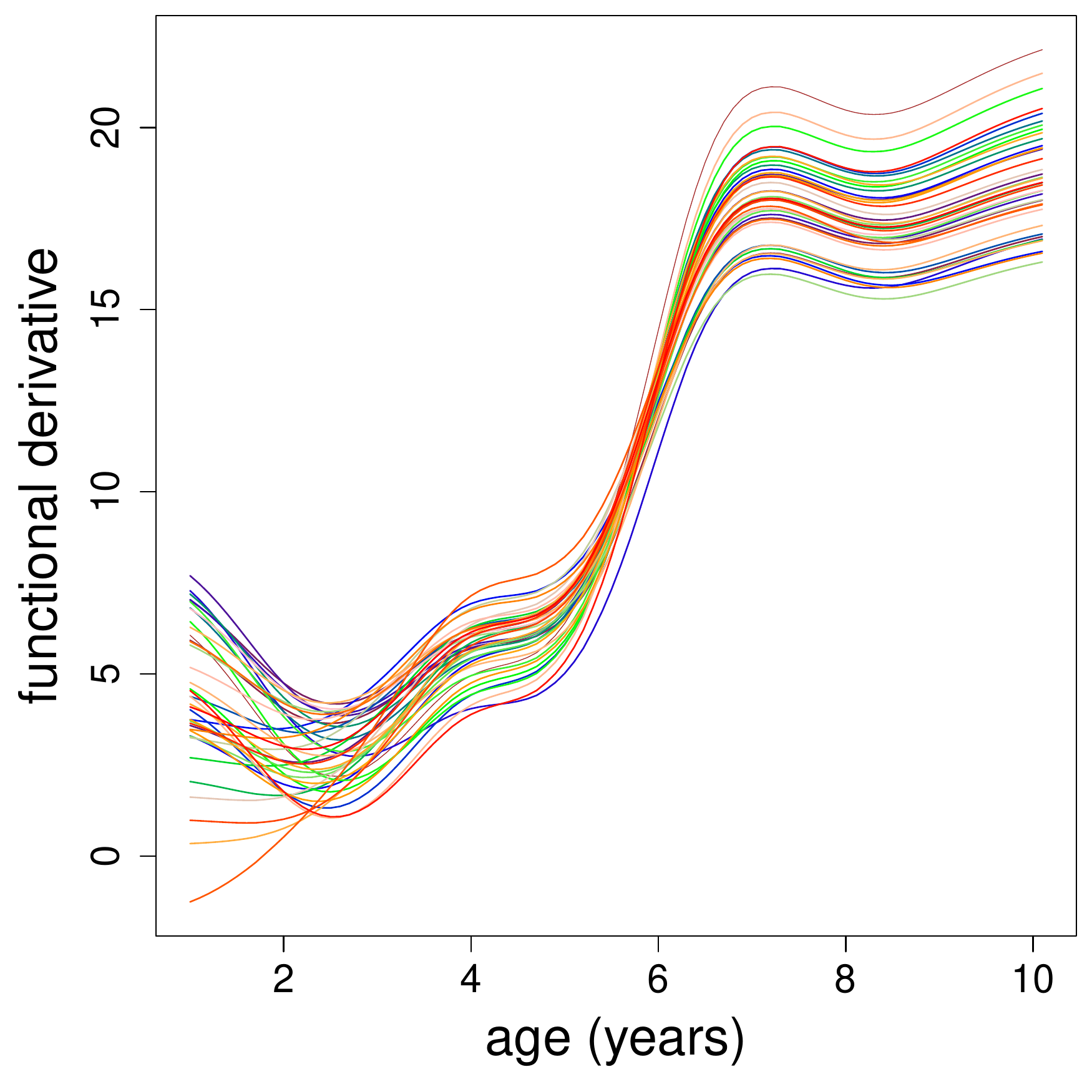}
\end{tabular}
\caption{Berkeley growth dataset: (a) growth velocity profiles; (b) functional derivatives estimated using the local linear approach.}
\label{fig:growth_dataset}
\end{figure}
Figures~\ref{fig:growth_dataset_sfim}(a) and (b) display respectively $\widehat{g}$ and $\widehat{\beta}$. The shape of the estimated functional index $\widehat{\beta}$ reflects the significant jump at around 6 years. The results are quite positive; the estimated heights are strongly correlated with the observed adult heights (Pearson's correlation $\simeq 0.82$). Since $\widehat{g}$ is a positive and non-decreasing function, the growth velocity after 6 years of age plays a major role in the prediction of the adult height. In order to confirm this interpretation, the single-functional index model was estimated again, but by restricting the growth velocity to the range of 6--10 years of age. As expected, the quality of the estimation is comparable; similar correlation ($\simeq 0.82$) between the estimated adult heights and the observed ones is obtained. For additional details we refer to Appendix~\ref{app:C}. 
\begin{figure}[htpb] 
\centering
\begin{tabular}{ccc}
 (a) function $\widehat{g}$ & (b) function $\widehat{\beta}$ & (c) estimates \\
\includegraphics[scale = 0.25]{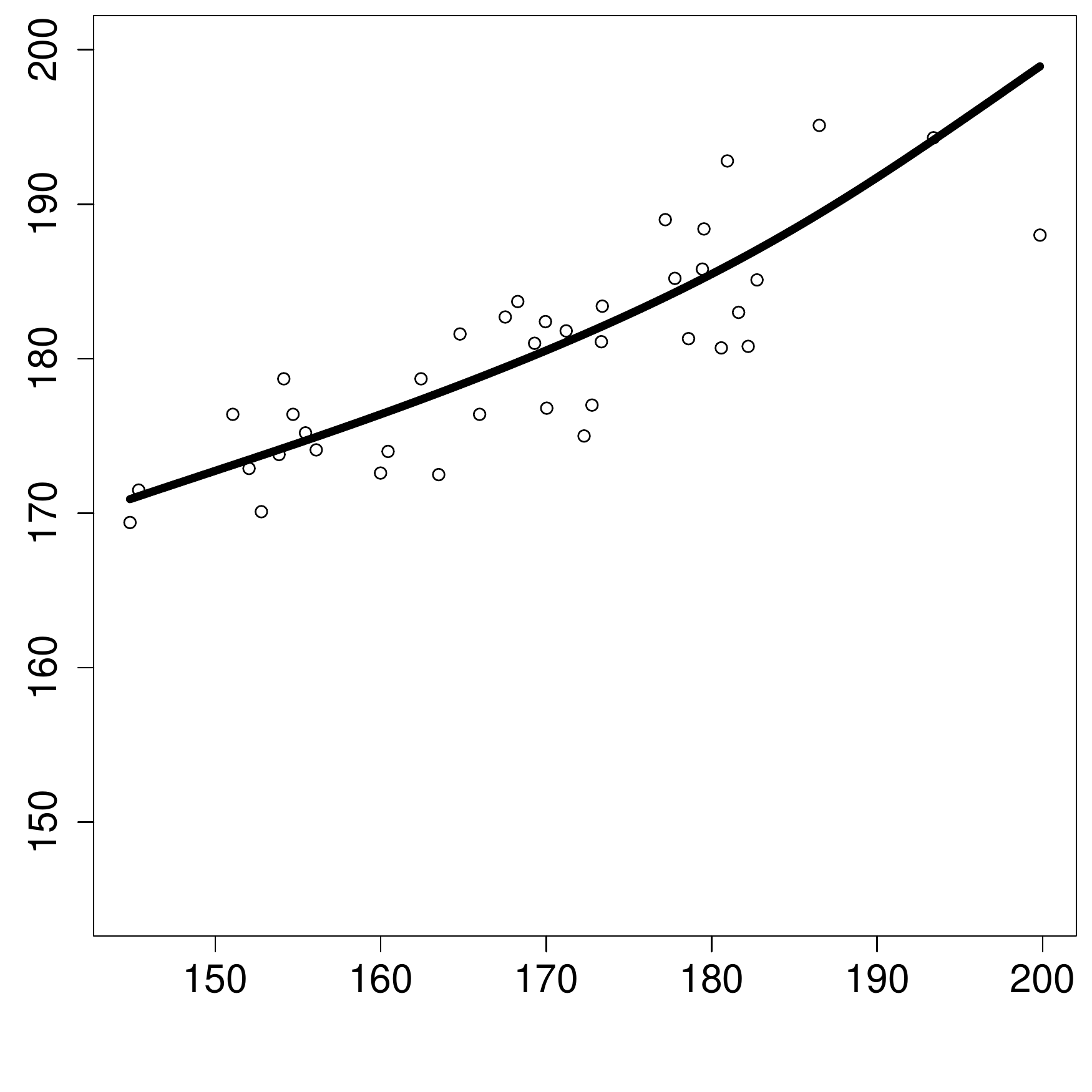}
&
\includegraphics[scale = 0.25]{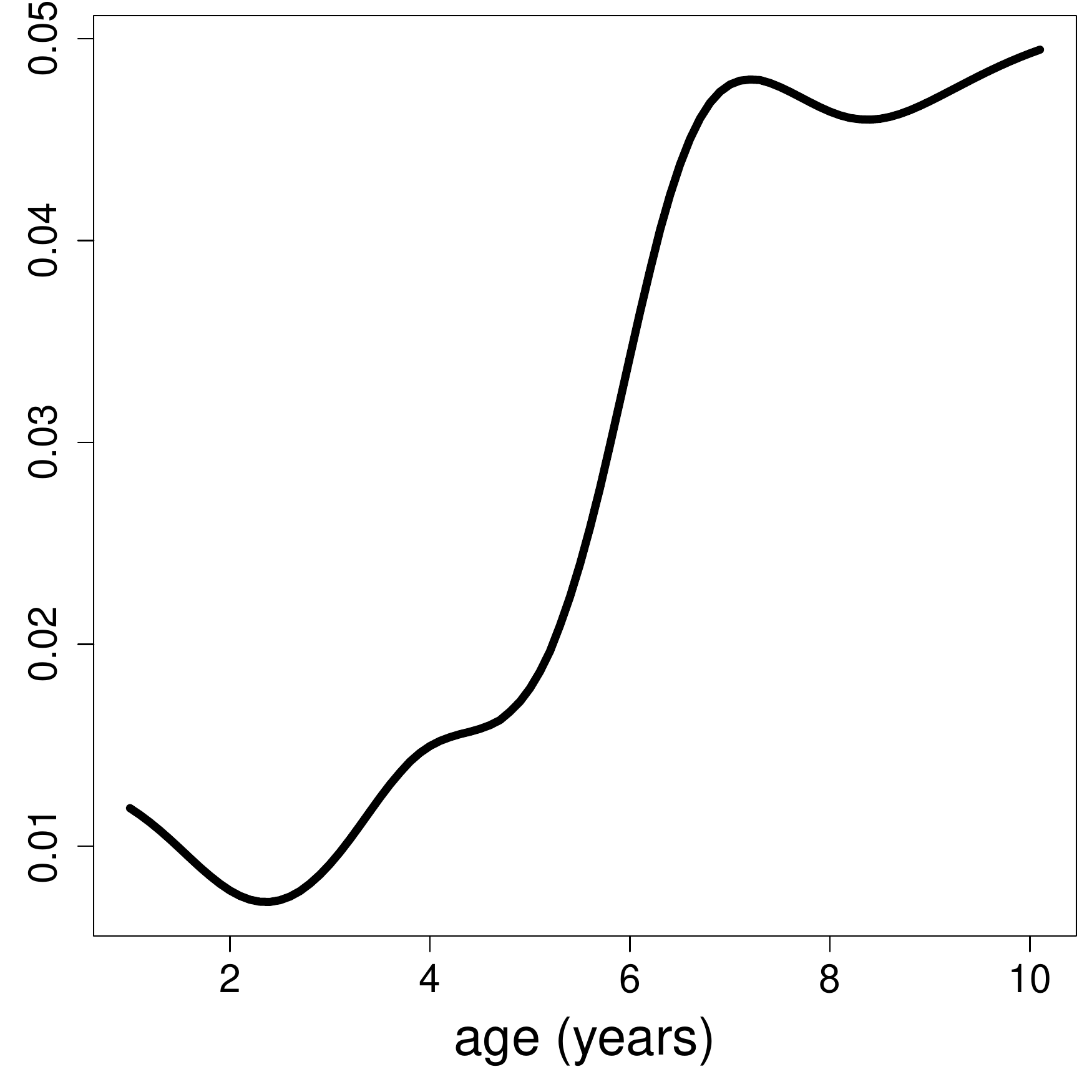}
&
\includegraphics[scale = 0.25]{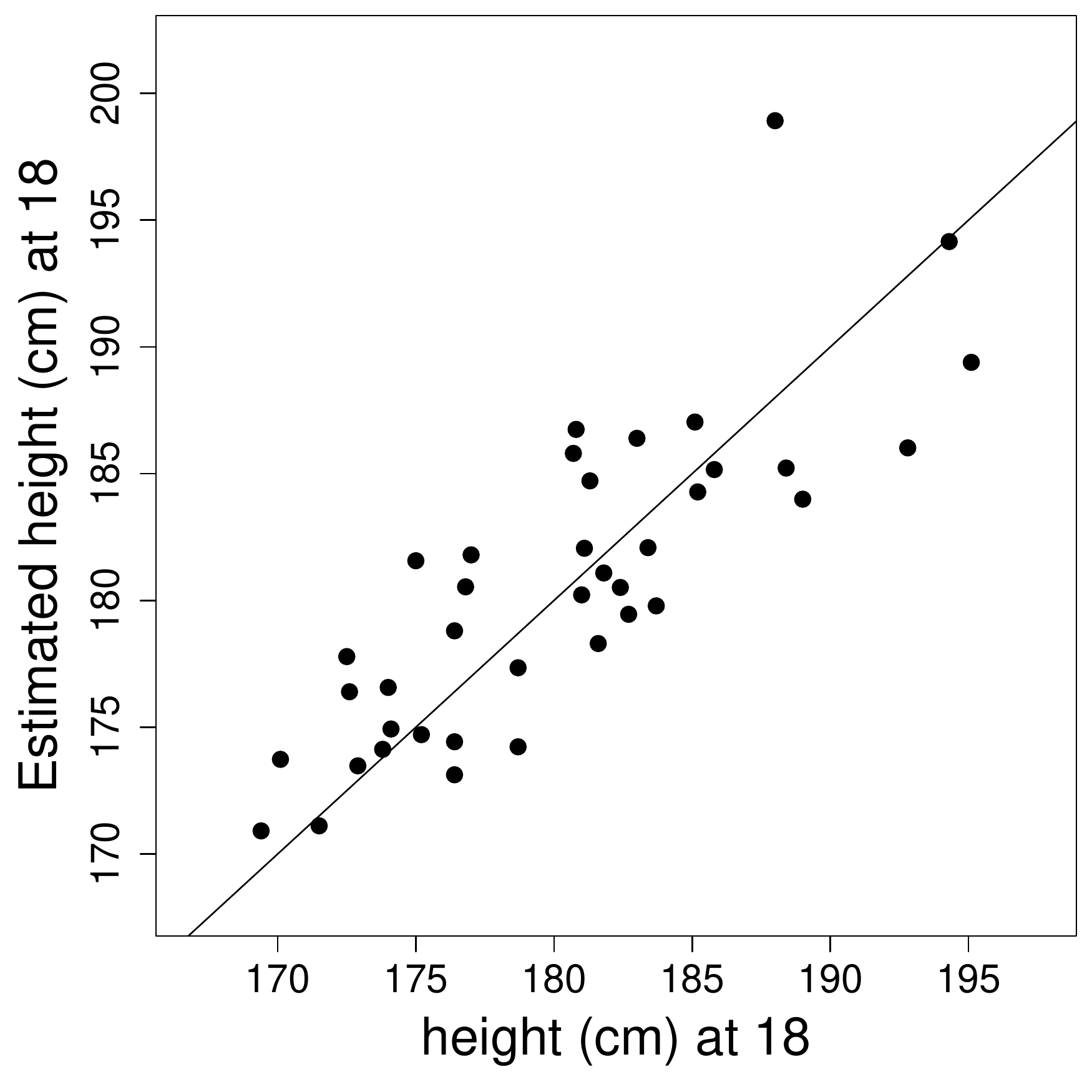}
\end{tabular}
\caption{Berkeley growth dataset: (a) link function $\widehat{g}$ estimated by local linear regression; (b) estimated functional index $\widehat{\beta}$; (c) observed adult height versus its estimates.}
\label{fig:growth_dataset_sfim}
\end{figure}

\subsection*{Acknowledgments} The work of S. Nagy was supported by the grant 19-16097Y of the Czech Science Foundation, and by the PRIMUS/17/SCI/3 project of Charles University.

\appendixpage
\appendix

Throughout the appendices, we use several conventions. Sums indicated by $\sum_i$ are always meant with $i$ from $1$ to $n$, and sums indicated by $\sum_{j \geq J}$ mean sums with $j$ from $J$ to infinity, for $J$ given; $\b1$ stands for a column vector of ones of appropriate dimension; $1_{[0, u]}(t)$ is the indicator of $t \in [0,u]$, i.e. $1$ if $t \in [0,u]$ and $0$ otherwise. The $(j,k)$-th element of a matrix $\bDelta$ can be denoted either by $\left[ \bDelta \right]_{jk}$, or equivalently by $\Delta_{jk}$.

\section{Details of proofs} \label{app:A}
\begin{proof}[Proof of {\sc \prettyref{lem:bias}}-$(i)$] We know that $T_1 = \be\tr \,\left( \bPhi\tr \widetilde{\bK} \bPhi  \right)^{-1}   \left[ A_0, A_1,  \dots,  A_J \right]\tr$ with
	\[	
	\begin{aligned}
	A_0 & \coloneqq (n \E K_1)^{-1} \sum_i K_i \langle \cP_{\cS_J^\perp} m_x', X_i-x \rangle, \\
	A_j & \coloneqq (n \E K_1)^{-1} \sum_i K_i \langle \cP_{\cS_J^\perp} m_x', X_i-x \rangle  \langle \phi_j, X_i-x \rangle, \quad \mbox{for $j=1,\ldots, J$}, 
	\end{aligned}
	\]
for $i=1,\ldots,n$, $K_i \coloneqq K\left( h^{-1} \| X_i - x\| \right)$ and $\widetilde{\bK} \coloneqq (n \E K_1)^{-1} \bK$. Set $\delta_0\coloneqq (n \, \E K_1)^{-1} \sum_i K_i$, $\delta_j\coloneqq (n \, \E K_1)^{-1} \sum_i \langle \phi_j, X_i-x \rangle K_i$ for $j=1,\ldots,J$ and let $\bDelta$ be the $J\times J$ matrix whose the $(j,k)$-th entry is equal to $\Delta_{jk}\coloneqq(n \, \E K_1)^{-1} \sum_i \langle \phi_j, X_i-x \rangle \langle \phi_k, X_i-x \rangle K_i$ for $j,k=1,\ldots,J$. By using elementary linear algebra,
$$
\bPhi\tr \widetilde{\bK} \bPhi \, = \, \left[ \begin{array}{c|c} \delta_0 & \bdelta\tr \\ \hline & \\[-1.0em] \bdelta & \bDelta \end{array} \right],
$$
where $\bdelta\coloneqq[ \delta_1, \dots, \delta_J]\tr$. According to standard results with respect to the inverse of a 2$\times$2 block matrix (see for instance \cite{LuS02}), 
\begin{equation} \label{eq:inverse}
\left(\bPhi\tr \widetilde{\bK} \bPhi \right)^{-1} =  \left[ \begin{array}{c|c} \mu & - \mu \, \bdelta\tr \bDelta^{-1}\\ \hline & \\[-1.0em] - \mu \, \bDelta^{-1} \bdelta & \bDelta^{-1}\, +\, \mu \, \bDelta^{-1} \bdelta \, \bdelta\tr \bDelta^{-1} \end{array} \right] 
\end{equation}
with $\mu \coloneqq \left( \delta_0 - \bdelta\tr \bDelta^{-1} \bdelta \right)^{-1}$. Then, $
T_1 \, = \, \mu \left( A_0 \, - \, \bdelta\tr \bDelta^{-1} \left[A_1, \dots, A_J \right]\tr \right)$. Before going on, let us focus on $\bDelta^{-1}$, the inverse of $\bDelta$. From  {\sc Lemma} \ref{lem:Delta}, one has $\bDelta = \bDelta_1 + \bDelta_2$ with $\ds \bDelta_1 \coloneqq b_{x,0,1}^{-1} \,  b_{x,2,1} \, h^2 \, \boldsymbol{ \Gamma } \, \left\{ 1 + o(1) \right \}$ and $\ds \bDelta_2 \coloneqq O_P\left( h^2 \{n \pi_x(h)\}^{-1/2} \right) \boldsymbol{ \Lambda }$. As soon as $\bDelta_1^{-1}$ is invertible, $\bDelta^{-1} = \bDelta_1^{-1}(\bI + \bDelta_2 \bDelta_1^{-1})^{-1}$. Let $\| \cdot \|_F$ stand for the Frobenius matrix norm and recall that $\lambda_J$ is the smallest eigenvalue of the $J\times J$ matrix $\boldsymbol{ \Gamma }$. We have that $\| \bDelta_1^{-1} \|_F = O_P\left( \lambda_J^{-1} \, J^{1/2} \, h^{-2} \right)$, and thanks to {\sc \prettyref{lem:gammafunction}}, $\| \bDelta_2 \|_F = O_P\left( h^2 \, \{n \pi_x(h)\}^{-1/2} \right)$ so that $\| \bDelta_2 \bDelta_1^{-1} \|_F = O_P\left( \lambda_J^{-1} \, J^{1/2} \,  \{n \pi_x(h)\}^{-1/2} \right)$. According to (\prettyref{hypo:asymptotics}), for $n$ large enough, the Frobenius norm of $\bDelta_2 \bDelta_1^{-1}$ is smaller than 1. Then $\bDelta^{-1} = \bDelta_1^{-1}$ $\left\{\bI + \sum_{k \geq 1} (-1)^k \left( \bDelta_2 \bDelta_1^{-1} \right)^k  \right\}$ which results in  
\begin{equation} \label{eq:inverseDelta}
\bDelta^{-1} =  b_{x,0,1} \,  b_{x,2,1}^{-1} \, h^{-2} \,  \boldsymbol{ \Gamma }^{-1} \, \left\{ 1 + o_P(1) \right \}.
\end{equation}
{\sc Lemmas} \ref{lem:delta} and \ref{lem:A0Aj} allow us to write
\begin{equation}
\begin{aligned} \label{eq:T11}
T_1 & = \mu \, b_{x,0,1}^{-1} b_{x,1,1} \Big\{ \alpha_{0,x,n}^{bias} h \left\{ 1 + o(1) \right\}  + O_P \left( \|  \cP_{\cS_J^\perp} m_x'  \| h \{n \pi_x(h)\}^{-1/2} \right)   \\
& \phantom{=} - \,   \left(\boldsymbol{\gamma}\tr + \boldsymbol{\theta}\tr O_P\left(\{n \pi_x(h)\}^{-1/2}\right)\right)  \boldsymbol{ \Gamma }^{-1} \boldsymbol{ \alpha_{x,n}^{bias} }   h \left\{ 1 + o_P(1) \right\} \\
& \phantom{=} + \, \left(\boldsymbol{\gamma}\tr + \boldsymbol{\theta}\tr O_P\left(\{n \pi_x(h)\}^{-1/2}\right)\right)  \boldsymbol{ \Gamma }^{-1}  \sqrt{ \boldsymbol{ \alpha_{x,n}^{var} } } \, O_P \left( h \, \{n \pi_x(h)\}^{-1/2} \right)\Big\} 
\end{aligned}
\end{equation}
where we denote $\bgamma \coloneqq [{\gamma_1^1}^{'}(0), \dots, {\gamma_J^1}^{'}(0)]\tr$, $\btheta \coloneqq \left[\sqrt{{\gamma_1^2}^{'}(0)}, \dots, \sqrt{{\gamma_J^2}^{'}(0)}\right]\tr$, $\boldsymbol{ \alpha_{x,n}^{bias} } \coloneqq \left[\alpha_{1,x,n}^{bias}, \dots,  \alpha_{J,x,n}^{bias}\right]\tr$ and $\sqrt{ \boldsymbol{ \alpha_{x,n}^{var} } } \coloneqq \left[\sqrt{\alpha_{1,x,n}^{var}}, \dots,  \sqrt{\alpha_{J,x,n}^{var}}\right]\tr$. From {\sc \prettyref{lem:A0Aj}} we know that $\| \boldsymbol{ \alpha_{x,n}^{bias} } \|_2 \leq \|  \cP_{\cS_J^\perp} m_x'  \|$ and $\| \sqrt{ \boldsymbol{ \alpha_{x,n}^{var} } } \|_2 \leq \|  \cP_{\cS_J^\perp} m_x'  \|$. Here $\| . \|_2$ stands for the Euclidean vector norm. Based on {\sc Lemmas} \ref{lem:gammafunction}, \ref{lem:A0Aj}, and \ref{lem:innerproductupperbound},
\begin{equation}\label{eq:T12}
\begin{aligned}
\left(\boldsymbol{\gamma}\tr + \boldsymbol{\theta}\tr O_P\left(\{n \pi_x(h)\}^{-1/2}\right)\right)  \boldsymbol{ \Gamma }^{-1} \boldsymbol{ \alpha_{x,n}^{bias} } & = O_P\left( \lambda_J^{-1} \, \left\|  \cP_{\cS_J^\perp} m_x'  \right\| \right), \\
\left(\boldsymbol{\gamma}\tr + \boldsymbol{\theta}\tr O_P\left(\{n \pi_x(h)\}^{-1/2}\right)\right) \boldsymbol{ \Gamma }^{-1}  \sqrt{ \boldsymbol{ \alpha_{x,n}^{var} } } & = O_P\left( \lambda_J^{-1} \, \left\|  \cP_{\cS_J^\perp} m_x'  \right\| \right),
\end{aligned}
\end{equation}
where $\lambda_J$ is the smallest eigenvalue of the $J\times J$ matrix $\boldsymbol{ \Gamma }$. {\sc \prettyref{lem:gammafunction}} indicates that the Frobenius norm of $\boldsymbol{ \Gamma }$ is finite and thus $\lambda_J$ converges to 0 with $n$. Let us now focus on the term $\mu$. Firstly, one has $\E \delta_0 =1$ and $\var(\delta_0)= n^{-1} (\E K_1)^{-2} \var(K_1)$. Thanks to  {\sc Corollary} \ref{cor:expectationkernel}, it is easy to see that $\var(\delta_0)= O\left( 1 / \{n \, \pi_x(h)\} \right)$ which leads to $\delta_0= 1 + O_P\left( 1 / \sqrt{n \, \pi_x(h)} \right)$. Secondly, {\sc Lemma} \ref{lem:delta} and \prettyref{eq:inverseDelta} imply that 
\begin{equation*}
\begin{aligned}
\bdelta\tr  \bDelta^{-1} \bdelta & = b_{x,0,1}^{-1} \, b_{x,1,1}^2 \, b_{x,2,1}^{-1} \,  \bgamma\tr    \boldsymbol{ \Gamma }^{-1}  \bgamma \,  \left\{ 1 + o_P(1) \right\} \\ 
& \phantom{=} + \, O_P\left( \{ n \, \pi_x(h) \}^{-1/2}  \right) \btheta\tr \boldsymbol{ \Gamma }^{-1}  \bgamma  \, + \,  O_P\left( \{ n \, \pi_x(h) \}^{-1}  \right) \btheta\tr \boldsymbol{ \Gamma }^{-1}  \btheta.
\end{aligned}
\end{equation*}
{\sc Lemmas} \ref{lem:gammafunction} and \ref{lem:innerproductupperbound} give $ \btheta\tr \boldsymbol{ \Gamma }^{-1}  \bgamma \, / \, \bgamma\tr    \boldsymbol{ \Gamma }^{-1}  \bgamma = O(\lambda_J^{-1}) = \btheta\tr \boldsymbol{ \Gamma }^{-1}  \btheta \, / \, \bgamma\tr    \boldsymbol{ \Gamma }^{-1}  \bgamma$. Then (\prettyref{hypo:asymptotics}) leads to 
$\{ n \, \pi_x(h) \}^{-1/2} \btheta\tr \boldsymbol{ \Gamma }^{-1}  \bgamma = o_P( \bgamma\tr    \boldsymbol{ \Gamma }^{-1}  \bgamma) = \{ n \, \pi_x(h) \}^{-1} \btheta\tr \boldsymbol{ \Gamma }^{-1}  \btheta$ and finally, 
$\bdelta\tr  \bDelta^{-1} \bdelta \, = \, b_{x,0,1}^{-1} \, b_{x,1,1}^2 \, b_{x,2,1}^{-1} \,  \bgamma\tr    \boldsymbol{ \Gamma }^{-1}  \bgamma \,  \left\{ 1 + o_P(1) \right\}$. Use again {\sc Lemma}  \ref{lem:innerproductupperbound} to get 
	\begin{equation}	\label{eq:mu}
	\bdelta\tr  \bDelta^{-1} \bdelta \, = \, O_P\left( \lambda_J^{-1} \right), \mbox{ which results in } \mu = O_P\left( \lambda_J \right).
	\end{equation}
This last result combined with \eqref{eq:T11} and \eqref{eq:T12} gives $T_1 \, = \, O_P\left( h \, \|  \cP_{\cS_J^\perp} m_x'  \| \right)$.
\end{proof}

\begin{proof}[Proof of {\sc \prettyref{lem:bias}}-$(ii)$] We have that $T_2 = T_{21} + T_{22}$ where
\begin{itemize}
\item[$\bullet$] $T_{21} \coloneqq \be\tr \,\left( \bPhi\tr \widetilde{\bK} \bPhi  \right)^{-1} \left[ B_0, B_1, \dots,  B_J \right]\tr$ with $B_0  \coloneqq (n \E K_1)^{-1} \sum_i K_i R_{x, x, i}$, 
and for $j=1,\ldots, J$, $B_j \coloneqq (n \E K_1)^{-1} \sum_i K_i R_{x, x, i} \langle \phi_j, X_i-x \rangle$, where $R_{x,x,i}$ is a term involved in \eqref{eq:expansion} with $\zeta$ replaced by $x$;
\item[$\bullet$] $T_{22} \coloneqq \be\tr \,\left( \bPhi\tr \widetilde{\bK} \bPhi  \right)^{-1} \left[ C_0, C_1, \dots, C_J \right]\tr$ where $C_0  \coloneqq (n \E K_1)^{-1} \sum_i K_i \left(R_{\zeta, x, i} \right.$ $ - \left. R_{x, x, i} \right)$,  and $C_j \coloneqq (n \E K_1)^{-1} \sum_i K_i \left(R_{\zeta, x, i} - R_{x, x, i} \right) \langle \phi_j, X_i-x \rangle$ for $j=1,\ldots, J$.
\end{itemize}
{\em About $T_{21}$}. {\sc Lemmas} \ref{lem:delta} and \eqref{lem:B0Bj} allow to write 
\begin{equation} \label{eq:T211}
\begin{aligned}
T_{21} & = \mu \, b_{x,0,1}^{-1} \Big\{  b_{x,2,1} \, \beta_{0,x}^{bias} \, h^2 \left\{ 1 + o(1) \right\}  + O_P \left(  h^2 \{n \, \pi_x(h)\}^{-1/2} \right)  \\
& \phantom{=} - \, b_{x,1,1} b_{x,2,1}^{-1} b_{x,3,1} \left( \boldsymbol{\gamma}\tr + \boldsymbol{\theta}\tr O_P\left( \{n \pi_x(h)\}^{-1/2} \right) \right) \boldsymbol{ \Gamma }^{-1} \boldsymbol{ \beta_{x}^{bias} }  \, h^2 \left\{ 1 + o_P(1) \right\} \\ 
& \phantom{=} \left. + \, \left( \boldsymbol{\gamma}\tr + \boldsymbol{\theta}\tr O_P\left( \{n \pi_x(h)\}^{-1/2} \right) \right) \boldsymbol{ \Gamma }^{-1}  \sqrt{ \boldsymbol{ \beta_{x}^{var} } } \, O_P \left( h^2 \{n \pi_x(h)\}^{-1/2} \right)\right\}. 
\end{aligned}
\end{equation}
Based on  {\sc Lemmas} \ref{lem:gammafunction}, \ref{lem:innerproductupperbound} and \ref{lem:B0Bj},
\begin{equation}\label{eq:T212}
\begin{aligned}
\left( \boldsymbol{\gamma}\tr + \boldsymbol{\theta}\tr O_P\left( \{n \pi_x(h)\}^{-1/2} \right) \right) \boldsymbol{ \Gamma }^{-1} \boldsymbol{ \beta_{x}^{bias} } \, & = \, O_P\left( \lambda_J^{-1}  \right), \\
\left( \boldsymbol{\gamma}\tr + \boldsymbol{\theta}\tr O_P\left( \{n \pi_x(h)\}^{-1/2} \right) \right)  \boldsymbol{ \Gamma }^{-1}  \sqrt{ \boldsymbol{ \beta_{x}^{var} } }& = \, O_P\left( \lambda_J^{-1}  \right).
\end{aligned}
\end{equation}
Now, \eqref{eq:mu}, \eqref{eq:T211}, and \eqref{eq:T212} result in $T_{21} \, = \, O_P\left( h^2 \right)$.

~\\ {\em About $T_{22}$}. Thanks to (\prettyref{hypo:taylor}), it is easy to show that $C_0 = O_P(h^3)$ and $C_j = O_P(h^4)$ for $j=1,\ldots, J$. Consequently, 
	\[	
	\begin{aligned}
T_{22} & = \be\tr \,\left( \bPhi\tr \widetilde{\bK} \bPhi  \right)^{-1}\left[ 1, h, \dots, h \right]\tr O_P(h^3)\\
& = \mu \left\{ 1 - h \, \bdelta\tr \bDelta^{-1} \b1  \right\} O_P(h^3).
	\end{aligned}
	\]
According to \eqref{eq:mu} and {\sc Lemmas} \ref{lem:Delta}, \ref{lem:delta} and \ref{lem:innerproductupperbound}, $T_{22} = O_P\left( h^3 \sqrt{J} \right)$. 

\smallskip

{\em Back to $T_2$}. Because $h \, \sqrt{J} = o(1)$ thanks to (\prettyref{hypo:asymptotics}), $T_{22} = o_P(h^2)$ and $T_2 = O_P(h^2)$.
\end{proof}

Let us now focus on the conditional variance of $\widehat{m}(x)$. With the notations introduced in the proof of  {\sc \prettyref{lem:bias}}-$(i)$, the next results provide a decomposition of the conditional variance with the asymptotic behavior of each term. 
\begin{lem}\label{lem:variance} As soon as conditions  (\prettyref{hypo:taylor})--(\prettyref{hypo:condvar}) are fulfilled, 
	\begin{enumerate}[label=(\roman*)]
	\item $\varX \left\{ \widehat{m}(x) \right\} = \{ \sigma^2(x) \, + \, o(1) \} \, \mu^2 \, \left\{\widetilde{\delta}_0 - \bdelta\tr \bDelta^{-1} \widetilde{\bdelta} -  \widetilde{\bdelta}\tr  \bDelta^{-1} \bdelta     + \bdelta\tr \bDelta^{-1} \widetilde{\bDelta} \bDelta^{-1}  \bdelta \right\},$\\ 
where $\widetilde{\delta}_0 \coloneqq (n \, \E K_1)^{-2} \, \sum_i K_i^2$, $\widetilde{\delta}_j \coloneqq (n \, \E K_1)^{-2} \sum_i \langle \phi_j, X_i - x \rangle \, K_i^2$ for $j=1,\ldots,J$, $\widetilde{\bdelta} \coloneqq [\widetilde{\delta}_1, \dots, \widetilde{\delta}_J]\tr$,  $\widetilde{\Delta}_{jk} \coloneqq (n \, \E K_1)^{-2}\sum_i \langle \phi_j, X_i - x \rangle \, \langle \phi_k, X_i - x \rangle \, K_i^2$ for $j, k=1,\ldots,J$, and $\widetilde{\bDelta}$ is the $J\times J$ matrix such that $[\widetilde{\bDelta}]_{jk} \coloneqq \widetilde{\Delta}_{jk}$,
	\item $\widetilde{\delta}_0 = b_{x,0,1}^{-2} \, b_{x,0,2} \, \{ n \, \pi_x(h) \}^{-1} \{ 1 + o_P(1) \},$  
$ \bdelta\tr \bDelta^{-1} \widetilde{\bdelta} = O_P\left( \, \{  \lambda_J \, n \, \pi_x(h) \}^{-1} \right),$ and \linebreak $ \bdelta\tr \bDelta^{-1} \widetilde{\bDelta} \bDelta^{-1} \bdelta = O_P \left( \{ \lambda_J \, n \, \pi_x(h) \}^{-1} \right)$.
	\end{enumerate}
\end{lem}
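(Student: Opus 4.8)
The plan is to handle the two parts separately: part $(i)$ is exact linear algebra, while part $(ii)$ is an asymptotic analysis built on the same moment approximations already used for $\widehat{m}(x)$. For part $(i)$, I would start from the expression obtained just above, $\varX\{\widehat{m}(x)\} = \{\sigma^2(x)+o(1)\}\,\be\tr(\bPhi\tr\bK\bPhi)^{-1}\bPhi\tr\bK^2\bPhi(\bPhi\tr\bK\bPhi)^{-1}\be$, and normalise the matrices. With $\widetilde{\bK}=(n\E K_1)^{-1}\bK$ as in the proof of {\sc Lemma}~\ref{lem:bias}, one has $(\bPhi\tr\bK\bPhi)^{-1}=(n\E K_1)^{-1}\bM^{-1}$ where $\bM\coloneqq\bPhi\tr\widetilde{\bK}\bPhi$, so the two outer factors $(n\E K_1)^{-1}$ combine with $\bPhi\tr\bK^2\bPhi$ to give exactly the block matrix $\widetilde{\bM}\coloneqq(n\E K_1)^{-2}\bPhi\tr\bK^2\bPhi$, whose entries are $\widetilde\delta_0,\widetilde{\bdelta},\widetilde{\bDelta}$. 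Thus the quadratic form is $\be\tr\bM^{-1}\widetilde{\bM}\bM^{-1}\be$. Using the block inverse \eqref{eq:inverse}, the first column of $\bM^{-1}$, which is $\bM^{-1}\be$, equals $\mu\,[1,\,-(\bDelta^{-1}\bdelta)\tr]\tr$; substituting this vector on both sides and performing the $2\times2$ block multiplication against $\widetilde{\bM}$ reproduces the claimed four-term expression, the factor $\mu^2$ coming out in front and the symmetry of $\bDelta^{-1}$ identifying $\bdelta\tr\bDelta^{-1}\widetilde{\bdelta}$ with $\widetilde{\bdelta}\tr\bDelta^{-1}\bdelta$. This part needs no new probabilistic input.

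For part $(ii)$, the term $\widetilde\delta_0=(n\E K_1)^{-2}\sum_i K_i^2$ is treated directly: its expectation is $n^{-1}(\E K_1)^{-2}\E K_1^2$, and {\sc Corollary}~\ref{cor:expectationkernel} gives $\E K_1\sim b_{x,0,1}\pi_x(h)$ and $\E K_1^2\sim b_{x,0,2}\pi_x(h)$, whence $\E\widetilde\delta_0=b_{x,0,1}^{-2}b_{x,0,2}\{n\pi_x(h)\}^{-1}\{1+o(1)\}$. The $o_P(1)$ relative error follows from a coefficient-of-variation bound: $\var(\widetilde\delta_0)\le(n\E K_1)^{-4}n\,\E K_1^4$, and a further use of {\sc Corollary}~\ref{cor:expectationkernel} yields $\sqrt{\var(\widetilde\delta_0)}/\E\widetilde\delta_0=O(\{n\pi_x(h)\}^{-1/2})=o(1)$ by (\prettyref{hypo:asymptotics}).

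For the two remaining quadratic forms I would first establish the analogues of {\sc Lemmas}~\ref{lem:delta} and~\ref{lem:Delta} for the tilde quantities, i.e.\ with kernel power $q=2$. The same moment expansions ({\sc Corollary}~\ref{cor:expectationkernel} together with (\prettyref{hypo:gammafunction})) give $\widetilde{\bdelta}=b_{x,0,1}^{-2}b_{x,1,2}\,h\,\{n\pi_x(h)\}^{-1}\bgamma\{1+o_P(1)\}$ plus a $\btheta$-fluctuation of smaller order, and $\widetilde{\bDelta}=b_{x,0,1}^{-2}b_{x,2,2}\,h^2\,\{n\pi_x(h)\}^{-1}\bGamma\{1+o_P(1)\}$ plus an analogous fluctuation. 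Combining these with $\bdelta\sim b_{x,1,1}h\,\bgamma$ from {\sc Lemma}~\ref{lem:delta} and $\bDelta^{-1}=b_{x,0,1}b_{x,2,1}^{-1}h^{-2}\bGamma^{-1}\{1+o_P(1)\}$ from \eqref{eq:inverseDelta}, all powers of $h$ cancel and both forms reduce, up to fixed constants, to $\{n\pi_x(h)\}^{-1}\,\bgamma\tr\bGamma^{-1}\bgamma$. The scalar $\bgamma\tr\bGamma^{-1}\bgamma$ is precisely the quantity already bounded by $O(\lambda_J^{-1})$ in \eqref{eq:mu} via {\sc Lemma}~\ref{lem:innerproductupperbound}, which yields the stated order $O_P(\{\lambda_J\,n\,\pi_x(h)\}^{-1})$ for both terms.

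The main obstacle is not the leading-order bookkeeping but controlling the fluctuation terms uniformly in the growing dimension $J$, since $\lambda_J\to0$. As in the decomposition $\bDelta=\bDelta_1+\bDelta_2$ used for $\widehat{m}(x)$, one must verify that the random parts of $\widetilde{\bdelta}$ and $\widetilde{\bDelta}$, and the $\btheta$-cross-terms carrying factors $\{n\pi_x(h)\}^{-1/2}$, remain genuinely $o_P$ of the leading $\bgamma$-terms after being multiplied by $\bDelta^{-1}$, whose Frobenius norm is of order $h^{-2}\lambda_J^{-1}J^{1/2}$. This is exactly where (\prettyref{hypo:asymptotics}) enters, and where the element-wise estimates of {\sc Lemma}~\ref{lem:innerproductupperbound} do the real work.
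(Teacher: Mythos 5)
Your proposal is correct and follows essentially the same route as the paper: part $(i)$ is the identical block-matrix computation via \eqref{eq:inverse} applied to $\be\tr(\bPhi\tr\widetilde{\bK}\bPhi)^{-1}\bPhi\tr\widetilde{\bK}^2\bPhi(\bPhi\tr\widetilde{\bK}\bPhi)^{-1}\be$, and part $(ii)$ derives the kernel-power-two analogues of {\sc Lemmas}~\ref{lem:delta} and~\ref{lem:Delta} (the paper's \eqref{eq:tildedelta} and \eqref{eq:tildeDelta}), combines them with \eqref{eq:inverseDelta}, and bounds the resulting quadratic forms $\bgamma\tr\bGamma^{-1}\bgamma$ and the $\btheta$-cross-terms by $O(\lambda_J^{-1})$ via {\sc Lemma}~\ref{lem:innerproductupperbound}, exactly as in the paper's \eqref{eq:eq17} and \eqref{eq:lem3ii2}.
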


\begin{proof}[Proof of {\sc \prettyref{lem:variance}}-$(i)$] The main term of $\varX \left\{ \widehat{m}(x) \right\}$ is given by 
	\[	\be\tr \, \left( \bPhi\tr \widetilde{\bK} \bPhi  \right)^{-1} \, \bPhi\tr \, \widetilde{\bK}^2 \,  \bPhi \, \left( \bPhi\tr \widetilde{\bK} \bPhi  \right)^{-1} \be	\]
where $\bPhi\tr \widetilde{\bK}^2 \bPhi \, = \, \left[ \begin{array}{c|c} \widetilde{\delta}_0 & \widetilde{\bdelta}\tr \\ \hline & \\[-1.0em] \widetilde{\bdelta} & \widetilde{\bDelta} \end{array} \right].$
Formula~\prettyref{eq:inverse} with elementary linear algebra result in the claimed decomposition.
\end{proof}

\begin{proof}[Proof of {\sc \prettyref{lem:variance}}-$(ii)$] By {\sc Corollary} \ref{cor:expectationkernel}, $\E K_1^2 = b_{x,0,2} \, \pi_x(h) \, \{ 1 + o(1) \}$ and in the same vein of {\sc Lemmas} \ref{lem:Delta} and  \ref{lem:delta}, it is easy to see that 
\[	
\begin{aligned}
\widetilde{\delta}_0 & = \, b_{x,0,1}^{-2} \, b_{x,0,2} \, \{ n \, \pi_x(h) \}^{-1} \, \{ 1 + o_P(1) \}, \\
\widetilde{\delta}_j & = \, b_{x,0,1}^{-2} \, b_{x,1,2} \, {\gamma_j^1}^{'} (0) \, h \, \left\{ n \,  \pi_x(h) \right\}^{-1} \, \left\{ 1 + o_P(1) \right\} + \, O_P\left( h \, \{ n \, \pi_x(h) \}^{-3/2} \right) \sqrt{ {\gamma_j^2}^{'}(0) }, \\
\widetilde{\Delta}_{jk}  & = \, b_{x,0,1}^{-2} \, b_{x,2,2} \, {\gamma_{j,k}^{1,1}}^{'} (0) \, h^2 \, \left\{ n  \pi_x(h) \right\}^{-1} \, \left\{ 1 + o_P(1) \right\} \\
& \phantom{=} + O_P\left( h^2 \, \{ n \, \pi_x(h) \}^{-3/2}   \right) \sqrt{ {\gamma_{j,k}^{2,2}}^{'}(0) }. 
\end{aligned}
\]
As a by-product, one has 
\begin{equation} \label{eq:tildedelta}
\widetilde{\bdelta} \, = \, b_{x,0,1}^{-2} \, b_{x,1,2} \, \bgamma \, h \, \left\{ n \,  \pi_x(h) \right\}^{-1} \, \left\{ 1 + o_P(1) \right\} \, + \, O_P\left( h \, \{ n \, \pi_x(h) \}^{-3/2} \right) \, \btheta,
\end{equation}
and 
\begin{equation} \label{eq:tildeDelta}
\widetilde{\bDelta}  \, = \, b_{x,0,1}^{-2} \, b_{x,2,2} \, \bGamma \, h^2 \, \left\{ n  \pi_x(h) \right\}^{-1} \, \left\{ 1 + o_P(1) \right\} \,  + \, O_P\left( h^2 \, \{ n \, \pi_x(h) \}^{-3/2}   \right) \bLambda.
\end{equation}
These two last equations with \prettyref{eq:inverseDelta} lead to 
\begin{equation}	\label{eq:eq17}
\begin{aligned}
\bdelta\tr \bDelta^{-1} \widetilde{\bdelta} & = b_{x,0,1}^{-2} \, b_{x,1,1} \, b_{x,2,1}^{-1} \, b_{x,2,2}\, \bgamma\tr \bGamma^{-1} \bgamma \, \{ n \, \pi_x(h) \}^{-1} \{ 1 + o_P(1) \} \\ 
	& \phantom{=} +  O_P\left( \{ n \, \pi_x(h) \}^{-3/2}  \right) \btheta\tr \boldsymbol{ \Gamma }^{-1}  \bgamma  \, + \,  O_P\left( \{ n \, \pi_x(h) \}^{-2}  \right) \btheta\tr \boldsymbol{ \Gamma }^{-1}  \btheta, 
\end{aligned}
\end{equation}
and 
\begin{equation} \label{eq:lem3ii2}
\begin{aligned}
\bdelta\tr \bDelta^{-1} \widetilde{\bDelta} \bDelta^{-1} \bdelta & = b_{x,0,1}^{-2} \, b_{x,1,1}^2 \, b_{x,2,1}^{-2} \, b_{x,2,2} \, \bgamma\tr \bGamma^{-1} \bgamma \, \{ n \, \pi_x(h) \}^{-1} \, \{ 1 + o_P(1) \} \\ 
& \phantom{=} + \,  O_P\left( \{ n \, \pi_x(h) \}^{-3/2}  \right) \bgamma\tr \bGamma^{-1} \bLambda  \bGamma^{-1} \bgamma.
\end{aligned}
\end{equation}
Combination of {\sc Lemmas}~\ref{lem:gammafunction}-$(ii)$ and \ref{lem:innerproductupperbound} gives $\bgamma\tr \bGamma^{-1} \bLambda  \bGamma^{-1} \bgamma = O(\lambda_J^{-2})$, $\bgamma\tr \bGamma^{-1} \bgamma = O(\lambda_J^{-1})$ and the claimed assertions hold. Finally, the rate of convergence of the conditional variance is derived from {\sc \prettyref{lem:variance}} and the fact that $\mu = O_P(\lambda_J)$.
\end{proof}

\begin{lem}\label{lem:biasderiv} As soon as conditions  (\prettyref{hypo:taylor})--(\prettyref{hypo:sbp1}) are fulfilled,
	\begin{enumerate}[label=(\roman*)]
	\item $\| Q_1 \| \, = \, O_P\left( \lambda_J^{-1} \|  \cP_{\cS_J^\perp} m_x'  \| \right)$,
	\item $\| Q_2 \| \, = \, O_P \left(   \lambda_J^{-1} \, h   \right)$.
	\end{enumerate}
\end{lem}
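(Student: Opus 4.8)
The plan is to recycle the whole machinery of the proof of Lemma~\ref{lem:bias}, observing that $Q_1$ and $Q_2$ are obtained from $T_1$ and $T_2$ simply by replacing the row selector $\be\tr=[1,\bzero\tr]$ with $\bphi\tr[\bzero\,|\,\bI]$. The latter extracts the last $J$ coordinates of the solution and maps them back into $H$ through the orthonormal system $\phi_1,\dots,\phi_J$; hence, by orthonormality, $\|Q_1\|$ equals the Euclidean norm of the corresponding coefficient vector, and likewise for $Q_2$. Keeping the normalization $\widetilde{\bK}\coloneqq(n\E K_1)^{-1}\bK$ (which cancels between $(\bPhi\tr\bK\bPhi)^{-1}$ and $\bPhi\tr\bK$) and reading the bottom $J$ rows off the block inverse \eqref{eq:inverse}, I would write, with $\bA\coloneqq[A_1,\dots,A_J]\tr$ as in the proof of Lemma~\ref{lem:bias}-$(i)$,
\[
Q_1=\bphi\tr\Big(\bDelta^{-1}\bA+\mu\,\bDelta^{-1}\bdelta\,\big(\bdelta\tr\bDelta^{-1}\bA-A_0\big)\Big),
\]
so that $\|Q_1\|$ is the $\|\cdot\|_2$-norm of the parenthesized vector. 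The crucial structural difference from $T_1$ is the \emph{direct} term $\bDelta^{-1}\bA$, which has no analogue in the intercept computation.

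For part $(i)$ I would bound the two summands separately. In the direct term, inserting \eqref{eq:inverseDelta} and the expansion of $\bA$ from Lemma~\ref{lem:A0Aj}, the powers of $h$ cancel and $\bDelta^{-1}\bA$ reduces to a constant multiple of $\bGamma^{-1}\boldsymbol{\alpha_{x,n}^{bias}}\{1+o_P(1)\}$ plus a lower-order stochastic remainder; since $\lambda_J$ is the smallest eigenvalue of $\bGamma$, one has $\|\bGamma^{-1}\boldsymbol{\alpha_{x,n}^{bias}}\|_2\le\lambda_J^{-1}\|\boldsymbol{\alpha_{x,n}^{bias}}\|_2\le\lambda_J^{-1}\|\cP_{\cS_J^\perp}m_x'\|$ by Lemma~\ref{lem:A0Aj}, giving the claimed rate. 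For the second summand I would reuse the estimates already available: $\mu=O_P(\lambda_J)$ from \eqref{eq:mu}, $\|\bDelta^{-1}\bdelta\|_2=O_P(h^{-1}\lambda_J^{-1})$ (from \eqref{eq:inverseDelta}, Lemma~\ref{lem:delta}, and $\bgamma\tr\bGamma^{-1}\bgamma=O(\lambda_J^{-1})$ in Lemma~\ref{lem:innerproductupperbound}), and $\bdelta\tr\bDelta^{-1}\bA-A_0=-T_1/\mu=O_P(h\lambda_J^{-1}\|\cP_{\cS_J^\perp}m_x'\|)$ by \eqref{eq:T12}; their product is again $O_P(\lambda_J^{-1}\|\cP_{\cS_J^\perp}m_x'\|)$, which finishes $(i)$.

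For part $(ii)$ I would split $Q_2=Q_{21}+Q_{22}$ exactly as $T_2=T_{21}+T_{22}$, with $Q_{21}$ built from $[B_0,\dots,B_J]$ and $Q_{22}$ from $[C_0,\dots,C_J]$. The term $Q_{21}$ is handled verbatim as $(i)$ with $\bA,A_0,\boldsymbol{\alpha_{x,n}^{bias}}$ replaced by $\bB\coloneqq[B_1,\dots,B_J]\tr,B_0,\boldsymbol{\beta_x^{bias}}$; here the entries carry one extra power of $h$, so by Lemma~\ref{lem:B0Bj} the direct term $\bDelta^{-1}\bB$ is $O_P(h\lambda_J^{-1})$ and the same cancellation yields $\|Q_{21}\|=O_P(h\lambda_J^{-1})$. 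For $Q_{22}$ I would use $C_0=O_P(h^3)$, $C_j=O_P(h^4)$ (from hypothesis (\prettyref{hypo:taylor}), as in the proof of Lemma~\ref{lem:bias}-$(ii)$), so $\|[C_1,\dots,C_J]\tr\|_2=O_P(h^4\sqrt J)$; since the largest eigenvalue of $\bDelta^{-1}$ is $O_P(h^{-2}\lambda_J^{-1})$ by \eqref{eq:inverseDelta}, the spectral bound gives $\|Q_{22}\|=O_P(h^2\lambda_J^{-1}\sqrt J)=o_P(h\lambda_J^{-1})$ because $h\,J^{1/2}=o(1)$ by (\prettyref{hypo:asymptotics}). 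Hence $\|Q_2\|=O_P(h\lambda_J^{-1})$.

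The main obstacle, and the root of the slower rate relative to Lemma~\ref{lem:bias}, is precisely the direct terms $\bDelta^{-1}\bA$ and $\bDelta^{-1}\bB$: selecting the slope block applies $\bDelta^{-1}$ — of spectral order $h^{-2}\lambda_J^{-1}$ — \emph{without} the compensating $\mu=O_P(\lambda_J)$ prefactor that the intercept enjoyed, which is why exactly one factor $\lambda_J^{-1}$ survives and one power of $h$ is gained back. The delicate bookkeeping is therefore to check that the apparent $\lambda_J^{-2}$ blow-up of the $\mu\,\bDelta^{-1}\bdelta(\cdots)$ terms collapses to $\lambda_J^{-1}$ after the $\mu=O_P(\lambda_J)$ cancellation, and that the spurious $\sqrt J$ arising in $Q_{22}$ is absorbed by $h\,J^{1/2}=o(1)$; both steps rest on Lemma~\ref{lem:innerproductupperbound} and (\prettyref{hypo:asymptotics}), and care is needed to use the spectral (not the Frobenius) norm of $\bDelta^{-1}$ so as not to lose an extra factor $J^{1/2}$.
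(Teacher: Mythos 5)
Your proposal is correct and follows essentially the same route as the paper: the same block-inverse decomposition \eqref{eq:inverse} combined with \eqref{eq:inverseDelta}, \eqref{eq:mu}, {\sc Lemmas}~\ref{lem:delta}, \ref{lem:A0Aj}, \ref{lem:B0Bj} and \ref{lem:innerproductupperbound}, the orthonormality of $\phi_1,\dots,\phi_J$ to reduce $\|\bphi\tr\bv\|$ to $\|\bv\|_2$, the split $Q_2=Q_{21}+Q_{22}$, and the absorption of the $\sqrt{J}$ factor in $Q_{22}$ via $h\,J^{1/2}=o(1)$; your regrouping of the three pieces of $Q_1$ into a ``direct'' term plus a $\mu\,\bDelta^{-1}\bdelta(\cdots)$ term is only a cosmetic rearrangement of the paper's bounds. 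The one spot written too tersely is $Q_{22}$, where the spectral bound you state covers only the direct term $\bphi\tr\bDelta^{-1}[C_1,\dots,C_J]\tr$, but the accompanying $\mu\,\bDelta^{-1}\bdelta\bigl(\bdelta\tr\bDelta^{-1}[C_1,\dots,C_J]\tr-C_0\bigr)$ piece is of the same order $O_P\bigl(h^2\lambda_J^{-1}\sqrt{J}\bigr)$ by the estimates you already assembled, so the conclusion stands.
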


\begin{proof}[Proof of {\sc \prettyref{lem:biasderiv}}-$(i)$] With the notations introduced in the proof {\sc \prettyref{lem:bias}}-$(i)$ and \prettyref{eq:inverse}, $Q_1  =   -\mu \, A_0 \, \bphi\tr \bDelta^{-1}  \bdelta  +  \bphi\tr ( \bDelta^{-1} +  \mu \, \bDelta^{-1} \, \bdelta  \, \bdelta\tr  \, \bDelta^{-1}  )  \left[ A_1, \dots, A_J \right]\tr$. According to {\sc \prettyref{lem:delta}}, {\sc \prettyref{lem:A0Aj}} and \prettyref{eq:inverseDelta}, $\mu \, A_0 \, \bphi\tr \bDelta^{-1}  \bdelta  \, = \, O_P \left( \lambda_J \|  \cP_{\cS_J^\perp} m_x'  \| \, \bphi\tr \bGamma^{-1} \bgamma \right)$, $\bphi\tr \bDelta^{-1} \left[ A_1, \dots,  A_J \right]\tr =  O_P\left( \bphi\tr \bGamma^{-1}  \boldsymbol{\alpha_{x,n}^{bias}} \right)$ and $\mu \, \bphi\tr \bDelta^{-1} \, \bdelta  \, \bdelta\tr  \, \bDelta^{-1} \left[ A_1, \dots, A_J \right]\tr = $ \linebreak $ O_P \left( \|  \cP_{\cS_J^\perp} m_x'  \| \, \bphi\tr \bGamma^{-1} \bgamma \right)$. Set $\bu \coloneqq  \lambda_J \bGamma^{-1} \bgamma$. For any
  $J$, $\| \bphi\tr \bGamma^{-1} \bgamma \| = \lambda_J^{-1}  \| \bphi\tr \bu \| = \lambda_J^{-1} \| \bu \|_2 \leq \lambda_J^{-1}$. Consequently  $ \| \bphi\tr \bGamma^{-1} \bgamma \| = O_P\left( \lambda_J^{-1} \right)$. Now, define the vector $\bv \coloneqq  \lambda_J  \|  \cP_{\cS_J^\perp} m_x'  \|^{-1} \bGamma^{-1} \boldsymbol{\alpha_{x,n}^{bias}}$ with $\| \bv \|_2 \leq 1$. From the definition of $\bv$ we have $\| \bphi\tr \bGamma^{-1} \boldsymbol{\alpha_{x,n}^{bias}} \| = \lambda_J^{-1} \|  \cP_{\cS_J^\perp} m_x'  \| \, \| \bphi\tr \bv \|$. Because of the orthonormality of the basis, $ \| \bphi\tr \bv \| = \| \bv \|_2$,   $\| \bphi\tr \bGamma^{-1} \boldsymbol{\alpha_{x,n}^{bias}} \| = \lambda_J^{-1} \|  \cP_{\cS_J^\perp} m_x'  \| \, \| \bv \|_2 \leq \lambda_J^{-1} \|  \cP_{\cS_J^\perp} m_x'  \|$ which gives $ \| \bphi\tr \bGamma^{-1} \boldsymbol{\alpha_{x,n}^{bias}} \| = O_P\left( \lambda_J^{-1} \|  \cP_{\cS_J^\perp} m_x'  \| \right)$ and the claimed rate of convergence holds.
\end{proof}

\begin{proof}[Proof of {\sc \prettyref{lem:biasderiv}}-$(ii)$] Similarly to the notations and the proof of {\sc \prettyref{lem:bias}}-$(ii)$, $Q_2 = Q_{21} + Q_{22}$ with 
	\[
	\begin{aligned}
	Q_{21} & \coloneqq -\mu \, B_0 \, \bphi\tr \bDelta^{-1}  \bdelta  +  \bphi\tr ( \bDelta^{-1} +  \mu \, \bDelta^{-1} \, \bdelta  \, \bdelta\tr  \, \bDelta^{-1}  )  \left[ B_1, \dots, B_J \right]\tr, \\
	Q_{22} & \coloneqq -\mu \, C_0 \, \bphi\tr \bDelta^{-1}  \bdelta  +  \bphi\tr ( \bDelta^{-1} +  \mu \, \bDelta^{-1} \, \bdelta  \, \bdelta\tr  \, \bDelta^{-1}  )  \left[ C_1, \dots, C_J \right]\tr.
	\end{aligned}
	\]
By following the proof of {\sc \prettyref{lem:biasderiv}}-$(i)$ and replacing {\sc \prettyref{lem:A0Aj}} with {\sc \prettyref{lem:B0Bj}}, $\mu \, B_0 \, \| \bphi\tr \bDelta^{-1}  \bdelta \| = O_P\left( h  \right)$, $\| \bphi\tr \bDelta^{-1} \left[ B_1, \dots, B_J \right]\tr \| =  O_P\left(  \lambda_J^{-1} \, h  \right)$ and \linebreak $\mu \, \| \bphi\tr \bDelta^{-1} \, \bdelta  \, \bdelta\tr  \, \bDelta^{-1} \left[ B_1, \dots, B_J \right]\tr
  \| = O_P\left(  \lambda_J^{-1} \, h  \right)$ so that $\| Q_{21} \| = O_P \left(   \lambda_J^{-1} \, h   \right)$. For studying $Q_{22}$, recall that $C_0 =O_P(h^3)$ and $C_j=O_P(h^4)$ for $j=1,\ldots, J$ (see again the proof of {\sc \prettyref{lem:bias}}-$(ii)$). Then, we obtain that $\mu \, C_0 \, \| \bphi\tr \bDelta^{-1}  \bdelta \| = O_P\left( h^2  \right)$, $\| \bphi\tr \bDelta^{-1} \left[ C_1, \dots, C_J \right]\tr \| =  O_P\left(  \lambda_J^{-1} \, \sqrt{J} \, h^2  \right)$, $\mu \, \| \bphi\tr \bDelta^{-1} \, \bdelta  \, \bdelta\tr  \, \bDelta^{-1} \left[ C_1, \dots, C_J \right]\tr \| = $ \linebreak $O_P\left(  \lambda_J^{-1} \, \sqrt{J} \, h^2  \right)$ and $\| Q_{22} \| = o_P \left(   \lambda_J^{-1} \, h   \right)$ from (\prettyref{hypo:asymptotics}). This is enough to get the claimed result.
\end{proof}

\begin{proof}[Details of the proof of {\sc \prettyref{lem:biasvarderiv}}-$(ii)$] Set $\bM_0 \coloneqq \bDelta^{-1} \bdelta \bdelta\tr \bDelta^{-1}$. Thanks to \prettyref{eq:inverse}, elementary linear algebra and notations introduced in {\sc \prettyref{lem:variance}}, the following decomposition of the conditional variance 
$$ 
\EX \left( \| \widehat{m}_x' - \EX\left( \widehat{m}_x' \right)  \|^2 \right) \,  = \, \sigma^2(x) \int_0^1  \bphi(t)\tr \left\{\sum_{j=1}^8 \bM_j \right\} \bphi(t) \dd t \, \{ 1 + o(1) \}
$$
holds with 
\begin{center}
\begin{tabular}{ll}
$\bM_1 \coloneqq \mu^2 \widetilde{\delta}_0 \bM_0$, & $\bM_2 \coloneqq - \mu \bDelta^{-1} \widetilde{\bdelta} \bdelta\tr\bDelta^{-1}$, \\
$\bM_3 \coloneqq \bM_2\tr$, & $\bM_4 \coloneqq - 2\mu^2 \left( \bdelta\tr \bDelta^{-1} \widetilde{\bdelta} \right) \bM_0$,  \\
$\bM_5 \coloneqq  \bDelta^{-1} \widetilde{\bDelta}  \bDelta^{-1}$, & $\bM_6 \coloneqq \mu \bDelta^{-1} \widetilde{\bDelta} \bM_0$, \\
$\bM_7 \coloneqq \bM_6\tr$, & $\bM_8 \coloneqq \mu^2 \left( \bdelta\tr \bDelta^{-1} \widetilde{\bDelta} \bDelta^{-1} \bdelta \right) \bM_0$.
\end{tabular}
\end{center}
According to \prettyref{eq:inverse}, \prettyref{eq:mu}, \prettyref{eq:tildedelta}, \prettyref{eq:tildeDelta}, \prettyref{eq:eq17}, and \prettyref{eq:lem3ii2}, {\sc Lemmas} \ref{lem:variance}, \ref{lem:Delta}, \ref{lem:delta}, and \ref{lem:innerproductupperbound}, with the fact that $\phi_1, \phi_2,\ldots$ is an orthonormal basis,
 \begin{center}
 \begin{tabular}{ll}
	$\int_0^1  \bphi(t)\tr \bM_0 \bphi(t) \dd t = O_P\left( \lambda_J^{-2} h^{-2}  \right)$, & $ \int_0^1  \bphi(t)\tr \bM_1 \bphi(t) \dd t = O_P\left( u_n \right)$,\\ 
$ \int_0^1  \bphi(t)\tr \bM_2 \bphi(t) \dd t = O_P\left( \lambda_J^{-1} u_n  \right)$, & $ \int_0^1  \bphi(t)\tr \bM_4 \bphi(t) \dd t = O_P\left( \lambda_J^{-1} u_n  \right)$, \\ 
$ \int_0^1  \bphi(t)\tr \bM_5 \bphi(t) \dd t = O_P\left( J \lambda_J^{-1} u_n \right)$, & 
$\int_0^1  \bphi(t)\tr \bM_6 \bphi(t) \dd t = O_P\left(  \lambda_J^{-2} u_n \right)$, 
\end{tabular}
\end{center}
and $\int_0^1  \bphi(t)\tr \bM_8 \bphi(t) \dd t = O_P\left( \lambda_J^{-1} u_n \right)$, where $u_n \coloneqq  h^{-2}  \{n  \pi_x(h)\}^{-1}$. In conclusion, $ 
\EX \left( \| \widehat{m}_x' - \EX\left( \widehat{m}_x' \right)  \|^2 \right) \,  = \, O_P\left( \lambda_J^{-2} u_n \right) + O_P\left( J \lambda_J^{-1} u_n \right)$ which is exactly the claimed rate of convergence.
\end{proof}

\begin{proof}[Proof of {\sc \prettyref{thm:ddb}}] Let us start with the conditional bias. Similarly to (\ref{eq:bias}), $\EX \left\{ \widehat{\widehat{m}}(x) \right\} = m(x) + \widehat{T}_1 + \widehat{T}_2 + \widehat{T}_3$ where $\widehat{T}_1$ (resp. $\widehat{T}_2$) corresponds to $T_1$ (resp. $T_2$) when replacing $\bPhi$ with $\widehat{\bPhi}$, and $\widehat{T}_3 \coloneqq \be\tr \,\left( \widehat{\bPhi} \tr \bK \widehat{\bPhi}  \right)^{-1} \widehat{\bPhi}  \tr \bK (\bPhi - \widehat{\bPhi}) \left[ m(x) | \nabla {m_x} \tr  \right] \tr$. 

\smallskip

{\em About $\widehat{T}_1$}. Similarly to the proof of {\sc \prettyref{lem:bias}}-$(i)$, $\widehat{T}_1 = \widehat{\mu}\left(A_0 - \widehat{\bdelta} \tr \widehat{\bDelta}^{-1} [\widehat{A}_1 \ldots \widehat{A}_J] \tr  \right)$ where $\widehat{\bdelta}$ (resp. $\widehat{\bDelta}$, $\widehat{A}_1,\ldots, \widehat{A}_J$) are defined as $\bdelta$ (resp. $\bDelta$, $A_1, \ldots,A_J$) but $\phi_1,\ldots, \phi_J$ are  replaced with $\widehat{\phi}_1,\ldots, \widehat{\phi}_J$ and where $\widehat{\mu} \coloneqq \left(\delta_0 - \widehat{\bdelta} \tr \widehat{\bDelta}^{-1} \widehat{\bdelta}\right)^{-1}$. Because $A_0 = O_P\left( \| \cP_{\cS_J^\perp} m_x' \| \, h \right)$ and $\widehat{A}_j = O_P\left(\| \cP_{\cS_J^\perp} m_x' \| \, h^2 \right)$ for $j=1,\ldots,J$,  
\begin{equation} \label{eq:hatT1.1}
\widehat{T}_1 = \widehat{\mu}  \left(1 - h \, \widehat{\bdelta} \tr  \widehat{\bDelta}^{-1}  \b1 \right) O_P\left( \| \cP_{\cS_J^\perp} m_x' \| \, h \right).
\end{equation}
According to the definitions of $\widehat{\bdelta}$ and $\widehat{\bDelta}$, it is easy to see that $\widehat{\bdelta} = \bdelta \left( 1 + U_n \right)$ and $\widehat{\bDelta}^{-1} = \bDelta^{-1} \left( 1 + U_n \right)$ with $ U_n \coloneqq O_P \left( a_J^{-1} \, n^{-1/2} \right) $ being the rate of convergence of $\max_{j \leq J} \| \phi_j - \widehat{\phi}_j \|$. On one side, $ \widehat{\bdelta} \tr \widehat{\bDelta}^{-1} \widehat{\bdelta} = \bdelta \tr \bDelta^{-1} \bdelta \left( 1 + U_n \right) $, and on the other side, $ \widehat{\bdelta} \tr  \widehat{\bDelta}^{-1}  \b1 = \bdelta \tr \bDelta^{-1} \b1 \left( 1 + U_n \right)$. According to \eqref{eq:mu}, $\bdelta \tr \bDelta^{-1} \bdelta = O_P\left(\lambda_J^{-1}\right) $ and {\sc Lemmas} \ref{lem:Delta}, \ref{lem:delta} and \ref{lem:innerproductupperbound} result in $ \widehat{\bdelta} \tr  \widehat{\bDelta}^{-1}  \b1 =  O_P\left(h^{-1} \, \lambda_J^{-1} \, J^{1/2}\right) $. Finally, $ \widehat{\mu} = O_P\left(\lambda_J\right) $,  and with  \eqref{eq:hatT1.1}
\begin{equation} \label{eq:hatT1.2}
\widehat{T}_1 \, = \, O_P\left(J^{1/2} \, \| \cP_{\cS_J^\perp} m_x' \| \, h \right).
\end{equation} 

\smallskip

{\em About $\widehat{T}_2$}. Following the guidelines in the proof of {\sc \prettyref{lem:bias}}-$(ii)$, $\widehat{T}_2 = \widehat{T}_{21} + \widehat{T}_{22}$ where $\widehat{T}_{21} \coloneqq \be\tr \,\left( \widehat{\bPhi} \tr \widetilde{\bK} \widehat{\bPhi}  \right)^{-1} \left[ B_0, \widehat{B}_1, \dots,  \widehat{B}_J \right]\tr$ and $\widehat{T}_{22} \coloneqq \be\tr \,\left( \widehat{\bPhi} \tr \widetilde{\bK} \widehat{\bPhi} \right)^{-1}$ $\left[ C_0, \widehat{C}_1, \dots, \widehat{C}_J \right]\tr$ with $\widehat{B}_1, \ldots, \widehat{B}_J$ (resp. $\widehat{C}_1,\ldots,\widehat{C}_J$) defined as $B_1,\ldots,B_J$ (resp. $C_1,\ldots,C_J$) but $\phi_1,\ldots, \phi_J$ are  replaced with $\widehat{\phi}_1,\ldots, \widehat{\phi}_J$. One has $B_0 = O_P(h^2)$, $\widehat{B}_j = O_P(h^3)$ for $j=1,\ldots,J$, and with (\prettyref{hypo:taylor}), $C_0 = O_P(h^3)$ and $\widehat{B}_j = O_P(h^4)$ for $j=1,\ldots,J$. Consequently, $\widehat{T}_{22}$ is negligible with respect to $\widehat{T}_{21}$. Similarly to $\widehat{T}_1$, $\widehat{T}_{21} =  \widehat{\mu}  \left(1 - h \, \widehat{\bdelta} \tr  \widehat{\bDelta}^{-1}  \b1 \right) O_P\left(  h^2 \right)$  so that 
\begin{equation} \label{eq:hatT2}
\widehat{T}_2 \, = \, O_P\left(J^{1/2} \, h^2 \right).
\end{equation} 

\smallskip

{\em About $\widehat{T}_3$}. By the definition of $\widehat{T}_3$ and since $\max_{j \leq J} \| \phi_j - \widehat{\phi}_j \| = O_P\left( a_J^{-1} \, n^{-1/2} \right) $, 
\begin{eqnarray*}
\widehat{T}_3 & = & \be\tr \,\left( \widehat{\bPhi} \tr \widetilde{\bK} \widehat{\bPhi}  \right)^{-1} \widehat{\bPhi}  \tr \widetilde{\bK} \left[ \delta_0 \, | \, \widehat{\bdelta} \tr \right] \tr O_P\left(a_J^{-1} \, n^{-1/2} \, J^{1/2} h\right)\\
& = & \widehat{\mu}  \left(\delta_0 - \widehat{\bdelta} \tr  \widehat{\bDelta}^{-1}  \widehat{\bdelta} \right)  O_P\left(a_J^{-1} \, n^{-1/2} \, J^{1/2} h\right). 
\end{eqnarray*} 
Since $\widehat{\mu}  \left(\delta_0 - \widehat{\bdelta} \tr  \widehat{\bDelta}^{-1}  \widehat{\bdelta} \right) = 1$, we have that 
\begin{equation} \label{eq:hatT3}
\widehat{T}_3 \, = \, O_P\left(a_J^{-1} \, n^{-1/2} \, J^{1/2} \, h^2 \right).
\end{equation} 

\smallskip

Now it is enough to combine the decomposition of $\EX \left\{\widehat{\widehat{m}}(x) \right\}$ with \eqref{eq:hatT1.2}, \eqref{eq:hatT2}, and \eqref{eq:hatT3} to get the claimed conditional bias. 

Let us now focus on the conditional variance. Similarly to  {\sc \prettyref{lem:variance}}-$(i)$,
	\[	\varX \left\{ \widehat{\widehat{m}}(x) \right\} = \{ \sigma^2(x) \, + \, o(1) \} \, \widehat{\mu}^2 \, \left( \widetilde{\delta}_0 - \widehat{\bdelta} \tr \widehat{\bDelta}^{-1} \widehat{\widetilde{\bdelta}} -  \widehat{\widetilde{\bdelta}} \tr  \widehat{\bDelta}^{-1} \widehat{\bdelta}     + \widehat{\bdelta} \tr \widehat{\bDelta}^{-1} \widehat{\widetilde{\bDelta}} \widehat{\bDelta}^{-1}  \widehat{\bdelta} \right)	\] where $\widehat{\widetilde{\bdelta}}$ (resp. $\widehat{\widetilde{\bDelta}}$) is defined as $\widetilde{\bdelta}$ (resp. $\widetilde{\bDelta}$) but $\phi_1,\ldots, \phi_J$ are  replaced with $\widehat{\phi}_1,\ldots, \widehat{\phi}_J$. From the definition of $\widehat{\widetilde{\bdelta}}$ and $\widehat{\widetilde{\bDelta}}$, it is easy to state that $\widehat{\widetilde{\bdelta}} = \widetilde{\bdelta} \left( 1 + U_n \right)$ and $\widehat{\widetilde{\bDelta}}  = \widetilde{\bDelta} \left( 1 + U_n \right)$ which results in $\varX \left\{ \widehat{\widehat{m}}(x) \right\} = \varX \left\{ \widehat{m}(x) \right\} \left(1 + U_n \right) +$ $ \{ \sigma^2(x) \, + \, o(1) \} \, \widehat{\mu}^2 \, \left( - \bdelta \tr \bDelta^{-1} \widetilde{\bdelta} -  \widetilde{\bdelta} \tr  \bDelta^{-1} \bdelta     + \bdelta \tr \bDelta^{-1} \widetilde{\bDelta} \bDelta^{-1}  \bdelta \right)  U_n$. Let us now use \mbox{{\sc Lemma~\ref{lem:variance}}-$(ii)$} with the fact that $\widehat{\mu} = O_P(\lambda_J)$. This leads to $\varX \left\{ \widehat{\widehat{m}}(x) \right\} = O_P\left( \{ n \, \pi_x(h) \}^{-1} \right) + O_P\left( a_J^{-1} \, n^{-1/2} \, \lambda_J \{ n \, \pi_x(h) \}^{-1} \right)$. Since $a_J^{-1} \, n^{-1/2} = o(1)$, the claimed conditional variance holds. 
\end{proof}
\section{Technical results}	\label{app:B}

\begin{lem}\label{lem:expectation} Let $p \geq 0$ and $q > 0$. As soon as (\prettyref{hypo:kernel}) and (\prettyref{hypo:sbp1}) are fulfilled, $$\int_0^1  t^p K^q(t) \dd P^{\|X_1-x\|/h}(t) = b_{x,p,q} \, \pi_x(h) \, \left\{ 1 + o(1) \right\},$$
where $b_{x,p,q} \coloneqq K^q(1)  - \int_0^1 \left\{u^p K^q(u)\right\}' \tau_x(u) \dd u$.
\end{lem}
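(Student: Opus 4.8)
The plan is to reduce the integral to a Lebesgue--Stieltjes integral against the distribution function of $Z \coloneqq \|X_1 - x\|/h$ and then integrate by parts, so that the kernel (rather than the unknown law of $Z$) carries the derivative. Write $\psi(t) \coloneqq t^p K^q(t)$ and observe that the distribution function of $Z$ is $F(t) = P(\|X_1 - x\| < ht) = \pi_x(ht)$ for $t \ge 0$, so that the left-hand side equals $\int_0^1 \psi(t) \dd F(t)$. Since $\psi$ is continuously differentiable on $(0,1]$ by (\prettyref{hypo:kernel}) and $F$ is nondecreasing, integration by parts gives
$$
\int_0^1 \psi(t)\dd F(t) = \bigl[\psi(t) F(t)\bigr]_0^1 - \int_0^1 \psi'(t) F(t) \dd t.
$$
First I would evaluate the boundary terms: at $t = 1$ we obtain $\psi(1) F(1) = K^q(1)\,\pi_x(h)$, while at $t = 0$ the term vanishes because $F(0) = \pi_x(0) = 0$ and $\psi$ is bounded near $0$. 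Substituting $F(t) = \pi_x(ht)$ and dividing through by $\pi_x(h)$ then yields
$$
\frac{1}{\pi_x(h)}\int_0^1 \psi(t)\dd F(t) = K^q(1) - \int_0^1 \psi'(t)\,\tau_{x,h}(t)\dd t,
$$
where $\tau_{x,h}(t) = \pi_x(ht)/\pi_x(h)$ is precisely the ratio appearing in (\prettyref{hypo:sbp1}).

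It then remains to pass to the limit $h \to 0$ inside the integral. Hypothesis (\prettyref{hypo:sbp1}) supplies the pointwise convergence $\tau_{x,h}(t) \to \tau_x(t)$, and the monotonicity of $r \mapsto \pi_x(r)$ furnishes the uniform domination $0 \le \tau_{x,h}(t) \le 1$ on $[0,1]$ (since $ht \le h$ there). Hence $|\psi'(t)\tau_{x,h}(t)| \le |\psi'(t)|$, and provided $\psi' \in L^1([0,1])$ the dominated convergence theorem gives $\int_0^1 \psi'\tau_{x,h} \to \int_0^1 \psi'\tau_x$. Combining this with the previous display produces
$$
\frac{1}{\pi_x(h)}\int_0^1 \psi(t)\dd F(t) \longrightarrow K^q(1) - \int_0^1 \{u^p K^q(u)\}'\,\tau_x(u)\dd u = b_{x,p,q},
$$
which is exactly the asserted expansion $\int_0^1 t^p K^q(t)\dd P^{\|X_1-x\|/h}(t) = b_{x,p,q}\,\pi_x(h)\{1+o(1)\}$.

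The one point demanding genuine care --- and which I expect to be the main obstacle --- is verifying $\psi' \in L^1([0,1])$ so that the dominated convergence step is licit. Differentiating, $\psi'(t) = p\,t^{p-1}K^q(t) + q\,t^p K^{q-1}(t)K'(t)$. Because $K$ is nonincreasing with $K(1) > 0$ by (\prettyref{hypo:kernel}), one has $K(t) \ge K(1) > 0$ on $(0,1]$, so that $K^{q-1}$ stays bounded for every $q > 0$ and the second summand is bounded; the first summand is $O(t^{p-1})$, which is integrable on $[0,1]$ for every $p \ge 0$ (and vanishes identically when $p = 0$). A secondary subtlety is that the integration-by-parts identity would need adjustment if the law of $\|X_1 - x\|$ carried an atom at $x$; this is excluded here because $\pi_x(r) \to 0$ as $r \to 0$, and in any case the leading contribution $K^q(1)\pi_x(h)$ has already been isolated as the boundary value at $t=1$. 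With these verifications in place, the statement follows.
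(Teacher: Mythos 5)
Your proof is correct and follows essentially the same route as the paper: the paper writes $t^pK^q(t)=K^q(1)-\int_t^1\{u^pK^q(u)\}'\dd u$ and swaps the order of integration via Fubini, which is exactly the Stieltjes integration by parts you perform, and both arrive at $K^q(1)\pi_x(h)-\int_0^1\{u^pK^q(u)\}'\pi_x(hu)\dd u$ before invoking (\prettyref{hypo:sbp1}). Your added dominated-convergence justification (using $0\le\tau_{x,h}\le 1$ and the integrability of $\{u^pK^q(u)\}'$) makes explicit a limit interchange that the paper leaves implicit.
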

A useful by-product of this lemma is the following corollary. 

\begin{cor} \label{cor:expectationkernel} Under (\prettyref{hypo:kernel}) and (\prettyref{hypo:sbp1}), we have that $\E K_1^q \, = \, b_{x,0,q} \, \pi_x(h) \, \left\{ 1 + o(1) \right\}$ for any $q > 0$.
\end{cor}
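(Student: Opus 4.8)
The plan is to observe that the statement is nothing but the specialization $p=0$ of \prettyref{lem:expectation}, so the entire proof consists in rewriting $\E K_1^q$ as the integral treated there. First I would unfold the definition $K_1 = K\left( h^{-1} \| X_1 - x\| \right)$, so that $K_1^q = K^q\left( \| X_1 - x\|/h \right)$ is a measurable function of the single scalar random variable $\| X_1 - x\|/h$. Transferring the expectation to the law of this scalar gives
\[
\E K_1^q = \int_{[0,\infty)} K^q(t) \dd P^{\|X_1-x\|/h}(t),
\]
where $P^{\|X_1-x\|/h}$ denotes the distribution of $\|X_1-x\|/h$.

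Next I would invoke (\prettyref{hypo:kernel}): since $K$ is supported on $[0,1]$, the integrand $K^q(t)$ vanishes for $t>1$, so the range of integration collapses to $[0,1]$ and we may write
\[
\E K_1^q = \int_0^1 K^q(t) \dd P^{\|X_1-x\|/h}(t) = \int_0^1 t^0\, K^q(t) \dd P^{\|X_1-x\|/h}(t).
\]
This last expression is exactly the quantity controlled by \prettyref{lem:expectation} in the case $p=0$. Because the hypotheses of the corollary, namely (\prettyref{hypo:kernel}) and (\prettyref{hypo:sbp1}), are precisely those required by that lemma, it applies verbatim and yields
\[
\E K_1^q = b_{x,0,q} \, \pi_x(h) \, \{ 1 + o(1) \},
\]
with $b_{x,0,q} = K^q(1) - \int_0^1 \{K^q(u)\}' \tau_x(u) \dd u$ obtained by setting $p=0$ in the definition of $b_{x,p,q}$. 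This is the claimed asymptotic equivalent.

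I do not anticipate any genuine obstacle: the result is an immediate corollary, and the only point that deserves a word of justification is that the kernel support condition in (\prettyref{hypo:kernel}) legitimately truncates the integral to $[0,1]$, which is what allows $\E K_1^q$ to be identified with the $p=0$ integral of the lemma. Everything else is a direct substitution.
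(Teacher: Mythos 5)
Your proposal is correct and matches the paper's intent exactly: the paper presents this corollary as a direct by-product of {\sc \prettyref{lem:expectation}}, obtained by taking $p=0$, which is precisely what you do. The only substantive step --- rewriting $\E K_1^q$ as the integral over $[0,1]$ against the law of $\|X_1-x\|/h$ using the support condition in (\prettyref{hypo:kernel}) --- is handled correctly.
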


\begin{proof}[Proof of \sc \prettyref{lem:expectation}.] In order to shorten the notations, set $Z_h=\|X_1-x\|/h$. From the differentiability of $K$ it comes that $t^p K^q(t) = K^q(1) - \int_t^1 \left\{u^p K^q(u)\right\}' \dd u$ and 
	\[	
	\begin{aligned}
\int_0^1  t^p K^q(t) \dd P^{Z_h}(t) & = K^q(1) \int_0^1 \dd P^{Z_h}(t) - \int_0^1 \left( \int_t^1 \left\{u^p K^q(u)\right\}' \dd u \right)  \dd P^{Z_h}(t) \\ 
& = K^q(1) \pi_x(h) - \int_0^1 \left( \int_0^1 \left\{u^p K^q(u)\right\}' 1_{[0, u]}(t) \dd u \right) \dd P^{Z_h}(t) \\  
& = K^q(1) \pi_x(h) - \int_0^1 \left( \int_0^1 1_{[0, u]}(t) \dd P^{Z_h}(t) \right) \left\{u^p K^q(u)\right\}' \dd u \\ 
& = K^q(1) \, \pi_x(h) - \int_0^1 \left\{u^p K^q(u)\right\}' \, \pi_x(h u) \dd u.
	\end{aligned}
	\]
Finally, $\int_0^1  t^p K^q(t) \dd P^{Z_h}(t)  =  \pi_x(h) \left\{K^q(1) - \int_0^1 \left\{u^p K^q(u)\right\}' \, \left\{\pi_x(h u)/\pi_x(h) \right\} \dd u \right\}$ and thanks to (\prettyref{hypo:sbp1}) the claimed result holds.
\end{proof}

\begin{lem}\label{lem:gammafunction} Under (\prettyref{hypo:kernel})--(\prettyref{hypo:sbp1}) one has  
	\begin{enumerate}[label=(\roman*)]
	\item $ \E \left\{ K_1^a \, \gamma_{j_1,\ldots,j_K}^{p_1,\ldots,p_K}\left( \| X_1 - x \|^{p_+} \right) \right\} \, = \, b_{x, p_+, a} \, {\gamma_{j_1,\ldots,j_K}^{p_1,\ldots,p_K} }^{'}(0) \, h^{p_+} \, \pi_x(h) \, \left\{ 1 + o(1) \right\}$ for any $a\geq 0$, where $p_+ = p_1+\cdots + p_K$,
	\item $\ds \sum_{j \geq 1} \left\{ {\gamma_j^1}^{'}(0) \right\}^2 \leq 1$, $\ds \sum_{j_1, j_2 \geq 1} \left\{ {\gamma_{j_1, j_2}^{1,1}}^{'}(0) \right\}^2 \leq 1$, $\ds \sum_{j_1, j_2, j_3 \geq 1} \left\{ {\gamma_{j_1, j_2, j_3}^{1,1,1}}^{'}(0) \right\}^2 \leq 1$, $\ds \sum_{j_1, \ldots, j_4 \geq 1} \left\{ {\gamma_{j_1, \ldots, j_4}^{1,\ldots,1}}^{'}(0) \right\}^2 \leq 1$, $\ds \sum_{j \geq 1} {\gamma_j^2}^{'}(0)  \leq 1$, and $\ds \sum_{j_1, j_2 \geq 1} {\gamma_{j_1, j_2}^{2,2}}^{'}(0)  \leq 1$,
	\item the $J\times J$ matrix $\boldsymbol{ \Gamma }$  is positive semi-definite.
	\end{enumerate}
\end{lem}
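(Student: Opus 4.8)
The plan is to handle the three parts in turn, after one preliminary observation used throughout: for every admissible index choice one has $\gamma_{j_1,\ldots,j_K}^{p_1,\ldots,p_K}(0)=0$. Indeed, conditioning on $\|X_1-x\|^{p_+}=t$ forces $\|X_1-x\|^{p_+}=t$, and by Cauchy--Schwarz each factor satisfies $|\langle\phi_{j_\ell},X_1-x\rangle|^{p_\ell}\le\|X_1-x\|^{p_\ell}$, so $|\gamma_{j_1,\ldots,j_K}^{p_1,\ldots,p_K}(t)|\le t$; since these functions are continuous near $0$ by (\prettyref{hypo:gammafunction}), they vanish at $0$. Hence each admits the first-order expansion $\gamma(t)={\gamma}'(0)\,t+o(t)$ as $t\to0^+$, which is the engine of all three arguments.

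For part (i), I would note that both $K_1^a$ and $\gamma_{j_1,\ldots,j_K}^{p_1,\ldots,p_K}(\|X_1-x\|^{p_+})$ are functions of $\|X_1-x\|$ alone, so the expectation is an integral against the law of $\|X_1-x\|/h$; after the substitution $s=\|X_1-x\|/h$ the integrand becomes $K^a(s)\,\gamma(h^{p_+}s^{p_+})$ on $[0,1]$. Inserting the expansion $\gamma(h^{p_+}s^{p_+})={\gamma}'(0)\,h^{p_+}s^{p_+}+R(h,s)$, where $|R(h,s)|\le h^{p_+}s^{p_+}\sup_{0\le\xi\le h^{p_+}}|\gamma'(\xi)-\gamma'(0)|$ is $h^{p_+}s^{p_+}\,o(1)$ uniformly in $s\in[0,1]$ (continuity of $\gamma'$, and $h^{p_+}s^{p_+}\le h^{p_+}\to0$), I can pull the $o(1)$ out of the integral. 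Applying \prettyref{lem:expectation} with $p=p_+$, $q=a$ to evaluate $\int_0^1 s^{p_+}K^a(s)\dd P^{\|X_1-x\|/h}(s)=b_{x,p_+,a}\,\pi_x(h)\{1+o(1)\}$ then yields the stated formula. The only care needed is the uniformity of the Taylor remainder.

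For part (ii), the central step is a pointwise-in-$t$ bound from the Parseval identity $\sum_{j\ge1}\langle\phi_j,X_1-x\rangle^2=\|X_1-x\|^2$. Writing $W=X_1-x$ and conditioning on $\|W\|=r$ (so $t=r^{p_+}$), for the all-exponents-one families conditional Jensen gives $\{\gamma_{j_1,\ldots,j_K}^{1,\ldots,1}(t)\}^2\le\E(W_{j_1}^2\cdots W_{j_K}^2\mid\|W\|=r)$; summing over all indices collapses the right side to $\E(\|W\|^{2K}\mid\|W\|=r)=t^2$, so $\sum\{\gamma\}^2\le t^2$. For the squared families the same computation gives the exact identities $\sum_j\gamma_j^2(t)=t$ and $\sum_{j_1,j_2}\gamma_{j_1,j_2}^{2,2}(t)=t$. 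Dividing by $t$ (resp. $t^2$) and letting $t\to0$ along a sequence, each summand converges to the corresponding derivative (resp. squared derivative), nonnegative in the squared cases; Fatou's lemma for series then transfers the uniform bound to the limit in the direction we need, giving all six inequalities.

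Part (iii) reuses this mechanism. For any $c=[c_1,\ldots,c_J]\tr$ the quadratic form $c\tr\boldsymbol{\Gamma}(t)\,c=\E\big((\sum_j c_j\langle\phi_j,W\rangle)^2\mid\|W\|^2=t\big)\ge0$, where $\boldsymbol{\Gamma}(t)$ has entries $\gamma_{j,k}^{1,1}(t)$; since $\boldsymbol{\Gamma}(0)=0$, dividing by $t>0$ and letting $t\to0^+$ gives $c\tr\boldsymbol{\Gamma}c=\lim_{t\to0^+}c\tr\boldsymbol{\Gamma}(t)\,c/t\ge0$, so $\boldsymbol{\Gamma}$ is positive semi-definite. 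I expect the main obstacle to be the rigorous justification in part (ii): establishing the uniform finite-index bounds via conditional Jensen together with $\sum_j\langle\phi_j,\cdot\rangle^2=\|\cdot\|^2$, and then interchanging the limit $t\to0$ with the infinite summation, which is exactly what Fatou's lemma licenses.
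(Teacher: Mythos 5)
Your proposal is correct and follows essentially the same route as the paper's proof: the substitution-plus-Taylor argument feeding into {\sc Lemma}~\ref{lem:expectation} for part (i) (the paper likewise uses the mean value theorem with $\sup_\epsilon|\eta(\epsilon)|=o(1)$, implicitly relying on $\gamma(0)=0$, which you make explicit), the conditional Jensen bound combined with $\sum_j\langle\phi_j,\cdot\rangle^2=\|\cdot\|^2$ for part (ii), and the identification of the quadratic form with a conditional second moment for part (iii). The only divergence is the limit/sum interchange in (ii), where you invoke Fatou's lemma along a sequence $t_n\to 0$ while the paper establishes uniform convergence of the series via a monotone-convergence argument (citing Bourbaki); since only one-sided inequalities are claimed, your Fatou argument suffices and is the lighter tool.
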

An interesting by-product of this lemma indicates that the Frobrenius norm of $\bGamma$ is bounded from above by 1, and so is its largest eigenvalue.
\begin{proof}[Proof of \sc \prettyref{lem:gammafunction}-$(i)$.] We can write
\[
\begin{aligned}
\E \left\{ K_1^a \, \gamma_{j_1,\ldots,j_K}^{p_1,\ldots,p_K}\left( \| X_1 - x \|^{p_+} \right) \right\}  
& = \int_0^h \gamma_{j_1,\ldots,j_K}^{p_1,\ldots,p_K} \left( t^{p_+} \right) K^a\left(h^{-1} t\right) \dd P^{\| X_1 - x \|}(t) \\ 
& = \int_0^1 \gamma_{j_1,\ldots,j_K}^{p_1,\ldots,p_K} \left( h^{p_+} t^{p_+} \right) K^a( t ) \dd P^{Z_h}(t) \\ 
& = \int_0^1 (h \, t)^{p_+}   {\gamma_{j_1,\ldots,j_K}^{p_1,\ldots,p_K}}^{'} (0) \left\{1 + \eta(\epsilon) \right\} K^a( t ) \dd P^{Z_h}(t),
\end{aligned}
\]
where $\eta(\epsilon) = {{\gamma_{j_1,\ldots,j_K}^{p_1,\ldots,p_K}}^{'} (\epsilon) }/ { {\gamma_{j_1,\ldots,j_K}^{p_1,\ldots,p_K}}^{'} (0) } - 1$ with $0< \epsilon < h^{p_+} t^{p_+}$. Because $\gamma_{j_1,\ldots,j_K}^{p_1,\ldots,p_K}$ is continuously differentiable, $\sup_\epsilon |\eta(\epsilon)| = o(1)$, which implies 
\begin{equation*} 
\E \left\{ K_1^a \, \gamma_{j_1,\ldots,j_K}^{p_1,\ldots,p_K} \left( \| X_1 - x \|^{p_+} \right) \right\} =  {\gamma_{j_1,\ldots,j_K}^{p_1,\ldots,p_K}}^{'} (0) \, h^{p_+} \int_0^1  t^{p_+} K^a( t ) \dd P^{Z_h}(t) \, \left\{ 1 + o(1) \right\}.
\end{equation*}
Use of {\sc \prettyref{lem:expectation}} gives in the claimed result.
\end{proof}

\begin{proof}[Proof of {\sc \prettyref{lem:gammafunction}}-$(ii)$.] Let us first remark that  $$\sum_{j \geq 1} \left\{ \gamma_j^1(t) \right\}^2 \ \leq \ \sum_{j \geq 1} \E\left( \langle \phi_j, X_1 - x \rangle^2  \, | \, \| X_1 - x  \| = t \right)$$ by using the definition of $\gamma_j^1(t)$. Because $\sum_{j \geq 1} \langle \phi_j, X_1 - x \rangle^2 = \| X_1 - x  \|^2$, \linebreak $\sum_{j \geq 1} \left\{ \gamma_j^1(t) \right\}^2 \leq \E\left( \| X_1 - x  \|^2  \, | \, \| X_1 - x  \| = t \right) = t^2$. Now, by the definition of the derivative, 
\begin{equation} \label{eq:propgammaproperties1}
\sum_{j \geq 1} \left\{ {\gamma_j^1}^{'}(0) \right\}^2 \, = \,  \sum_{j\geq 1} \, \lim_{t\rightarrow 0} \, \frac{ \left\{ \gamma_j^1(t) \right\}^2 }{t^2}.
\end{equation}
So, for any $d \geq 1$, $\left| S_d(t) - S_\infty(t) \right| \leq \E\left( \zeta_d  \, | \, \| X_1 - x  \| = t \right)$ where we write $S_d(t) \coloneqq   \sum_{j=1}^d \left\{ \gamma_j^1(t) \right\}^2$ and $\zeta_d \coloneqq \sum_{j >d} \, \langle \phi_j, X_1 - x \rangle^2$. It is clear that $\left\{ \zeta_d \right\}_d$ is a non-increasing sequence of random variables that converges almost surely to zero with $d \to \infty$. The monotone convergence theorem implies that for any $t$, the sequence $\E\left( \zeta_d \, | \, \| X_1 - x  \| = t \right)$ is also a non-increasing sequence of random variables converging almost surely to zero with $d$. Consequently, $\sum_{j \geq 1} t^{-2}\left\{ \gamma_j^1(t) \right\}^2$ converges uniformly on $(0, h)$. By remarking that for any $j$, $\lim_{t\rightarrow 0} t^{-2}\left\{ \gamma_j^1(t) \right\}^2 = \left\{ {\gamma_j^1}^{'}(0) \right\}^2$, one can exchange the limit with the sum in \prettyref{eq:propgammaproperties1}
	\[	\sum_{j\geq 1} \left\{ {\gamma_j^1}^{'}(0) \right\}^2 = \lim_{t\rightarrow 0} \sum_{j\geq 1}  \frac{\left\{ \gamma_j^1(t) \right\}^2 }{ t^2 }
 \leq  \lim_{t\rightarrow 0} \frac{1}{t^2} \sum_{j\geq 1}  \E \left( \langle \phi_j, X_1 - x \rangle^2 \, | \, \| X_1 - x  \| = t  \right).	\] 
The expression $\sum_{j > d}  \langle \phi_j, X_1 - x \rangle^2  = \|  \cP_{\cS_d^\perp} (X_1 - x) \|^2 $ converges almost surely to 0 when $d$ tends to infinity, where $\E \left( \|  \cP_{\cS_d^\perp} (X_1 - x) \|^2  \, | \, \| X_1 - x  \| = t  \right) \leq t^2$. Thanks to \cite{Bour06} (see Corollary 2 - INT IV.37), one can put the infinite summation into the expectation 
\begin{eqnarray*} 
 \sum_{j\geq 1} \left\{ {\gamma_j^1}^{'}(0) \right\}^2 & \leq & \lim_{t\rightarrow 0} \, \frac{1}{t^2} \, \E \left(  \sum_{j\geq 1} \langle \phi_j, X_1 - x \rangle^2  \, | \, \| X_1 - x  \| = t  \right)\\ 
& \leq & \lim_{t\rightarrow 0} \, \frac{1}{t^2} \, \E \left( \| X_1 - x  \|^2  \, | \, \| X_1 - x  \| = t  \right) \leq 1,
\end{eqnarray*}
which corresponds to the claimed first assertion. The other ones can be obtained by using similar arguments. 
\end{proof}

\begin{proof}[Proof of \sc \prettyref{lem:gammafunction}-$(iii)$.] For any $J$-dimensional vector $\bu$ and $t \geq 0$, 
	\[	
	\begin{aligned}
\bu\tr \, \boldsymbol{\Gamma} \, \bu & = \sum_{j,k =1}^J u_j u_k {\gamma_{j,k}^{1,1}}^{'} (0) = \lim_{t \rightarrow 0} \, t^{-1} \sum_{j,k =1}^J u_j u_k \gamma_{j,k}^{1,1}(t) \\ 
& = \lim_{t \rightarrow 0} \, t^{-1} \E\left\{ \sum_{j,k =1}^J u_j u_k  \langle \phi_j, X_1 - x \rangle \langle \phi_k, X_1 - x \rangle \, | \, \| X_1 - x \|^2 = t \right\} \\ 
& = \lim_{t \rightarrow 0} \, t^{-1} \E\left\{ \left( \sum_{j =1}^J u_j \langle \phi_j, X_1 - x \rangle \right)^2 \, | \, \| X_1 - x \|^2 = t \right\} \geq 0. 
	\end{aligned}
	\]
\end{proof}

\begin{lem}\label{lem:Delta} Under (\prettyref{hypo:kernel})--(\prettyref{hypo:sbp1}), one has
\[
\bDelta = b_{x,0,1}^{-1} \,  b_{x,2,1} \, h^2 \, \boldsymbol{ \Gamma } \, \left\{ 1 + o_P(1) \right \}  + \, O_P\left( \frac{h^2}{\sqrt{n \pi_x(h)}} \right) \boldsymbol{ \Lambda },
\]
where $ \boldsymbol{  \Lambda }$ is the $J\times J$ matrix such that $\left[ \boldsymbol{  \Lambda } \right]_{j,k} \coloneqq \sqrt{ {\gamma_{j,k}^{2,2}}^{'} (0) }$.
\end{lem}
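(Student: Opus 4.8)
The plan is to prove the expansion entrywise: decompose each random entry $\Delta_{jk}=(n\,\E K_1)^{-1}\sum_i \langle\phi_j, X_i-x\rangle\langle\phi_k, X_i-x\rangle K_i$ into its mean plus a fluctuation, analyze the two pieces separately using {\sc \prettyref{lem:gammafunction}} and {\sc \prettyref{cor:expectationkernel}}, and then recognize the result as the $(j,k)$-entry of the two matrices on the right-hand side. Since the summands are i.i.d.\ in $i$, it suffices to work with a single copy.

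First I would compute $\E\Delta_{jk}$. Because $K_1$ depends on $X_1$ only through $\|X_1-x\|$, conditioning on $\|X_1-x\|$ and invoking the definition of $\gamma_{j,k}^{1,1}$ from (\prettyref{hypo:gammafunction}) gives
\[
\E\bigl[\langle\phi_j, X_1-x\rangle\langle\phi_k, X_1-x\rangle K_1\bigr]=\E\bigl[K_1\,\gamma_{j,k}^{1,1}(\|X_1-x\|^2)\bigr].
\]
Applying {\sc \prettyref{lem:gammafunction}}-$(i)$ with $a=1$, $p_+=2$, the right side equals $b_{x,2,1}\,{\gamma_{j,k}^{1,1}}'(0)\,h^2\,\pi_x(h)\{1+o(1)\}$, and {\sc \prettyref{cor:expectationkernel}} gives $\E K_1=b_{x,0,1}\,\pi_x(h)\{1+o(1)\}$. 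Dividing, $\E\Delta_{jk}=b_{x,0,1}^{-1}\,b_{x,2,1}\,h^2\,{\gamma_{j,k}^{1,1}}'(0)\{1+o(1)\}$, which is the $(j,k)$-entry of $b_{x,0,1}^{-1}\,b_{x,2,1}\,h^2\,\bGamma\{1+o(1)\}$ since $[\bGamma]_{jk}={\gamma_{j,k}^{1,1}}'(0)$.

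Next I would bound the fluctuation through the variance. By independence,
\[
\var(\Delta_{jk})=(n\,\E K_1)^{-2}\,n\,\var\bigl(\langle\phi_j, X_1-x\rangle\langle\phi_k, X_1-x\rangle K_1\bigr)\le (n\,\E K_1)^{-2}\,n\,\E\bigl[\langle\phi_j, X_1-x\rangle^2\langle\phi_k, X_1-x\rangle^2 K_1^2\bigr].
\]
Conditioning on $\|X_1-x\|$ again rewrites the last expectation as $\E[K_1^2\,\gamma_{j,k}^{2,2}(\|X_1-x\|^4)]$, and {\sc \prettyref{lem:gammafunction}}-$(i)$ with $a=2$, $p_+=4$ gives $b_{x,4,2}\,{\gamma_{j,k}^{2,2}}'(0)\,h^4\,\pi_x(h)\{1+o(1)\}$. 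Together with $\E K_1=b_{x,0,1}\,\pi_x(h)\{1+o(1)\}$ this yields $\var(\Delta_{jk})=O\bigl(h^4\,{\gamma_{j,k}^{2,2}}'(0)\,\{n\,\pi_x(h)\}^{-1}\bigr)$, whence by Chebyshev's inequality $\Delta_{jk}-\E\Delta_{jk}=O_P\bigl(h^2\{n\,\pi_x(h)\}^{-1/2}\bigr)\sqrt{{\gamma_{j,k}^{2,2}}'(0)}$. Writing $\Delta_{jk}=\E\Delta_{jk}+(\Delta_{jk}-\E\Delta_{jk})$ and recalling $[\bLambda]_{jk}=\sqrt{{\gamma_{j,k}^{2,2}}'(0)}$ assembles the two entrywise expansions into the asserted matrix identity, the deterministic $o(1)$ of the leading term being subsumed into $o_P(1)$.

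The genuinely routine steps are the two conditioning identities and the bookkeeping of the constants $b_{x,p,q}$. The one point that requires care is that the fourth-order moment controlling the fluctuation is captured exactly by $\gamma_{j,k}^{2,2}$, which is among the functions whose differentiability at the origin is postulated in (\prettyref{hypo:gammafunction}); this is precisely what allows {\sc \prettyref{lem:gammafunction}}-$(i)$ to be invoked for the variance, and without it the fluctuation bound would not close. As the statement is entrywise, no uniformity over the growing range $j,k\le J$ is required at this stage.
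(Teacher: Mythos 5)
Your proposal is correct and follows essentially the same route as the paper: compute $\E\Delta_{jk}$ via the conditioning identity and {\sc Lemma~\ref{lem:gammafunction}}-$(i)$ with {\sc Corollary~\ref{cor:expectationkernel}}, control the fluctuation through $\var(\Delta_{jk})$ using $\gamma_{j,k}^{2,2}$, and read off the matrix statement entrywise. The only cosmetic difference is that you bound the variance by the second moment while the paper retains the subtracted squared-mean term (which only contributes a negligible $o(\{{\gamma_{j,k}^{1,1}}'(0)\}^2 h^4)$); both yield the same conclusion.
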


\begin{proof}[Proof of \sc \prettyref{lem:Delta}.] From the definition of $\Delta_{jk}$, 
	\[	
	\begin{aligned}
\E \Delta_{jk} & = (\E K_1)^{-1} \, \E \left( K_1 \langle \phi_j, X_1 - x \rangle   \langle \phi_k, X_1 - x  \rangle \right) \\
& = (\E K_1)^{-1} \, \E \left\{ K_1 \, \gamma_{j,k}^{1,1}\left( \| X_1 - x \|^2 \right)  \right\}.
	\end{aligned}
	\]
Use of {\sc \prettyref{lem:gammafunction}} and {\sc Corollary~\ref{cor:expectationkernel}} results in 
\begin{equation} \label{eq:Delta1} 
\E \Delta_{jk} \, = \, b_{x,0,1}^{-1} \, b_{x,2,1} \, {\gamma_{j,k}^{1,1}}^{'} (0) \, h^2 \, \left\{ 1 + o(1) \right\}. 
\end{equation}
One can also derive the asymptotic behavior of $\var\left( \Delta_{jk} \right)$
\begin{equation} \label{eq:Delta2} 
\begin{aligned}
\var\left( \Delta_{jk} \right) & = n^{-1} (\E K_1)^{-2} \var\left( K_1 \langle \phi_j, X_1 - x \rangle   \langle \phi_k, X_1 - x  \rangle \right) \\ 
& = n^{-1} (\E K_1)^{-2} \, \E \left\{ K_1^2 \, \gamma_{j,k}^{2,2}\left( \| X_1 - x \|^4 \right) \right\} \\ 
& \phantom{=} - \, n^{-1} (\E K_1)^{-2} \, \left(\E \left\{ K_1 \, \gamma_{j,k}^{1,1}\left( \| X_1 - x \|^2 \right) \right\} \right)^2 \\ 
& = O\left( h^4 \{ n \, \pi_x(h) \}^{-1} \right) {\gamma_{j,k}^{2,2}}^{'}(0) \, + \, o\left( \left\{ {\gamma_{j,k}^{1,1}}^{'} (0) \right\}^2 \, h^4 \right),
\end{aligned}
\end{equation}
where the last equality comes from {\sc \prettyref{lem:gammafunction}} and {\sc Corollary~\ref{cor:expectationkernel}}. 
Equations \eqref{eq:Delta1} and \eqref{eq:Delta2} result in 
$$
\Delta_{jk}  \, = \, b_{x,0,1}^{-1} \, b_{x,2,1} \, {\gamma_{j,k}^{1,1}}^{'} (0) \, h^2 \, \left\{ 1 + o_P(1) \right\} \, + \, O_P\left( h^2 \{ n \, \pi_x(h) \}^{-1/2}   \right) \sqrt{ {\gamma_{j,k}^{2,2}}^{'}(0) },
$$
which is the element-wise version of the claimed result. 
\end{proof}

\begin{lem}\label{lem:delta} As soon as (\prettyref{hypo:kernel})--(\prettyref{hypo:sbp1}) are fulfilled, 
\begin{eqnarray*}
\bdelta  &= & b_{x,0,1}^{-1} \, b_{x,1,1} \, \bgamma \, h  \, \left\{ 1 + o_P(1) \right\}\, + \, O_P\left( h \, \{ n \, \pi_x(h) \}^{-1/2}  \right) \btheta,
\end{eqnarray*}
where $\btheta$ is the $J$-dimensional vector such that $\left[ \btheta \right]_j \coloneqq \sqrt{ {\gamma_j^2}^{'} (0) }$.
\end{lem}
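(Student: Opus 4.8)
The plan is to follow verbatim the strategy of the proof of \prettyref{lem:Delta}, establishing the result coordinate by coordinate and then reading off the vector statement. Fix $j \in \{1,\ldots,J\}$ and recall that $\delta_j = (n \E K_1)^{-1}\sum_i \langle \phi_j, X_i - x\rangle K_i$ is a normalized average of iid terms, so it suffices to determine its conditional mean and variance and convert these into the claimed probabilistic orders. Since $K_1 = K(h^{-1}\|X_1-x\|)$ depends on $X_1$ only through $\|X_1-x\|$, conditioning on the norm and invoking the definition of $\gamma_j^1$ gives $\E(K_1 \langle \phi_j, X_1-x\rangle) = \E\{K_1\, \gamma_j^1(\|X_1-x\|)\}$. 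Applying \prettyref{lem:gammafunction}-$(i)$ with a single index, $a=1$ and $p_+=1$, and \prettyref{cor:expectationkernel} for the normalizing factor $\E K_1$, I would obtain
\[
\E \delta_j \, = \, b_{x,0,1}^{-1}\, b_{x,1,1}\, {\gamma_j^1}^{'}(0)\, h\, \{1+o(1)\}.
\]

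Next I would treat the variance $\var(\delta_j) = n^{-1}(\E K_1)^{-2}\var(K_1\langle \phi_j, X_1-x\rangle)$. Expanding $\var(K_1\langle \phi_j, X_1-x\rangle) = \E(K_1^2\langle \phi_j, X_1-x\rangle^2) - \{\E(K_1\langle \phi_j, X_1-x\rangle)\}^2$ and conditioning again on the norm, the leading term equals $\E\{K_1^2\,\gamma_j^2(\|X_1-x\|^2)\}$, which \prettyref{lem:gammafunction}-$(i)$ evaluates with $a=2$ and $p_+=2$. The subtracted square is of order $h^2\pi_x(h)^2\{{\gamma_j^1}^{'}(0)\}^2$, and after multiplication by $n^{-1}(\E K_1)^{-2} = O(\{n\pi_x(h)^2\}^{-1})$ it is dominated by the first term, because $n\pi_x(h)\to\infty$ makes $\{n\pi_x(h)\}^{-1}$ beat $n^{-1}$. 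Exactly as in \eqref{eq:Delta2}, this yields
\[
\var(\delta_j) \, = \, O\!\left(h^2\{n\pi_x(h)\}^{-1}\right){\gamma_j^2}^{'}(0) \, + \, o\!\left(\{{\gamma_j^1}^{'}(0)\}^2 h^2\right).
\]
Combining the two displays through the standard split $\delta_j = \E\delta_j + O_P(\sqrt{\var(\delta_j)})$, the mean supplies the first summand while the square root of the variance supplies $O_P(h\{n\pi_x(h)\}^{-1/2})\sqrt{{\gamma_j^2}^{'}(0)}$, giving the element-wise identity
\[
\delta_j \, = \, b_{x,0,1}^{-1}\, b_{x,1,1}\, {\gamma_j^1}^{'}(0)\, h\,\{1+o_P(1)\} \, + \, O_P\!\left(h\{n\pi_x(h)\}^{-1/2}\right)\sqrt{{\gamma_j^2}^{'}(0)}.
\]
Stacking over $j$ and reading off $\bgamma = [{\gamma_j^1}^{'}(0)]_j$ and $\btheta = [\sqrt{{\gamma_j^2}^{'}(0)}]_j$ then produces the stated vector expression.

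The one point that genuinely deserves care — and which I expect to be the main obstacle — is the passage from the coordinatewise statement to the vector statement carrying \emph{scalar} $o_P(1)$ and $O_P(\cdot)$ factors. Because $J=J_n\to\infty$, one cannot simply aggregate $J$ separate remainders; one must argue that the $o$ and $O$ controls are uniform in $j\le J$ in the sense relevant to how $\bdelta$ is later used (principally inside the quadratic forms $\bdelta\tr\bGamma^{-1}\bdelta$ and $\bdelta\tr\bDelta^{-1}\widetilde{\bdelta}$). As in \prettyref{lem:Delta}, this uniformity is underwritten by the summability bounds of \prettyref{lem:gammafunction}-$(ii)$, which cap the aggregate contribution of $\{{\gamma_j^1}^{'}(0)\}$ and $\{{\gamma_j^2}^{'}(0)\}$ across all coordinates and thereby guarantee that the coordinatewise remainders do not accumulate.
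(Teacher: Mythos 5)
Your proposal is correct and follows essentially the same route as the paper: the paper's own proof of this lemma explicitly "follows the same lines" as that of {\sc \prettyref{lem:Delta}}, computing $\E\delta_j$ via {\sc \prettyref{lem:gammafunction}}-$(i)$ and {\sc Corollary~\ref{cor:expectationkernel}}, bounding $\var(\delta_j)$ by conditioning on the norm, and combining to get the element-wise identity. Your closing remark on uniformity in $j\le J$ is a fair observation, but the paper likewise passes from the coordinatewise statement to the vector one without further comment, relying on the summability bounds of {\sc \prettyref{lem:gammafunction}}-$(ii)$ exactly where the vector $\bdelta$ is later used in quadratic forms.
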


\begin{proof}[Proof of \sc \prettyref{lem:delta}.] This proof is shortened since it follows the same lines as the previous one. From the definition of $\delta_{j}$, 
$\E \delta_{j} \, = \, (\E K_1)^{-1} \, \E \left\{ K_1 \, \gamma_j^1\left( \| X - x \| \right)  \right\}$ and use of {\sc \prettyref{lem:gammafunction}} and {\sc Corollary~\ref{cor:expectationkernel}} results in 
\begin{equation} \label{eq:delta1} 
\E \delta_{j} \, = \,  b_{x,0,1}^{-1} \, b_{x,1,1} \,  {\gamma_j^1}^{'} (0) \, h \, \left\{ 1 + o(1) \right\}. 
\end{equation}
Let us now focus on the variance of the $\delta_j$. 
	\[	
	\begin{aligned}
	\var\left( \delta_{j} \right) & = n^{-1} (\E K_1)^{-2} \var \left( K_1 \langle \phi_j, X_1 - x \rangle  \right) \\
& = n^{-1} (\E K_1)^{-2} \, \E \left\{ K_1^2 \, \gamma_j^2\left( \| X - x \|^2 \right) \right\} \\
& \phantom{=} - \, n^{-1} (\E K_1)^{-2} \, \left(\E \left\{ K_1 \, \gamma_j^1\left( \| X - x \| \right) \right\} \right)^2 \\ 
& = O\left( h^2 \, \{ n \, \pi_x(h) \}^{-1} \right) {\gamma_j^2}^{'}(0) \, + \, o\left( \left\{ {\gamma_j^1}^{'} (0) \right\}^2 \, h^2 \right),
\end{aligned}
\]
the last equality resulting from {\sc \prettyref{lem:gammafunction}} and {\sc Corollary~\ref{cor:expectationkernel}}. By combining this last equation with \prettyref{eq:delta1},
$$
\delta_{j} \, = \, b_{x,0,1}^{-1} \, b_{x,1,1} \, {\gamma_j^1}^{'} (0) \, h \,  \left\{ 1 + o_P(1) \right\} \, + \, O_P\left( h \, \{ n \, \pi_x(h) \}^{-1/2} \right) \sqrt{ {\gamma_j^2}^{'}(0) },
$$
which is the element-wise version of the claimed result. 
\end{proof}

\begin{lem}\label{lem:A0Aj} Under (\prettyref{hypo:kernel})--(\prettyref{hypo:sbp1}), one has
	\begin{enumerate}[label=(\roman*)]
	\item $\ds A_0 \, = \, b_{x,0,1} ^{-1} \, b_{x,1,1} \,  \alpha_{0,x,n}^{bias}  \, h \, \left\{ 1 + o(1) \right\} \, + \, O_P\left( \| \cP_{\cS_J^\perp} m_x' \| \, h \, \{ n \, \pi_x(h) \}^{-1/2}  \right)$, \\ where the sequence  $\alpha_{0,x,n}^{bias}$ is upper bounded by $ \| \cP_{\cS_J^\perp} m_x' \|$,
	\item $\ds A_j \, = \, b_{x,0,1} ^{-1} \, b_{x,2,1} \,  \alpha_{j,x,n}^{bias}  \, h^2 \, \left\{ 1 + o(1) \right\} \, + \, O_P\left( h^2 \sqrt{ \alpha_{j,x,n}^{var} } \{ n \, \pi_x(h) \}^{-1/2}  \right)$  for \linebreak  $j=1,\ldots,J$, where $ \sum_{j=1}^J \left\{ \alpha_{j,x,n}^{bias} \right\}^2 \leq \| \cP_{\cS_J^\perp} m_x' \|^2$ and $ \sum_{j=1}^J \alpha_{j,x,n}^{var} \leq  \| \cP_{\cS_J^\perp} m_x' \|^2 $.
	\end{enumerate}
\end{lem}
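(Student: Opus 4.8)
The plan is to treat both $A_0$ and each $A_j$ as a normalized average of i.i.d.\ terms and to exploit the elementary decomposition $A = \E A + O_P(\sqrt{\var A})$ (Chebyshev's inequality), so that the leading ``bias'' part is carried by the expectation and the $O_P$ remainder by the standard deviation. The preliminary step is to expand the projected derivative in the orthonormal basis, writing $\cP_{\cS_J^\perp} m_x' = \sum_{l > J} c_l\, \phi_l$ with $c_l \coloneqq \langle m_x', \phi_l\rangle$ and $\sum_{l>J} c_l^2 = \| \cP_{\cS_J^\perp} m_x' \|^2$. This converts each inner product $\langle \cP_{\cS_J^\perp} m_x', X_i - x\rangle$ into a series in the $\langle \phi_l, X_i - x\rangle$, to which the $\gamma$-function machinery of {\sc Lemma}~\ref{lem:gammafunction} applies after conditioning on $\|X_i - x\|$.

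For the expectations I would condition on $\|X_1 - x\|$ and interchange the infinite sum with the expectation. For $(i)$, $\E A_0 = (\E K_1)^{-1}\sum_{l>J} c_l\, \E\{K_1\, \gamma_l^1(\|X_1-x\|)\}$, and {\sc Lemma}~\ref{lem:gammafunction}-$(i)$ (with $a=1$, $p_+=1$) together with {\sc Corollary}~\ref{cor:expectationkernel} give the leading term $b_{x,0,1}^{-1} b_{x,1,1}\, h\,\{1+o(1)\}\sum_{l>J} c_l\,{\gamma_l^1}'(0)$; I then set $\alpha_{0,x,n}^{bias} \coloneqq \sum_{l>J} c_l\,{\gamma_l^1}'(0)$, and Cauchy--Schwarz with the bound $\sum_{l\ge 1}\{{\gamma_l^1}'(0)\}^2 \le 1$ from {\sc Lemma}~\ref{lem:gammafunction}-$(ii)$ yields $|\alpha_{0,x,n}^{bias}| \le \| \cP_{\cS_J^\perp} m_x'\|$. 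For $(ii)$ the same computation with the two-index function $\gamma_{l,j}^{1,1}$ ($a=1$, $p_+=2$) produces the factor $h^2$ and the coefficient $\alpha_{j,x,n}^{bias} \coloneqq \sum_{l>J} c_l\,{\gamma_{l,j}^{1,1}}'(0)$; squaring, summing over $j \le J$, and applying Cauchy--Schwarz together with $\sum_{j_1,j_2\ge 1}\{{\gamma_{j_1,j_2}^{1,1}}'(0)\}^2 \le 1$ gives $\sum_{j=1}^J (\alpha_{j,x,n}^{bias})^2 \le \| \cP_{\cS_J^\perp} m_x'\|^2$.

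For the variances I would bound $\var(A) \le n^{-1}(\E K_1)^{-2}\,\E(K_1^2 W_1^2 \cdots)$ with $W_1 \coloneqq \langle \cP_{\cS_J^\perp} m_x', X_1-x\rangle$. For $A_0$ the crude estimate $W_1^2 \le \| \cP_{\cS_J^\perp} m_x'\|^2\,\|X_1-x\|^2$ (Cauchy--Schwarz in $H$) reduces the second moment to $\E(K_1^2\|X_1-x\|^2) = O(h^2\pi_x(h))$ via {\sc Lemma}~\ref{lem:expectation}, and dividing by $(\E K_1)^2 = O(\pi_x(h)^2)$ gives $\sqrt{\var A_0} = O(\| \cP_{\cS_J^\perp} m_x'\|\, h\,\{n\pi_x(h)\}^{-1/2})$, exactly the stated remainder. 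For $A_j$ the extra factor $\langle\phi_j,X_1-x\rangle^2$ forces the genuine three-index function: $\E(K_1^2 W_1^2\langle\phi_j,X_1-x\rangle^2) = \sum_{l,l'>J} c_l c_{l'}\,\E\{K_1^2\gamma_{j,l,l'}^{2,1,1}(\|X_1-x\|^4)\}$, and {\sc Lemma}~\ref{lem:gammafunction}-$(i)$ ($a=2$, $p_+=4$) turns this into $b_{x,4,2}\,h^4\pi_x(h)\{1+o(1)\}\,\alpha_{j,x,n}^{var}$ with $\alpha_{j,x,n}^{var} \coloneqq \sum_{l,l'>J} c_l c_{l'}\,{\gamma_{j,l,l'}^{2,1,1}}'(0)$. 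Non-negativity of $\alpha_{j,x,n}^{var}$ follows because $\sum_{l,l'>J} c_l c_{l'}\gamma_{j,l,l'}^{2,1,1}(t) = \E(\langle\phi_j,X_1-x\rangle^2 W_1^2 \mid \|X_1-x\|^4=t) \ge 0$ vanishes at $t=0$; dividing the resulting variance by $(\E K_1)^2$ gives the claimed $O_P(h^2\sqrt{\alpha_{j,x,n}^{var}}\{n\pi_x(h)\}^{-1/2})$. For the summability bound I would introduce $G(t)\coloneqq\sum_{j\ge1}\sum_{l,l'>J}c_lc_{l'}\gamma_{j,l,l'}^{2,1,1}(t) = \E(\|X_1-x\|^2 W_1^2\mid\|X_1-x\|^4=t)$, note $G(t)\le\| \cP_{\cS_J^\perp} m_x'\|^2\,t$ with $G(0)=0$, hence $G'(0)\le\| \cP_{\cS_J^\perp} m_x'\|^2$, and since every $\alpha_{j,x,n}^{var}\ge0$ conclude $\sum_{j=1}^J \alpha_{j,x,n}^{var}\le G'(0)\le\| \cP_{\cS_J^\perp} m_x'\|^2$.

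I expect the main obstacle to be rigorously justifying the interchanges of the infinite summations with expectation, with the limit defining the derivative at $0$, and the term-by-term differentiation of $G$ --- exactly the delicate point already met in the proof of {\sc Lemma}~\ref{lem:gammafunction}-$(ii)$. The cleanest route is to mimic that argument: use the almost sure convergence of the tail sums $\sum_{l>d}\langle\phi_l,X_1-x\rangle^2 = \|\cP_{\cS_d^\perp}(X_1-x)\|^2 \to 0$, invoke the monotone convergence theorem to obtain uniform convergence of the conditional expectations on a neighbourhood of $0$, and use the dominations by $\|X_1-x\|^2$ (resp.\ $\|X_1-x\|^4$) to push the infinite sums inside, legitimizing term-by-term differentiation at $0$. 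A secondary point is verifying that the $\{1+o(1)\}$ factors supplied by {\sc Lemma}~\ref{lem:gammafunction}-$(i)$ remain uniform under summation; this is controlled by the same uniform-convergence estimate rather than by treating each term in isolation.
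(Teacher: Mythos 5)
Your proposal follows essentially the same route as the paper's proof: expand $\cP_{\cS_J^\perp} m_x'$ in the basis, extract $\alpha_{0,x,n}^{bias}$ and $\alpha_{j,x,n}^{bias}$ via {\sc Lemma~\ref{lem:gammafunction}}-$(i)$ and bound them by Cauchy--Schwarz with {\sc Lemma~\ref{lem:gammafunction}}-$(ii)$, control $\var(A_j)$ through the three-index function $\gamma_{j,k,\ell}^{2,1,1}$, and obtain $\sum_{j=1}^J \alpha_{j,x,n}^{var} \leq \| \cP_{\cS_J^\perp} m_x'\|^2$ by the same interchange-of-limits argument the paper invokes ``similarly to {\sc Lemma~\ref{lem:gammafunction}}-$(ii)$''. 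The only (harmless) deviation is your variance bound for $A_0$, where the crude estimate $\langle \cP_{\cS_J^\perp} m_x', X_1-x\rangle^2 \leq \| \cP_{\cS_J^\perp} m_x'\|^2 \, \|X_1-x\|^2$ combined with {\sc Lemma~\ref{lem:expectation}} replaces the paper's expansion into $\gamma_{j,k}^{1,1}$ terms; both yield the same rate.
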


\begin{proof}[Proof of \sc \prettyref{lem:A0Aj}-$(i)$.] We can write
	\[	
	\begin{aligned}
\E A_0 & = (\E K_1)^{-1} \, \E \left( K_1 \langle \cP_{\cS_J^\perp} m_x', X_1-x \rangle \right) \\ 
& = (\E K_1)^{-1} \, \sum_{j > J} \langle \phi_j, m_x' \rangle \E \left( K_1 \langle\phi_j, X_1-x \rangle \right) \\
& = (\E K_1)^{-1} \, \sum_{j > J} \langle \phi_j, m_x' \rangle \E \left\{ K_1 \,  \gamma_j^1\left( \|X_1-x\| \right) \right\} \\
& = b_{x,0,1} ^{-1} \, b_{x,1,1} \,  \alpha_{0,x,n}^{bias}  \, h \, \left\{ 1 + o(1) \right\},
	\end{aligned}
	\]
where $ \alpha_{0,x,n}^{bias} = \sum_{j > J} \langle \phi_j, m_x' \rangle \,  {\gamma_j^1}^{'}(0)$, the last equality coming from  \mbox{{\sc \prettyref{lem:gammafunction}}-$(i)$}. Moreover, the Cauchy-Schwartz inequality and {\sc \prettyref{lem:gammafunction}}-$(ii)$ imply that $ | \alpha_{0,x,n}^{bias} | \leq  \| \cP_{\cS_J^\perp} m_x' \|$. In the same way, one has
	\[	
	\begin{aligned}
\var (A_0) & = n^{-1} \, (\E K_1)^{-2} \, \var \left\{ K_1  \sum_{j > J} \langle \phi_j, m_x' \rangle  \langle\phi_j, X_1-x \rangle \right\} \\
& \leq n^{-1} \, (\E K_1)^{-2} \sum_{j > J} \sum_{k > J} \langle \phi_j, m_x' \rangle  \langle \phi_k, m_x' \rangle \E \left\{ K_1^2 \gamma_{j,k}^{1,1}\left( \|X_1-x\|^2 \right) \right\} \\
& = O\left( \alpha_{0,x,n}^{var} \, \frac{h^2}{n \, \pi_x(h)} \right),
\end{aligned}
\]
where $ \alpha_{0,x,n}^{var} = \sum_{j > J} \sum_{k > J} \langle \phi_j, m_x' \rangle \langle \phi_k, m_x' \rangle \, {\gamma_{j,k}^{1,1}}^{'}(0)$, the last equality using again {\sc \prettyref{lem:gammafunction}}-$(i)$. Moreover, 
$$
\alpha_{0,x,n}^{var} \, \leq  \, \left\{  \sum_{j > J} \sum_{k > J} \langle \phi_j, m_x' \rangle^2 \langle \phi_k, m_x' \rangle^2 \right\}^{1/2} \left\{  \sum_{j > J} \sum_{k > J}  \left\{ {\gamma_{j,k}^{1,1}}^{'}(0)\right\}^2 \right\}^{1/2},
$$
which results in $ \alpha_{0,x,n}^{var} \leq  \| \cP_{\cS_J^\perp} m_x' \|^2 $ (use again {\sc \prettyref{lem:gammafunction}}-$(ii)$). Then $\var(A_0) \, = \, O\left( \| \cP_{\cS_J^\perp} m_x' \|^2 \, h^2 \{n \, \pi_x(h)\}^{-1} \right)$ and
the claimed result holds.
\end{proof}

\begin{proof}[Proof of \sc \prettyref{lem:A0Aj}-$(ii)$.] We have
\begin{equation} \label{eq2:A0Aj}
\begin{aligned}
\E A_j & = (\E K_1)^{-1} \, \E \left( K_1 \langle \cP_{\cS_J^\perp} m_x', X_1-x \rangle \langle\phi_j, X_1-x \rangle \right)\\ 
& = (\E K_1)^{-1} \, \sum_{k > J} \langle \phi_k, m_x' \rangle \E \left( K_1 \langle\phi_k, X_1-x \rangle \langle\phi_j, X_1-x \rangle \right) \\
& = (\E K_1)^{-1} \, \sum_{k > J} \langle \phi_k, m_x' \rangle \E \left\{ K_1 \,  \gamma_{j,k}^{1,1}\left( \|X_1-x\|^2 \right) \right\} \\ 
& = b_{x,0,1} ^{-1} \, b_{x,2,1} \,  \alpha_{j,x,n}^{bias}  \, h^2 \, \left\{ 1 + o(1) \right\},
\end{aligned}
\end{equation}
where $ \alpha_{j,x,n}^{bias} = \sum_{k > J} \langle \phi_k, m_x' \rangle \,  {\gamma_{j,k}^{1,1}}^{'}(0)$, the last equality coming from the use of  {\sc \prettyref{lem:gammafunction}}-$(i)$. In addition, 
	\[	
	\begin{aligned}
 \sum_{j=1}^J \left\{\alpha_{j,x,n}^{bias}\right\}^2 & = \sum_{j=1}^J  \sum_{k,\ell > J} \langle \phi_k, m_x' \rangle \langle \phi_\ell, m_x' \rangle \, {\gamma_{j,k}^{1,1}}^{'}(0) \, {\gamma_{j,\ell}^{1,1}}^{'}(0) \\
& \leq \left\{ \sum_{k,\ell > J} \langle \phi_k, m_x' \rangle^2 \langle \phi_\ell, m_x' \rangle^2 \right\}^{1/2} \left\{  \sum_{k,\ell > J} \left( \sum_{j=1}^J {\gamma_{j,k}^{1,1}}^{'}(0) \, {\gamma_{j,\ell}^{1,1}}^{'}(0) \right)^2 \right\}^{1/2}\\
& \leq \| \cP_{\cS_J^\perp} m_x' \|^2 \left\{ \left(\sum_{j,k \geq 1} \left[{\gamma_{j,k}^{1,1}}^{'}(0)\right]^2 \right) \left(\sum_{j,\ell \geq 1} \left[ {\gamma_{j,\ell}^{1,1}}^{'}(0) \right]^2 \right)  \right\}^{1/2}\\ 
& \leq \| \cP_{\cS_J^\perp} m_x' \|^2.
\end{aligned}
\]
To derive the asymptotic behavior of the variance of $A_j$, we follow similar arguments
\begin{equation} \label{eq3:A0Aj}
\begin{aligned}
\var (A_j) & = n^{-1} \, (\E K_1)^{-2} \, \var \left\{ K_1  \left( \sum_{k > J} \langle \phi_k, m_x' \rangle  \langle\phi_k, X_1-x \rangle \right) \langle\phi_j, X_1-x \rangle \right\} \\
& \leq n^{-1} \, (\E K_1)^{-2} \sum_{k > J} \sum_{\ell > J} \langle \phi_k, m_x' \rangle  \langle \phi_\ell, m_x' \rangle \E \left\{ K_1^2 \gamma_{j,k,\ell}^{2,1,1}\left( \|X_1-x\|^4 \right) \right\} \\ 
& = O\left( \alpha_{j,x,n}^{var} \, h^4 \, \{n \, \pi_x(h)\}^{-1} \right),
\end{aligned}
\end{equation}
where $ \alpha_{j,x,n}^{var} = \sum_{k > J} \sum_{\ell > J} \langle \phi_k, m_x' \rangle \langle \phi_\ell, m_x' \rangle \, {\gamma_{x,j,k,\ell}^{2,1,1}}^{'}(0)$. Now, by involving arguments similar to those used in {\sc \prettyref{lem:gammafunction}}-$(ii)$ one is able to show $ \sum_{j=1}^J \alpha_{j,x,n}^{var} \leq  \| \cP_{\cS_J^\perp} m_x' \|^2 $. Just combine \eqref{eq2:A0Aj} and \eqref{eq3:A0Aj} to get the claimed result.
\end{proof}

\begin{lem}\label{lem:innerproductupperbound} For any $\bu, \bv \in \R^J$, $\left| \bu\tr  \, \boldsymbol{ \Gamma }^{-1} \, \bv \right| \, \leq \, \lambda_J^{-1} \, \| \bu \|_2 \, \| \bv \|_2$, where  $\lambda_J$ is the smallest eigenvalue of $ \boldsymbol{ \Gamma }$.
\end{lem}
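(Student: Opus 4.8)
The plan is to reduce this to the standard fact that the spectral (operator-$2$) norm of a symmetric positive definite matrix equals its largest eigenvalue, applied to $\boldsymbol{\Gamma}^{-1}$. First I would record that the $(j,k)$-th entry $\left[\boldsymbol{\Gamma}\right]_{jk} = {\gamma_{j,k}^{1,1}}^{'}(0)$ is symmetric in $j$ and $k$, since $\gamma_{j,k}^{1,1}(t)$ is defined as a conditional expectation of $\langle \phi_j, X_1-x\rangle \langle \phi_k, X_1-x\rangle$, which is manifestly symmetric in the two indices; hence $\boldsymbol{\Gamma}$ is a real symmetric matrix. By {\sc \prettyref{lem:gammafunction}}-$(iii)$ it is positive semi-definite, and by hypothesis (\prettyref{hypo:gammafunction}) its smallest eigenvalue $\lambda_J$ is strictly positive, so $\boldsymbol{\Gamma}$ is in fact positive definite and invertible.

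Next I would use the spectral decomposition $\boldsymbol{\Gamma} = \sum_{\ell=1}^J \lambda_{\ell,J}\, \bu_\ell \bu_\ell\tr$ with orthonormal eigenvectors $\bu_\ell$ and eigenvalues $0 < \lambda_J = \lambda_{1,J} \leq \cdots \leq \lambda_{J,J}$. Then $\boldsymbol{\Gamma}^{-1} = \sum_{\ell=1}^J \lambda_{\ell,J}^{-1}\, \bu_\ell \bu_\ell\tr$ is symmetric positive definite, and its eigenvalues are the reciprocals $\lambda_{\ell,J}^{-1}$, the largest of which is $\lambda_J^{-1}$. Consequently the operator norm satisfies $\| \boldsymbol{\Gamma}^{-1} \|_2 = \lambda_J^{-1}$, i.e. $\| \boldsymbol{\Gamma}^{-1} \bv \|_2 \leq \lambda_J^{-1} \| \bv \|_2$ for every $\bv \in \R^J$.

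The conclusion then follows immediately from the Cauchy--Schwarz inequality:
\[
\left| \bu\tr \boldsymbol{\Gamma}^{-1} \bv \right| \, \leq \, \| \bu \|_2 \, \| \boldsymbol{\Gamma}^{-1} \bv \|_2 \, \leq \, \lambda_J^{-1} \, \| \bu \|_2 \, \| \bv \|_2,
\]
which is exactly the claimed bound. There is essentially no serious obstacle here, as this is a routine fact from linear algebra; the only point requiring a moment's care is confirming that $\boldsymbol{\Gamma}$ is symmetric so that the spectral theorem applies and the eigenvalue/operator-norm identification is legitimate. Everything else (positive definiteness and $\lambda_J > 0$) has already been supplied by {\sc \prettyref{lem:gammafunction}} and (\prettyref{hypo:gammafunction}).
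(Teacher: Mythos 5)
Your proof is correct and takes essentially the same route as the paper: both arguments reduce to the fact that the largest eigenvalue of $\boldsymbol{\Gamma}^{-1}$ is $\lambda_J^{-1}$ and then invoke the Cauchy--Schwarz inequality, the only cosmetic difference being that the paper applies Cauchy--Schwarz in the $\boldsymbol{\Gamma}^{-1}$-weighted inner product and bounds the resulting Rayleigh quotients, whereas you apply it in the Euclidean inner product together with the operator-norm identity $\|\boldsymbol{\Gamma}^{-1}\|_2 = \lambda_J^{-1}$. Your explicit remark that $\boldsymbol{\Gamma}$ is symmetric (so the spectral theorem applies) is a point the paper leaves implicit.
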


\begin{proof} Let us remark that $\lambda_J >0$ according to (\prettyref{hypo:gammafunction}). This result involves the Cauchy-Schwartz inequality and the Rayleigh quotient of the inverse of the $J\times J$ matrix $ \boldsymbol{ \Gamma } $:
\begin{eqnarray*}
\left| \bu\tr \, \boldsymbol{ \Gamma }^{-1} \, \bv \right| & \leq &  \| \bu \|_{  \boldsymbol{ \Gamma }^{-1} } \, \| \bv \|_{  \boldsymbol{ \Gamma }^{-1} }\\ 
& \leq & \| \bu \|_2 \, \| \bv \|_2 \, \left( \frac{\bu\tr \, \boldsymbol{ \Gamma }^{-1} \, \bu}{\bu\tr \bu} \right)^{1/2}  \, \left( \frac{\bv\tr \, \boldsymbol{ \Gamma }^{-1} \, \bv}{\bv\tr \bv} \right)^{1/2} \\
& \leq & \| \bu \|_2 \, \| \bv \|_2 \, R_1^{1/2} \, R_2^{1/2},
\end{eqnarray*}
where $R_1$ and $R_2$ are the Rayleigh quotients of $ \boldsymbol{ \Gamma }^{-1}$. Let $\lambda_J$ be the smallest eigenvalue of $\boldsymbol{ \Gamma }$. Then $\lambda_J^{-1}$ is the greatest eigenvalue of $ \boldsymbol{ \Gamma }^{-1}$ with $R_1 \leq \lambda_J^{-1}$ and $R_2 \leq \lambda_J^{-1}$, and the claimed result holds.
\end{proof}

\begin{lem}\label{lem:B0Bj} Under (\prettyref{hypo:kernel})--(\prettyref{hypo:sbp1}), one has
	\begin{enumerate}[label=(\roman*)]
	\item $\ds B_0 \, = \, b_{x,0,1} ^{-1}  b_{x,2,1}   \beta_{0,x}^{bias}  \, h^2   \left\{ 1 + o(1) \right\} \, + \, O_P\left( h^2 \, \{n \, \pi_x(h)\}^{-1/2}   \right)$ with $\beta_{0,x}^{bias} = O(1)$,
	\item $\ds B_j \, = \, b_{x,0,1} ^{-1} \, b_{x,3,1} \,  \beta_{j,x}^{bias}  \, h^3 \, \left\{ 1 + o(1) \right\} \, + \, \sqrt{\beta_{j,x}^{var}} \, O_P\left( h^3 \, \{ n \, \pi_x(h) \}^{-1/2}  \right)$  for all $j=1,\ldots,J$, where $\sum_{j=1}^J \left\{\beta_{j,x}^{bias}\right\}^2 = O(1)$ and $\sum_{j=1}^J \beta_{j,x}^{var} = O(1)$.
	\end{enumerate}
\end{lem}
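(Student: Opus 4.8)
The plan is to reduce everything to the quadratic form $R_{x,x,i} = \langle m_x''(X_i-x), X_i-x \rangle$ expanded in the orthonormal basis. Writing $c_{k\ell} \coloneqq \langle m_x'' \phi_\ell, \phi_k \rangle$, we have $R_{x,x,i} = \sum_{k,\ell \geq 1} c_{k\ell} \langle \phi_k, X_i-x \rangle \langle \phi_\ell, X_i-x \rangle$, and the Hilbert--Schmidt property in (\prettyref{hypo:taylor}) guarantees $\sum_{k,\ell \geq 1} c_{k\ell}^2 < \infty$. Since $K_1$ is a function of $\| X_1 - x \|$ alone, every expectation below splits off via the tower rule into a $\gamma$-function conditional moment, so that {\sc Lemma}~\ref{lem:gammafunction}-$(i)$ applies termwise. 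The guiding principle is that each coefficient sum that arises will be controlled by Cauchy--Schwarz, pairing the square-summability of $(c_{k\ell})$ with the summability bounds for the $\gamma$-derivatives from {\sc Lemma}~\ref{lem:gammafunction}-$(ii)$.

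For part $(i)$, I would first compute $\E B_0 = (\E K_1)^{-1} \E\{ K_1 R_{x,x,1} \}$. Interchanging the (absolutely convergent) sum with the expectation, each term is $c_{k\ell}\,\E\{ K_1 \gamma_{k,\ell}^{1,1}(\| X_1 - x\|^2) \}$, to which {\sc Lemma}~\ref{lem:gammafunction}-$(i)$ with $a=1$, $p_+=2$ applies; using {\sc Corollary}~\ref{cor:expectationkernel} this yields $\E B_0 = b_{x,0,1}^{-1} b_{x,2,1}\, h^2\, \beta_{0,x}^{bias}\{1+o(1)\}$ with $\beta_{0,x}^{bias} = \sum_{k,\ell} c_{k\ell}\, {\gamma_{k,\ell}^{1,1}}^{'}(0)$, and Cauchy--Schwarz together with $\sum_{k,\ell}\{{\gamma_{k,\ell}^{1,1}}^{'}(0)\}^2 \leq 1$ gives $|\beta_{0,x}^{bias}| \leq (\sum_{k,\ell} c_{k\ell}^2)^{1/2} = O(1)$. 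For the variance I would bound $\var(B_0) \leq n^{-1}(\E K_1)^{-2}\E\{ K_1^2 R_{x,x,1}^2 \}$; since $R_{x,x,1}^2$ is a quadruple sum of degree-four products, the relevant object is $\gamma_{k,\ell,k',\ell'}^{1,1,1,1}$ (here $a=2$, $p_+=4$), so {\sc Lemma}~\ref{lem:gammafunction}-$(i)$ produces the factor $h^4 \pi_x(h)$, and the quadruple coefficient sum is $O(1)$ by Cauchy--Schwarz against $\sum \{{\gamma^{1,1,1,1}}^{'}(0)\}^2 \leq 1$. Hence $\var(B_0) = O(h^4/\{n\pi_x(h)\})$, and $B_0 = \E B_0 + O_P(\{\var(B_0)\}^{1/2})$ is exactly the claimed expansion.

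Part $(ii)$ is the same computation carrying one extra factor $\langle \phi_j, X_1 - x \rangle$. The mean now produces a triple product, handled by $\gamma_{j,k,\ell}^{1,1,1}$ ($a=1$, $p_+=3$), giving $\E B_j = b_{x,0,1}^{-1} b_{x,3,1}\, h^3\, \beta_{j,x}^{bias}\{1+o(1)\}$ with $\beta_{j,x}^{bias} = \sum_{k,\ell} c_{k\ell}\, {\gamma_{j,k,\ell}^{1,1,1}}^{'}(0)$; the variance bound uses the degree-six $\gamma_{j,k,\ell,k',\ell'}^{2,1,1,1,1}$ ($a=2$, $p_+=6$), producing $\var(B_j) = O(\beta_{j,x}^{var}\, h^6/\{n\pi_x(h)\})$ with $\beta_{j,x}^{var} = \sum_{k,\ell,k',\ell'} c_{k\ell} c_{k'\ell'}\, {\gamma_{j,k,\ell,k',\ell'}^{2,1,1,1,1}}^{'}(0)$, whose square root gives the second claimed term. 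The summed bounds $\sum_{j=1}^J \{\beta_{j,x}^{bias}\}^2 = O(1)$ and $\sum_{j=1}^J \beta_{j,x}^{var} = O(1)$ then follow by Cauchy--Schwarz combined with {\sc Lemma}~\ref{lem:gammafunction}-$(ii)$, exactly parallel to the treatment of $\sum_j \{\alpha_{j,x,n}^{bias}\}^2$ and $\sum_j \alpha_{j,x,n}^{var}$ in {\sc Lemma}~\ref{lem:A0Aj}-$(ii)$; for $\beta_{j,x}^{var}$ the power-$2$ index $j$ is summed first, which by the identity $\sum_{j} {\gamma_{j,k,\ell,k',\ell'}^{2,1,1,1,1}}^{'}(0) = {\gamma_{k,\ell,k',\ell'}^{1,1,1,1}}^{'}(0)$ (obtained from $\sum_j \langle \phi_j, X_1-x\rangle^2 = \|X_1-x\|^2$) reduces the bound to the degree-four estimate of {\sc Lemma}~\ref{lem:gammafunction}-$(ii)$.

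The main obstacle is legitimizing the termwise treatment of these infinite expansions. One must justify interchanging the sums over $k,\ell$ (and their primed copies) with the expectation and, more delicately, with the asymptotics hidden inside the $\{1+o(1)\}$ factors of {\sc Lemma}~\ref{lem:gammafunction}-$(i)$, so that the remainders remain genuinely negligible rather than accumulating across the infinite index set. This is precisely the role played by the square-summability of $(c_{k\ell})$ together with the uniform-convergence argument (monotone convergence and the exchange-of-sum-and-expectation result of \cite{Bour06}) already developed in {\sc Lemma}~\ref{lem:gammafunction}-$(ii)$; adapting that uniformity to the degree-four and degree-six coefficient sums is the only genuinely delicate step, the rest being bookkeeping of the constants $b_{x,p,q}$.
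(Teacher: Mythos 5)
Your proposal follows essentially the same route as the paper's proof: the same expansion of $R_{x,x,1}$ in the coefficients $\langle m_x''\phi_k,\phi_\ell\rangle$, the same reduction of each moment to a $\gamma$-function via {\sc Lemma}~\ref{lem:gammafunction}-$(i)$ (with the identical degree counting $h^2,h^4,h^3,h^6$), the same Cauchy--Schwarz pairing against {\sc Lemma}~\ref{lem:gammafunction}-$(ii)$, and the same sum-over-$j$-first trick using $\sum_j\langle\phi_j,X_1-x\rangle^2=\|X_1-x\|^2$ to reduce $\sum_{j}\beta_{j,x}^{var}$ to the degree-four bound. The interchange issues you flag at the end are handled in the paper exactly as you suggest, via the monotone-convergence and Bourbaki arguments already set up in {\sc Lemma}~\ref{lem:gammafunction}-$(ii)$.
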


\begin{proof}[Proof of \sc \prettyref{lem:B0Bj}-$(i)$.] A standard expansion of the linear operator $m_x''$ results in
\[
\begin{aligned}
\E B_0 & = (\E K_1)^{-1} \, \sum_{j \geq 1} \sum_{k \geq 1} \langle m_x'' \phi_j, \phi_k \rangle \E \left( K_1 \langle\phi_j, X_1-x \rangle \langle\phi_k, X_1-x \rangle \right) \\
& = (\E K_1)^{-1} \, \sum_{j \geq 1} \sum_{k \geq 1}  \langle m_x'' \phi_j, \phi_k \rangle \E \left\{ K_1 \,  \gamma_{j,k}^{1,1}\left( \|X_1-x\|^2 \right) \right\} \\
& = b_{x,0,1} ^{-1} \, b_{x,2,1} \,  \beta_{0,x}^{bias}  \, h^2 \, \left\{ 1 + o(1) \right\},
\end{aligned}
\]
where $ \beta_{0,x}^{bias} \coloneqq\sum_{j \geq 1} \sum_{k \geq 1}  \langle m_x'' \phi_j, \phi_k \rangle \,  {\gamma_{j,k}^{1,1}}^{'}(0)$, the last equality by \mbox{{\sc \prettyref{lem:gammafunction}}-$(i)$}. Moreover, the Cauchy-Schwartz inequality combined with (\prettyref{hypo:taylor}) and {\sc \prettyref{lem:gammafunction}}-$(ii)$ imply that $\beta_{0,x}^{bias} = O(1)$. In the same way, one has
\[
\begin{aligned}
\var (B_0) & = n^{-1} \, (\E K_1)^{-2} \, \var \left\{ K_1  \langle m_x''(X_1 - x), X_1 - x \rangle  \right\} \\
& \leq n^{-1} \, (\E K_1)^{-2} \sum_{j_1,\ldots,j_4 \geq 1} \langle m_x'' \phi_{j_1}, \phi_{j_2} \rangle  \langle m_x'' \phi_{j_3}, \phi_{j_4} \rangle \E \left\{ K_1^2 \gamma_{j_1,\ldots,j_4}^{1,1,1,1}\left( \|X_1-x\|^4 \right) \right\} \\
& = O\left( \beta_{0,x}^{var} \, h^4 \, \{n \, \pi_x(h)\}^{-1} \right),
\end{aligned}
\]
where $ \beta_{0,x}^{var} \coloneqq \sum_{j_1,\ldots,j_4 \geq 1} \langle m_x'' \phi_{j_1}, \phi_{j_2} \rangle  \langle m_x'' \phi_{j_3}, \phi_{j_4} \rangle \, {\gamma_{j_1,\ldots,j_4}^{1,1,1,1}}^{'}(0)$, the last equality using again {\sc \prettyref{lem:gammafunction}}-$(i)$. Moreover, 
$$  
\beta_{0,x}^{var} \, \leq  \, \left\{  \sum_{j_1,\ldots,j_4 \geq 1} \langle m_x'' \phi_{j_1}, \phi_{j_2} \rangle^2  \langle m_x'' \phi_{j_3}, \phi_{j_4} \rangle^2 \right\}^{1/2} \left\{   \sum_{j_1,\ldots,j_4 \geq 1}  \left\{ {\gamma_{j_1,\ldots,j_4}^{1,1,1,1}}^{'}(0)\right\}^2 \right\}^{1/2},
$$
which gives that $\beta_{0,x}^{var}$ is a finite quantity (use again (\prettyref{hypo:taylor}) and {\sc \prettyref{lem:gammafunction}}-$(ii)$). Then $\var(B_0) \, = \, O\left( h^4 \{n \, \pi_x(h)\}^{-1} \right)$ and
the claimed result holds.
\end{proof}

\begin{proof}[Proof of \sc \prettyref{lem:B0Bj}-$(ii)$.] Similarly, 

\[
\begin{aligned}
\E B_j & = (\E K_1)^{-1} \, \sum_{k \geq 1} \sum_{\ell \geq 1} \langle m_x'' \phi_k, \phi_\ell \rangle \E \left( K_1 \langle \phi_j, X_1-x \rangle \langle \phi_k, X_1-x \rangle \langle \phi_\ell, X_1-x \rangle \right) \\
& = (\E K_1)^{-1} \, \sum_{k \geq 1} \sum_{\ell \geq 1}  \langle m_x'' \phi_k, \phi_\ell \rangle \E \left\{ K_1 \,  \gamma_{j,k,\ell}^{1,1,1}\left( \|X_1-x\|^3 \right) \right\} \\
& = b_{x,0,1} ^{-1} \, b_{x,3,1} \,  \beta_{j,x}^{bias}  \, h^3 \, \left\{ 1 + o(1) \right\},
\end{aligned}
\]
where $ \beta_{j,x}^{bias} \coloneqq\sum_{k \geq 1} \sum_{\ell \geq 1}  \langle m_x'' \phi_k, \phi_\ell \rangle \,  {\gamma_{j,k,\ell}^{1,1,1}}^{'}(0)$. Moreover, 
\[
\begin{aligned}
\sum_{j=1}^J \left\{ \beta_{j,x}^{bias} \right\}^2 & = \sum_{j=1}^J  \sum_{k,\ell,p,q \geq 1} \langle m_x'' \phi_k, \phi_\ell \rangle \langle m_x'' \phi_p, \phi_q \rangle {\gamma_{j,k,\ell}^{1,1,1}}^{'}(0) \, {\gamma_{j,p,q}^{1,1,1}}^{'}(0) \\
& \leq \left\{ \sum_{k,\ell,p,q \geq 1} \langle m_x'' \phi_k, \phi_\ell \rangle^2 \langle m_x'' \phi_p, \phi_q \rangle^2 \right\}^{1/2} \\
& \phantom{=} \times  \left\{  \sum_{k,\ell,p,q \geq 1} \left( \sum_{j=1}^J {\gamma_{j,k,\ell}^{1,1,1}}^{'}(0) \, {\gamma_{j,p,q}^{1,1,1}}^{'}(0) \right)^2 \right\}^{1/2}\\
& \leq \| m_x'' \|_{HS}^2 \left\{ \left(\sum_{j,k,\ell \geq 1} \left[{\gamma_{j,k,\ell}^{1,1,1}}^{'}(0)\right]^2 \right) \left(\sum_{j,p,q \geq 1} \left[ {\gamma_{j,p,q}^{1,1,1}}^{'}(0) \right]^2 \right)  \right\}^{1/2}\\ 
& \leq \| m_x'' \|_{HS}^2,
\end{aligned}
\]
where $\left\Vert \cdot \right\Vert_{HS}$ denotes the Hilbert-Schmidt norm of an operator. So, $\sum_{j=1}^J \left\{ \beta_{j,x}^{bias} \right\}^2 = O(1)$. Let us  now focus on the variance of $B_j$
\[
\begin{aligned}
\var (B_j) & = n^{-1} \, (\E K_1)^{-2} \, \var \left\{ K_1  \langle m_x''(X_1 - x), X_1 - x \rangle \langle \phi_j, X_1-x \rangle \right\} \\
& \leq n^{-1} \, (\E K_1)^{-2} \sum_{k,\ell,p,q \geq 1} \langle m_x'' \phi_k, \phi_\ell \rangle  \langle m_x'' \phi_p, \phi_q \rangle \E \left\{ K_1^2 \gamma_{j,k,\ell,p,q}^{2,1,1,1,1}\left( \|X_1-x\|^6 \right) \right\} \\
& = O\left( \beta_{j,x}^{var} \, \frac{h^6}{n \, \pi_x(h)} \right),
\end{aligned}
\]
where $ \beta_{j,x}^{var} \coloneqq \sum_{k,\ell,p,q \geq 1} \langle m_x'' \phi_k, \phi_\ell \rangle  \langle m_x'' \phi_p, \phi_q \rangle \, {\gamma_{j,k,\ell,p,q}^{2,1,1,1,1}}^{'}(0)$, the last equality using again {\sc \prettyref{lem:gammafunction}}-$(i)$. Using similar arguments as those involved to show the second assertion of {\sc \prettyref{lem:gammafunction}}-$(ii)$,
\[
\begin{aligned}
\sum_{j=1}^J \beta_{j,x}^{var} & \leq \sum_{j,k,\ell,p,q \geq 1} \langle m_x'' \phi_k, \phi_\ell \rangle  \langle m_x'' \phi_p, \phi_q \rangle \left\{ \lim_{t \rightarrow 0} \, t^{-1} \, \gamma_{j,k,\ell,p,q}^{2,1,1,1,1}(t) \right\} \\ 
& = \sum_{k,\ell,p,q \geq 1} \langle m_x'' \phi_k, \phi_\ell \rangle  \langle m_x'' \phi_p, \phi_q \rangle \left\{ \lim_{t \rightarrow 0} \, t^{-2/3} \, \gamma_{k,\ell,p,q}^{1,1,1,1}(t) \right\}\\ 
& \leq \| m_x'' \|_{HS}^2 \lim_{t \rightarrow 0} \, t^{-2/3} \, \left\{  \sum_{k,\ell,p,q \geq 1}   \left[\gamma_{k,\ell,p,q}^{1,1,1,1}(t)\right]^2 \right\}^{1/2}.
\end{aligned}
\]
Consequently, 
\[
\begin{aligned}
\sum_{j=1}^J \beta_{j,x}^{var} 
& \leq \| m_x'' \|_{HS}^2 \lim_{t \rightarrow 0} \, t^{-2/3} \left\{ \sum_{k,\ell,p,q \geq 1} \gamma_{k,\ell,p,q}^{2,2,2,2}\left( t^{4/3} \right)  
\right\}^{1/2} \leq \| m_x'' \|_{HS}^2,
\end{aligned}
\]
where, of course, $\gamma_{k,\ell,p,q}^{2,2,2,2}\left( t^{4/3} \right)$ is the same as 
	\[	\E\left(\langle \phi_k, X_1 - x \rangle^2  \langle \phi_\ell, X_1 - x \rangle^2 \langle \phi_p, X_1 - x \rangle^2  \langle \phi_q, X_1 - x \rangle^2 | \|X_1- x\|^6 = t  \right).	\]
\end{proof}

\section{Practical aspects} \label{app:C}

\subsection{{\em fllr} R package}
An efficient, fully documented \texttt{R} implementation of all the considered estimating procedures, including the automated selection of all their parameters, is freely available as a part of \texttt{R} package \texttt{fllr}. The package can be downloaded from \url{https://bitbucket.org/StanislavNagy/fllr}. Using the procedures from \texttt{fllr} and the source codes accompanying the present manuscript, all the results from the main paper, and the simulation studies presented in the supplementary material, can be replicated in full.

\subsection{Bandwidth selection}
{\sc Functional derivative: bootstrap bandwidth selection.} The bootstrap procedure introduces additional randomness into our local linear estimation method. A natural question raises: is the proposed bandwidth selection stable? In other words, if the functional derivatives are estimated several times on the same dataset, are the resulting bandwidths similar? A related issue concerns the number $B$ of bootstrap repetitions set by the user: is the procedure sensitive to this parameter, and how to choose it? To address both these concerns, one dataset is simulated according to (M1). For different $B$ and learning sample sizes $n$, our estimating algorithm is launched 100 times. Table~\ref{tab:stability} displays the mean and standard deviation (in brackets) of $h_{deriv}$ and $ORMSEP_{deriv}$ in (a) the worst case, and (b) the most favorable case in the simulation study from the main document. 

\begin{table}[ht]
\centering
\resizebox{\textwidth}{!}{\begin{tabular}{cc}
(a) $n = 100$ and $nsr = 0.4$ & (b)  $n = 500$ and $nsr = 0.05$\\
\begin{tabular}{c|cc}
$B$  & $h_{deriv}$ & $ORMSEP_{deriv}$ \\ 
  \hline
	50 & 33.070 \,(7.395) & 0.218 \,(0.050) \\ 
  100 & 35.710 \,(9.401) & 0.239 \,(0.075) \\ 
  500 & 34.820 \,(6.967) & 0.225 \,(0.048) \\ 
  1000 & 34.310 \,(6.872) & 0.217 \,(0.046) \\ 
  \end{tabular}
&
\begin{tabular}{c|cc}
	$B$  & $h_{deriv}$ & $ORMSEP_{deriv}$ \\ 
  \hline
	50 & 33.680 \,(4.537) & 0.043 \,(0.003) \\ 
  100 & 33.800 \,(4.837) & 0.043 \,(0.003) \\ 
  500 & 32.720 \,(3.893) & 0.044 \,(0.004) \\ 
  1000 & 33.760 \,(4.209) & 0.044 \,(0.003) \\ 
  \end{tabular}
\end{tabular}}
\caption{Stability of the bootstrap bandwidth selection.}\label{tab:stability}
\end{table}

The variability of the selected bandwidth is smaller in the most favorable case as expected. The bootstrap bandwidth selection remains stable in both cases despite the additional randomness introduced. In both situations the variability of $ORMSEP_{deriv}$ is close to 0. The default value ($B=100$) used in our procedure seems to be large enough to ensure stability as well as accuracy for predictions.

\bigskip

{\sc Raw results.} Table~\ref{tab:der_bandwidth_choice}  gives the mean and the standard deviation (in brackets) of the computed bandwidths over 100 runs in 27 situations (9 learning sample sizes $\times$ 3 noise-to-signal ratios) and corresponds to Figure~\ref{fig:der_bandwidth_choice} in the main document. 

\begin{table}[ht]
\centering
\begin{tabular}{ccccc}
   \hline
    $n$ & $nsr$      & \multicolumn{1}{l}{         $h_{reg}$} & \multicolumn{1}{l}{       $h_{deriv}$} & \multicolumn{1}{l}{ $h_{deriv}^{oracle}$} \\ 
   \hline
100 & 0.05 & 11.150 \,(1.755) & 15.060 \,(5.007) &    18.610 \,(2.357) \\ 
      & 0.2  & 14.010 \,(2.560) & 26.560 \,(6.191) &    22.990 \,(2.834) \\ 
      & 0.4  & 17.910 \,(4.147) & 35.700 \,(6.882) &    26.490 \,(3.680) \\ 
   \hline
150 & 0.05 & 11.330 \,(1.551) & 17.240 \,(4.490) &    21.810 \,(2.485) \\ 
      & 0.2  & 14.170 \,(2.396) & 30.810 \,(5.144) &    28.110 \,(2.998) \\ 
      & 0.4  & 17.160 \,(3.369) & 37.500 \,(6.308) &    33.540 \,(3.471) \\ 
   \hline
200 & 0.05 & 11.560 \,(1.321) & 19.460 \,(4.988) &    24.800 \,(2.361) \\ 
      & 0.2  & 14.960 \,(2.247) & 34.840 \,(5.158) &    33.400 \,(3.260) \\ 
      & 0.4  & 18.600 \,(2.878) & 41.500 \,(5.402) &    39.440 \,(4.246) \\ 
   \hline
250 & 0.05 & 11.460 \,(1.234) & 21.840 \,(5.502) &    27.620 \,(2.440) \\ 
      & 0.2  & 15.760 \,(2.151) & 37.260 \,(5.010) &    38.080 \,(3.617) \\ 
      & 0.4  & 18.920 \,(3.080) & 45.280 \,(5.760) &    45.320 \,(4.387) \\ 
   \hline
300 & 0.05 & 11.870 \,(1.368) & 24.200 \,(4.767) &    30.590 \,(2.575) \\ 
      & 0.2  & 16.490 \,(1.957) & 40.880 \,(4.326) &    41.810 \,(3.786) \\ 
      & 0.4  & 19.820 \,(2.823) & 49.220 \,(4.683) &    51.650 \,(4.961) \\ 
   \hline
350 & 0.05 & 12.140 \,(1.524) & 26.360 \,(5.262) &    32.930 \,(2.786) \\ 
      & 0.2  & 16.220 \,(2.116) & 42.980 \,(5.067) &    46.910 \,(3.728) \\ 
      & 0.4  & 20.870 \,(2.707) & 52.730 \,(4.750) &    56.090 \,(5.071) \\ 
   \hline
400 & 0.05 & 12.360 \,(1.150) & 28.080 \,(4.618) &    35.320 \,(3.011) \\ 
      & 0.2  & 17.360 \,(2.067) & 46.680 \,(4.759) &    49.640 \,(3.509) \\ 
      & 0.4  & 21.360 \,(3.227) & 55.960 \,(5.895) &    60.640 \,(4.394) \\ 
   \hline
450 & 0.05 & 13.000 \,(0.000) & 31.240 \,(5.190) &    37.600 \,(2.566) \\ 
      & 0.2  & 17.760 \,(1.944) & 48.320 \,(4.479) &    53.400 \,(4.566) \\ 
      & 0.4  & 22.920 \,(3.240) & 59.880 \,(6.125) &    65.000 \,(5.714) \\ 
   \hline
500 & 0.05 & 13.520 \,(1.306) & 33.840 \,(4.720) &    40.200 \,(3.028) \\ 
      & 0.2  & 18.120 \,(2.006) & 51.120 \,(4.959) &    56.800 \,(4.158) \\ 
      & 0.4  & 23.280 \,(3.358) & 61.800 \,(5.222) &    69.920 \,(6.016) \\ 
   \hline
\end{tabular}
\caption{Estimated bandwidths.} 
\label{tab:der_bandwidth_choice}
\end{table}

\subsection{Robustness to the model complexity}
We now provide several additional tables of results to assess the robustness of our estimating procedure with respect to the complexity of the simulated model (see Section~\ref{subsec:ParameterSelection} in the main paper). Similarly to model (M2), $Y \coloneqq \sum_{j=1}^J \exp(-U_j^2) + \varepsilon$ with $X \coloneqq  \sum_{j=1}^J U_j \, \phi_j \, + \, \eta$ and where $\eta  \coloneqq  \sum_{j=J+1}^{2J} V_j \, \phi_j (t)$ is a structural perturbation acting on the functional predictor $X$. Variables $U_j$ (resp. $V_j$) are iid uniform on $[-1,\, 1]$ (resp. $[-b,\, b]$). Again, the structural perturbation is controlled by the ratio $\rho  \coloneqq b^2 / (1 + b^2)$. The noise-to-signal ratio ($nsr$) of the regression model is set to 0.05 and 0.4.  Tables~\ref{tab:dimension_choice_J2_nsr005}--\ref{tab:ormsep_J4_nsr04} display, for $J$ respectively set to 2, 3 and 4 ($J = 4$ only when $nsr=0.4$), i) the number of times, out of 100, that the dimension is correctly selected (cf. Table~\ref{tab:dimension_choice} in the main document), and ii) the corresponding $ORMSEP_{reg}$ and $ORMSEP_{deriv}$ averaged over 100 runs with standard deviation in brackets (cf.~Table~\ref{tab:ormsep_J4_nsr005} in the main document).

\clearpage

\begin{table}[t]
\centering
\caption{Number of times, out of $100$, that the dimension is correctly selected with $J=2$ and $nsr=0.05$.} 
\label{tab:dimension_choice_J2_nsr005}
\begin{tabular}{c|cccc}
$n$  & $\rho = 0.05$ & $\rho = 0.1$ & $\rho = 0.2$ & $\rho = 0.4$ \\ 
  \hline
100 & 99 & 100 & 100 & 99 \\ 
  150 & 100 & 98 & 100 & 99 \\ 
  200 & 100 & 100 & 100 & 100 \\ 
  250 & 100 & 100 & 100 & 99 \\ 
  300 & 100 & 100 & 100 & 100 \\ 
  350 & 100 & 100 & 100 & 99 \\ 
  400 & 100 & 100 & 100 & 100 \\ 
  450 & 100 & 100 & 100 & 100 \\ 
  500 & 100 & 99 & 100 & 100 \\ 
  \end{tabular}
\end{table}

\begin{table}[b]
\centering
\begin{tabular}{cc|cccc}
                                                                                  & $n$            & \multicolumn{1}{l}{   $\rho = 0.05$} & \multicolumn{1}{l}{    $\rho = 0.1$} & \multicolumn{1}{l}{    $\rho = 0.2$} & \multicolumn{1}{l}{    $\rho = 0.4$} \\ 
   \hline
\parbox[t]{2mm}{\multirow{9}{*}{\rotatebox[origin=c]{90}{$ORMSEP_{reg}$}}}   & $100$ & 0.031 \,(0.008) & 0.038 \,(0.009) & 0.059 \,(0.012) & 0.116 \,(0.021) \\ 
                                                                                  & $150$ & 0.021 \,(0.006) & 0.029 \,(0.006) & 0.040 \,(0.006) & 0.080 \,(0.014) \\ 
                                                                                  & $200$ & 0.017 \,(0.004) & 0.023 \,(0.004) & 0.035 \,(0.006) & 0.066 \,(0.010) \\ 
                                                                                  & $250$ & 0.014 \,(0.003) & 0.020 \,(0.005) & 0.029 \,(0.005) & 0.056 \,(0.008) \\ 
                                                                                  & $300$ & 0.012 \,(0.003) & 0.017 \,(0.003) & 0.026 \,(0.004) & 0.047 \,(0.006) \\ 
                                                                                  & $350$ & 0.011 \,(0.003) & 0.015 \,(0.003) & 0.023 \,(0.004) & 0.043 \,(0.006) \\ 
                                                                                  & $400$ & 0.010 \,(0.002) & 0.014 \,(0.003) & 0.021 \,(0.004) & 0.038 \,(0.005) \\ 
                                                                                  & $450$ & 0.009 \,(0.002) & 0.013 \,(0.002) & 0.020 \,(0.003) & 0.035 \,(0.005) \\ 
                                                                                  & $500$ & 0.009 \,(0.002) & 0.012 \,(0.002) & 0.019 \,(0.003) & 0.033 \,(0.004) \\ 
   \hline
\parbox[t]{2mm}{\multirow{9}{*}{\rotatebox[origin=c]{90}{$ORMSEP_{deriv}$}}} & $100$ & 0.033 \,(0.011) & 0.037 \,(0.008) & 0.050 \,(0.010) & 0.108 \,(0.023) \\ 
                                                                                  & $150$ & 0.024 \,(0.005) & 0.028 \,(0.006) & 0.037 \,(0.007) & 0.080 \,(0.017) \\ 
                                                                                  & $200$ & 0.020 \,(0.004) & 0.023 \,(0.004) & 0.031 \,(0.005) & 0.067 \,(0.013) \\ 
                                                                                  & $250$ & 0.018 \,(0.003) & 0.020 \,(0.003) & 0.027 \,(0.004) & 0.063 \,(0.015) \\ 
                                                                                  & $300$ & 0.016 \,(0.003) & 0.018 \,(0.003) & 0.024 \,(0.003) & 0.052 \,(0.013) \\ 
                                                                                  & $350$ & 0.015 \,(0.004) & 0.016 \,(0.003) & 0.022 \,(0.003) & 0.050 \,(0.014) \\ 
                                                                                  & $400$ & 0.014 \,(0.002) & 0.016 \,(0.003) & 0.021 \,(0.003) & 0.045 \,(0.010) \\ 
                                                                                  & $450$ & 0.013 \,(0.003) & 0.015 \,(0.003) & 0.020 \,(0.002) & 0.042 \,(0.011) \\ 
                                                                                  & $500$ & 0.012 \,(0.002) & 0.014 \,(0.002) & 0.019 \,(0.002) & 0.039 \,(0.009) \\ 
  \end{tabular}
\caption{Average and standard deviation (in brackets) of $ORMSEP$ with $J=2$ and $nsr=0.05$.} 
\label{tab:ormsep_J2_nsr005}
\end{table}

\begin{table}[htpb]
\centering
\caption{Number of times, out of $100$, that the dimension is correctly selected with $J=3$ and $nsr=0.05$.} 
\label{tab:dimension_choice_J3_nsr005}
\begin{tabular}{c|cccc}
$n$  & $\rho = 0.05$ & $\rho = 0.1$ & $\rho = 0.2$ & $\rho = 0.4$ \\ 
  \hline
100 & 96 & 100 & 91 & 82 \\ 
  150 & 100 & 98 & 100 & 90 \\ 
  200 & 100 & 100 & 100 & 91 \\ 
  250 & 99 & 100 & 100 & 99 \\ 
  300 & 99 & 100 & 100 & 100 \\ 
  350 & 100 & 100 & 100 & 100 \\ 
  400 & 100 & 100 & 100 & 100 \\ 
  450 & 100 & 100 & 100 & 100 \\ 
  500 & 100 & 100 & 100 & 100 \\ 
  \end{tabular}
\end{table}

\begin{table}[ht]
\centering
\begin{tabular}{cc|cccc}
                                                                                  & $n$            & \multicolumn{1}{l}{   $\rho = 0.05$} & \multicolumn{1}{l}{    $\rho = 0.1$} & \multicolumn{1}{l}{    $\rho = 0.2$} & \multicolumn{1}{l}{    $\rho = 0.4$} \\ 
   \hline
\parbox[t]{2mm}{\multirow{9}{*}{\rotatebox[origin=c]{90}{$ORMSEP_{reg}$}}}   & $100$ & 0.137 \,(0.038) & 0.142 \,(0.027) & 0.208 \,(0.043) & 0.369 \,(0.056) \\ 
                                                                                  & $150$ & 0.086 \,(0.015) & 0.100 \,(0.018) & 0.147 \,(0.020) & 0.285 \,(0.044) \\ 
                                                                                  & $200$ & 0.064 \,(0.009) & 0.077 \,(0.012) & 0.117 \,(0.015) & 0.242 \,(0.036) \\ 
                                                                                  & $250$ & 0.054 \,(0.014) & 0.066 \,(0.010) & 0.102 \,(0.013) & 0.207 \,(0.023) \\ 
                                                                                  & $300$ & 0.046 \,(0.007) & 0.058 \,(0.008) & 0.091 \,(0.010) & 0.181 \,(0.021) \\ 
                                                                                  & $350$ & 0.040 \,(0.007) & 0.053 \,(0.007) & 0.084 \,(0.010) & 0.169 \,(0.020) \\ 
                                                                                  & $400$ & 0.036 \,(0.005) & 0.048 \,(0.007) & 0.078 \,(0.009) & 0.157 \,(0.016) \\ 
                                                                                  & $450$ & 0.033 \,(0.004) & 0.044 \,(0.005) & 0.073 \,(0.009) & 0.143 \,(0.015) \\ 
                                                                                  & $500$ & 0.030 \,(0.004) & 0.041 \,(0.005) & 0.068 \,(0.008) & 0.134 \,(0.013) \\ 
   \hline
\parbox[t]{2mm}{\multirow{9}{*}{\rotatebox[origin=c]{90}{$ORMSEP_{deriv}$}}} & $100$ & 0.097 \,(0.102) & 0.087 \,(0.014) & 0.153 \,(0.129) & 0.300 \,(0.121) \\ 
                                                                                  & $150$ & 0.056 \,(0.009) & 0.075 \,(0.074) & 0.086 \,(0.013) & 0.218 \,(0.097) \\ 
                                                                                  & $200$ & 0.045 \,(0.006) & 0.052 \,(0.009) & 0.074 \,(0.010) & 0.185 \,(0.085) \\ 
                                                                                  & $250$ & 0.046 \,(0.073) & 0.044 \,(0.006) & 0.063 \,(0.008) & 0.139 \,(0.033) \\ 
                                                                                  & $300$ & 0.042 \,(0.072) & 0.040 \,(0.004) & 0.057 \,(0.007) & 0.125 \,(0.017) \\ 
                                                                                  & $350$ & 0.032 \,(0.003) & 0.036 \,(0.003) & 0.051 \,(0.007) & 0.112 \,(0.013) \\ 
                                                                                  & $400$ & 0.029 \,(0.003) & 0.033 \,(0.003) & 0.048 \,(0.005) & 0.106 \,(0.014) \\ 
                                                                                  & $450$ & 0.027 \,(0.002) & 0.031 \,(0.003) & 0.044 \,(0.005) & 0.099 \,(0.015) \\ 
                                                                                  & $500$ & 0.025 \,(0.002) & 0.029 \,(0.002) & 0.041 \,(0.003) & 0.093 \,(0.011) \\ 
  \end{tabular}
\caption{Average and standard deviation (in brackets) of $ORMSEP$ with $J=3$ and $nsr=0.05$.} 
\label{tab:ormsep_J3_nsr005}
\end{table}

\begin{table}[htpb]
\centering
\caption{Number of times, out of $100$, that the dimension is correctly selected with $J=2$ and $nsr=0.4$.} 
\label{tab:dimension_choice_J2_nsr04}
\begin{tabular}{c|cccc}
$n$  & $\rho = 0.05$ & $\rho = 0.1$ & $\rho = 0.2$ & $\rho = 0.4$ \\ 
  \hline
100 & 91 & 97 & 98 & 91 \\ 
  150 & 96 & 95 & 93 & 98 \\ 
  200 & 94 & 97 & 97 & 98 \\ 
  250 & 99 & 99 & 100 & 98 \\ 
  300 & 99 & 100 & 98 & 99 \\ 
  350 & 99 & 97 & 100 & 99 \\ 
  400 & 98 & 99 & 100 & 100 \\ 
  450 & 99 & 100 & 100 & 100 \\ 
  500 & 97 & 100 & 100 & 99 \\ 
  \end{tabular}
\end{table}

\begin{table}[ht]
\centering
\begin{tabular}{cc|cccc}
                                                                                  & $n$            & \multicolumn{1}{l}{   $\rho = 0.05$} & \multicolumn{1}{l}{    $\rho = 0.1$} & \multicolumn{1}{l}{    $\rho = 0.2$} & \multicolumn{1}{l}{    $\rho = 0.4$} \\ 
   \hline
\parbox[t]{2mm}{\multirow{9}{*}{\rotatebox[origin=c]{90}{$ORMSEP_{reg}$}}}   & $100$ & 0.107 \,(0.043) & 0.111 \,(0.041) & 0.143 \,(0.040) & 0.235 \,(0.070) \\ 
                                                                                  & $150$ & 0.076 \,(0.026) & 0.085 \,(0.030) & 0.114 \,(0.040) & 0.172 \,(0.042) \\ 
                                                                                  & $200$ & 0.063 \,(0.024) & 0.067 \,(0.023) & 0.095 \,(0.024) & 0.143 \,(0.030) \\ 
                                                                                  & $250$ & 0.049 \,(0.017) & 0.056 \,(0.017) & 0.079 \,(0.021) & 0.125 \,(0.023) \\ 
                                                                                  & $300$ & 0.042 \,(0.013) & 0.050 \,(0.013) & 0.072 \,(0.015) & 0.110 \,(0.021) \\ 
                                                                                  & $350$ & 0.039 \,(0.012) & 0.047 \,(0.013) & 0.064 \,(0.014) & 0.103 \,(0.019) \\ 
                                                                                  & $400$ & 0.036 \,(0.013) & 0.045 \,(0.012) & 0.058 \,(0.011) & 0.095 \,(0.016) \\ 
                                                                                  & $450$ & 0.032 \,(0.008) & 0.041 \,(0.009) & 0.055 \,(0.011) & 0.085 \,(0.013) \\ 
                                                                                  & $500$ & 0.030 \,(0.009) & 0.038 \,(0.009) & 0.052 \,(0.011) & 0.082 \,(0.016) \\ 
   \hline
\parbox[t]{2mm}{\multirow{9}{*}{\rotatebox[origin=c]{90}{$ORMSEP_{deriv}$}}} & $100$ & 0.116 \,(0.117) & 0.094 \,(0.072) & 0.111 \,(0.070) & 0.197 \,(0.089) \\ 
                                                                                  & $150$ & 0.072 \,(0.061) & 0.077 \,(0.072) & 0.094 \,(0.091) & 0.146 \,(0.068) \\ 
                                                                                  & $200$ & 0.071 \,(0.086) & 0.058 \,(0.056) & 0.064 \,(0.017) & 0.121 \,(0.027) \\ 
                                                                                  & $250$ & 0.045 \,(0.014) & 0.050 \,(0.052) & 0.055 \,(0.013) & 0.101 \,(0.021) \\ 
                                                                                  & $300$ & 0.040 \,(0.011) & 0.041 \,(0.009) & 0.049 \,(0.011) & 0.091 \,(0.016) \\ 
                                                                                  & $350$ & 0.038 \,(0.012) & 0.044 \,(0.054) & 0.044 \,(0.008) & 0.081 \,(0.015) \\ 
                                                                                  & $400$ & 0.035 \,(0.015) & 0.036 \,(0.009) & 0.042 \,(0.008) & 0.077 \,(0.014) \\ 
                                                                                  & $450$ & 0.034 \,(0.014) & 0.032 \,(0.007) & 0.040 \,(0.008) & 0.071 \,(0.012) \\ 
                                                                                  & $500$ & 0.035 \,(0.054) & 0.031 \,(0.007) & 0.038 \,(0.006) & 0.066 \,(0.013) \\ 
  \end{tabular}
\caption{Average and standard deviation (in brackets) of $ORMSEP$ with $J=2$ and $nsr=0.4$.} 
\label{tab:ormsep_J2_nsr04}
\end{table}

\begin{table}[htpb]
\centering
\caption{Number of times, out of $100$, that the dimension is correctly selected with $J=3$ and $nsr=0.4$.} 
\label{tab:dimension_choice_J3_nsr04}
\begin{tabular}{c|cccc}
$n$  & $\rho = 0.05$ & $\rho = 0.1$ & $\rho = 0.2$ & $\rho = 0.4$ \\ 
  \hline
100 & 55 & 58 & 60 & 37 \\ 
  150 & 67 & 71 & 65 & 51 \\ 
  200 & 69 & 76 & 71 & 57 \\ 
  250 & 81 & 75 & 71 & 68 \\ 
  300 & 86 & 88 & 87 & 67 \\ 
  350 & 89 & 89 & 85 & 73 \\ 
  400 & 87 & 90 & 78 & 78 \\ 
  450 & 94 & 93 & 85 & 81 \\ 
  500 & 98 & 94 & 92 & 82 \\ 
  \end{tabular}
\end{table}

\begin{table}[ht]
\centering
\begin{tabular}{cc|cccc}
                                                                                  & $n$            & \multicolumn{1}{l}{   $\rho = 0.05$} & \multicolumn{1}{l}{    $\rho = 0.1$} & \multicolumn{1}{l}{    $\rho = 0.2$} & \multicolumn{1}{l}{    $\rho = 0.4$} \\ 
   \hline
\parbox[t]{2mm}{\multirow{9}{*}{\rotatebox[origin=c]{90}{$ORMSEP_{reg}$}}}   & $100$ & 0.287 \,(0.085) & 0.310 \,(0.076) & 0.344 \,(0.075) & 0.503 \,(0.077) \\ 
                                                                                  & $150$ & 0.202 \,(0.054) & 0.223 \,(0.052) & 0.276 \,(0.054) & 0.426 \,(0.062) \\ 
                                                                                  & $200$ & 0.168 \,(0.042) & 0.189 \,(0.040) & 0.235 \,(0.050) & 0.368 \,(0.048) \\ 
                                                                                  & $250$ & 0.141 \,(0.041) & 0.163 \,(0.040) & 0.205 \,(0.037) & 0.324 \,(0.047) \\ 
                                                                                  & $300$ & 0.120 \,(0.025) & 0.134 \,(0.030) & 0.182 \,(0.032) & 0.298 \,(0.043) \\ 
                                                                                  & $350$ & 0.108 \,(0.024) & 0.123 \,(0.023) & 0.167 \,(0.027) & 0.274 \,(0.040) \\ 
                                                                                  & $400$ & 0.099 \,(0.023) & 0.113 \,(0.023) & 0.157 \,(0.028) & 0.252 \,(0.030) \\ 
                                                                                  & $450$ & 0.089 \,(0.018) & 0.104 \,(0.019) & 0.147 \,(0.024) & 0.239 \,(0.032) \\ 
                                                                                  & $500$ & 0.083 \,(0.015) & 0.098 \,(0.018) & 0.138 \,(0.018) & 0.232 \,(0.029) \\ 
   \hline
\parbox[t]{2mm}{\multirow{9}{*}{\rotatebox[origin=c]{90}{$ORMSEP_{deriv}$}}} & $100$ & 0.321 \,(0.227) & 0.300 \,(0.204) & 0.315 \,(0.179) & 0.513 \,(0.142) \\ 
                                                                                  & $150$ & 0.236 \,(0.212) & 0.207 \,(0.157) & 0.267 \,(0.187) & 0.422 \,(0.160) \\ 
                                                                                  & $200$ & 0.229 \,(0.234) & 0.179 \,(0.164) & 0.217 \,(0.171) & 0.351 \,(0.140) \\ 
                                                                                  & $250$ & 0.159 \,(0.186) & 0.182 \,(0.190) & 0.200 \,(0.161) & 0.287 \,(0.133) \\ 
                                                                                  & $300$ & 0.130 \,(0.151) & 0.121 \,(0.136) & 0.133 \,(0.099) & 0.272 \,(0.134) \\ 
                                                                                  & $350$ & 0.103 \,(0.112) & 0.106 \,(0.114) & 0.130 \,(0.107) & 0.244 \,(0.131) \\ 
                                                                                  & $400$ & 0.121 \,(0.170) & 0.102 \,(0.123) & 0.154 \,(0.138) & 0.217 \,(0.116) \\ 
                                                                                  & $450$ & 0.085 \,(0.123) & 0.083 \,(0.082) & 0.120 \,(0.110) & 0.205 \,(0.123) \\ 
                                                                                  & $500$ & 0.063 \,(0.073) & 0.076 \,(0.076) & 0.093 \,(0.078) & 0.192 \,(0.112) \\ 
  \end{tabular}
\caption{Average and standard deviation (in brackets) of $ORMSEP$ with $J=3$ and $nsr=0.4$.} 
\label{tab:ormsep_J3_nsr04}
\end{table}

\begin{table}[htpb]
\centering
\caption{Number of times, out of $100$, that the dimension is correctly selected with $J=4$ and $nsr=0.4$.} 
\label{tab:dimension_choice_J4_nsr04}
\begin{tabular}{c|cccc}
$n$  & $\rho = 0.05$ & $\rho = 0.1$ & $\rho = 0.2$ & $\rho = 0.4$ \\ 
  \hline
100 & 6 & 3 & 6 & 9 \\ 
  150 & 8 & 7 & 5 & 3 \\ 
  200 & 7 & 6 & 7 & 3 \\ 
  250 & 9 & 8 & 8 & 4 \\ 
  300 & 15 & 12 & 7 & 3 \\ 
  350 & 17 & 10 & 11 & 1 \\ 
  400 & 15 & 7 & 5 & 2 \\ 
  450 & 24 & 12 & 10 & 0 \\ 
  500 & 19 & 16 & 12 & 1 \\ 
  \end{tabular}
\end{table}

\begin{table}[ht]
\centering
\begin{tabular}{cc|cccc}
                                                                                  & $n$            & \multicolumn{1}{l}{   $\rho = 0.05$} & \multicolumn{1}{l}{    $\rho = 0.1$} & \multicolumn{1}{l}{    $\rho = 0.2$} & \multicolumn{1}{l}{    $\rho = 0.4$} \\ 
   \hline
\parbox[t]{2mm}{\multirow{9}{*}{\rotatebox[origin=c]{90}{$ORMSEP_{reg}$}}}   & $100$ & 0.516 \,(0.101) & 0.502 \,(0.075) & 0.549 \,(0.081) & 0.701 \,(0.087) \\ 
                                                                                  & $150$ & 0.394 \,(0.066) & 0.416 \,(0.071) & 0.451 \,(0.059) & 0.608 \,(0.064) \\ 
                                                                                  & $200$ & 0.337 \,(0.057) & 0.345 \,(0.054) & 0.405 \,(0.051) & 0.550 \,(0.055) \\ 
                                                                                  & $250$ & 0.298 \,(0.051) & 0.310 \,(0.048) & 0.351 \,(0.045) & 0.512 \,(0.052) \\ 
                                                                                  & $300$ & 0.258 \,(0.037) & 0.270 \,(0.044) & 0.329 \,(0.048) & 0.480 \,(0.052) \\ 
                                                                                  & $350$ & 0.239 \,(0.031) & 0.250 \,(0.043) & 0.305 \,(0.041) & 0.453 \,(0.046) \\ 
                                                                                  & $400$ & 0.211 \,(0.032) & 0.229 \,(0.032) & 0.285 \,(0.033) & 0.428 \,(0.037) \\ 
                                                                                  & $450$ & 0.199 \,(0.029) & 0.212 \,(0.024) & 0.271 \,(0.031) & 0.409 \,(0.038) \\ 
                                                                                  & $500$ & 0.190 \,(0.032) & 0.205 \,(0.024) & 0.259 \,(0.032) & 0.396 \,(0.033) \\ 
   \hline
\parbox[t]{2mm}{\multirow{9}{*}{\rotatebox[origin=c]{90}{$ORMSEP_{deriv}$}}} & $100$ & 0.561 \,(0.176) & 0.577 \,(0.156) & 0.559 \,(0.135) & 0.707 \,(0.114) \\ 
                                                                                  & $150$ & 0.479 \,(0.180) & 0.505 \,(0.169) & 0.508 \,(0.137) & 0.646 \,(0.111) \\ 
                                                                                  & $200$ & 0.451 \,(0.183) & 0.446 \,(0.152) & 0.478 \,(0.161) & 0.602 \,(0.104) \\ 
                                                                                  & $250$ & 0.435 \,(0.207) & 0.413 \,(0.158) & 0.401 \,(0.122) & 0.542 \,(0.100) \\ 
                                                                                  & $300$ & 0.354 \,(0.151) & 0.363 \,(0.143) & 0.403 \,(0.133) & 0.529 \,(0.107) \\ 
                                                                                  & $350$ & 0.333 \,(0.157) & 0.353 \,(0.143) & 0.362 \,(0.117) & 0.505 \,(0.091) \\ 
                                                                                  & $400$ & 0.350 \,(0.178) & 0.367 \,(0.145) & 0.368 \,(0.104) & 0.479 \,(0.085) \\ 
                                                                                  & $450$ & 0.294 \,(0.158) & 0.318 \,(0.114) & 0.356 \,(0.117) & 0.484 \,(0.087) \\ 
                                                                                  & $500$ & 0.319 \,(0.171) & 0.295 \,(0.117) & 0.319 \,(0.089) & 0.462 \,(0.081) \\ 
  \end{tabular}
\caption{Average and standard deviation (in brackets) of $ORMSEP$ with $J=4$ and $nsr=0.4$.} 
\label{tab:ormsep_J4_nsr04}
\end{table}


\clearpage

\subsection{Complement to the study from Section~\ref{subsec:ComparativeStudy}}
Tables~\ref{tab:R-M2-2} and \ref{tab:R-M2-3} supplement results given in Tables~\ref{tab:R-M2-1} and \ref{tab:R-M2-4} for two additional perturbation levels and $500$ learning functions, and Tables~\ref{tab:R-M2_100-1}--\ref{tab:R-M2_100-4} provide analogous results for samples of $100$ learning functions. The conclusions from the main paper all hold true. Especially in the situation when the learning sample size is low, the method (MY) seems to be numerically unstable. Remarkably, the local linear estimation method does not appear to suffer from such drawbacks.

\subsubsection{Results for 500 learning functions}

\begin{table}[htpb]
\centering
\caption{Model (M3) with $nsr = 0.1$ and $\rho=0.1$.} 
\label{tab:R-M2-2}
\resizebox{\textwidth}{!}{\begin{tabular}{c|c|ccccc}
&   & $a = 0$ & $a = 0.25$ & $a = 0.5$ & $a = 0.75$ & $a = 1$ \\ 
  \hline
\parbox[t]{2mm}{\multirow{4}{*}{\rotatebox[origin=c]{90}{Reg.}}}& L & 0.001 \,(0.001) & 0.015 \,(0.001) & 0.113 \,(0.009) & 0.532 \,(0.031) & 1.007 \,(0.011) \\ 
&   LC & 0.061 \,(0.008) & 0.064 \,(0.008) & 0.087 \,(0.011) & 0.187 \,(0.021) & 0.292 \,(0.032) \\ 
&   LL & 0.035 \,(0.028) & 0.034 \,(0.024) & 0.048 \,(0.021) & 0.091 \,(0.020) & 0.142 \,(0.018) \\ 
&   MY & 0.026 \,(0.022) & 0.035 \,(0.024) & 0.724 \,(6.494) & 0.207 \,(0.291) & 0.271 \,(0.051) \\ 
   \hline
\parbox[t]{2mm}{\multirow{3}{*}{\rotatebox[origin=c]{90}{Deriv.}}}& L & \graytext{1.847 \,(0.029)} & 1.093 \,(0.096) & 1.026 \,(0.031) & 1.014 \,(0.016) & 1.004 \,(0.007) \\ 
&   LL & \graytext{0.228 \,(0.213)} & 8.999 \,(8.559) & 1.191 \,(1.158) & 0.210 \,(0.243) & 0.124 \,(0.116) \\ 
&   MY & \graytext{0.407 \,(0.116)} & 5.040 \,(6.388) & 0.766 \,(0.577) & 0.287 \,(0.110) & 0.221 \,(0.062) \\ 
  \end{tabular}}
\end{table}

\begin{table}[htpb]
\centering
\caption{Model (M3) with $nsr = 0.2$ and $\rho=0.2$.} 
\label{tab:R-M2-3}
\resizebox{\textwidth}{!}{\begin{tabular}{c|c|ccccc}
&   & $a = 0$ & $a = 0.25$ & $a = 0.5$ & $a = 0.75$ & $a = 1$ \\ 
  \hline
\parbox[t]{2mm}{\multirow{4}{*}{\rotatebox[origin=c]{90}{Reg.}}}& L & 0.003 \,(0.002) & 0.017 \,(0.002) & 0.113 \,(0.010) & 0.534 \,(0.032) & 1.013 \,(0.018) \\ 
&   LC & 0.085 \,(0.014) & 0.092 \,(0.013) & 0.125 \,(0.017) & 0.260 \,(0.031) & 0.402 \,(0.040) \\ 
&   LL & 0.036 \,(0.036) & 0.050 \,(0.036) & 0.071 \,(0.030) & 0.141 \,(0.020) & 0.220 \,(0.023) \\ 
&   MY & 0.062 \,(0.210) & 0.051 \,(0.028) & 0.161 \,(0.769) & 0.191 \,(0.039) & 0.653 \,(3.516) \\ 
   \hline
\parbox[t]{2mm}{\multirow{3}{*}{\rotatebox[origin=c]{90}{Deriv.}}}& L & \graytext{1.852 \,(0.031)} & 1.133 \,(0.115) & 1.024 \,(0.021) & 1.011 \,(0.011) & 1.005 \,(0.008) \\ 
&   LL & \graytext{0.514 \,(0.200)} & 9.041 \,(8.499) & 1.203 \,(1.066) & 0.333 \,(0.193) & 0.279 \,(0.109) \\ 
&   MY & \graytext{0.602 \,(0.893)} & 12.163 \,(24.987) & 1.125 \,(0.857) & 0.505 \,(1.284) & 0.326 \,(0.479) \\ 
  \end{tabular}}
\end{table}

\FloatBarrier
\clearpage 

\subsubsection{Results for 100 learning functions}

\begin{table}[htpb]
\centering
\caption{Model (M3) with $nsr = 0.05$ and $\rho=0.05$.} 
\label{tab:R-M2_100-1}
\resizebox{\textwidth}{!}{\begin{tabular}{c|c|ccccc}
&   & $a = 0$ & $a = 0.25$ & $a = 0.5$ & $a = 0.75$ & $a = 1$ \\ 
  \hline
\parbox[t]{2mm}{\multirow{4}{*}{\rotatebox[origin=c]{90}{Reg.}}}& L & 0.003 \,(0.002) & 0.017 \,(0.003) & 0.119 \,(0.010) & 0.567 \,(0.041) & 1.029 \,(0.043) \\ 
&   LC & 0.134 \,(0.030) & 0.138 \,(0.028) & 0.188 \,(0.040) & 0.376 \,(0.053) & 0.593 \,(0.088) \\ 
&   LL & 0.071 \,(0.061) & 0.084 \,(0.061) & 0.110 \,(0.062) & 0.239 \,(0.045) & 0.382 \,(0.064) \\ 
&   MY & 0.446 \,(3.538) & 65.192 \,(651.124) & 0.236 \,(0.593) & 0.814 \,(4.602) & 1.609 \,(11.202) \\ 
   \hline
\parbox[t]{2mm}{\multirow{3}{*}{\rotatebox[origin=c]{90}{Deriv.}}}& L & \graytext{1.865 \,(0.101)} & 1.229 \,(0.354) & 1.069 \,(0.133) & 1.045 \,(0.084) & 1.012 \,(0.049) \\ 
&   LL & \graytext{0.760 \,(0.376)} & 10.463 \,(8.701) & 1.424 \,(1.113) & 0.472 \,(0.292) & 0.413 \,(0.204) \\ 
&   MY & \graytext{0.680 \,(0.668)} & 50.611 \,(277.149) & 5.677 \,(27.481) & 3.090 \,(11.623) & 0.369 \,(0.359) \\ 
  \end{tabular}}
\end{table}

\begin{table}[htpb]
\centering
\caption{Model (M3) with $nsr = 0.1$ and $\rho=0.1$.} 
\label{tab:R-M2_100-2}
\resizebox{\textwidth}{!}{\begin{tabular}{c|c|ccccc}
&   & $a = 0$ & $a = 0.25$ & $a = 0.5$ & $a = 0.75$ & $a = 1$ \\ 
  \hline
\parbox[t]{2mm}{\multirow{4}{*}{\rotatebox[origin=c]{90}{Reg.}}}& L & 0.007 \,(0.005) & 0.021 \,(0.006) & 0.123 \,(0.014) & 0.570 \,(0.049) & 1.038 \,(0.055) \\ 
&   LC & 0.151 \,(0.031) & 0.148 \,(0.034) & 0.198 \,(0.045) & 0.416 \,(0.064) & 0.637 \,(0.089) \\ 
&   LL & 0.083 \,(0.070) & 0.095 \,(0.065) & 0.120 \,(0.066) & 0.257 \,(0.051) & 0.411 \,(0.064) \\ 
&   MY & 0.079 \,(0.080) & 0.135 \,(0.488) & 14.058 \,(139.119) & 0.622 \,(2.011) & 3.253 \,(24.778) \\ 
   \hline
\parbox[t]{2mm}{\multirow{3}{*}{\rotatebox[origin=c]{90}{Deriv.}}}& L & \graytext{1.869 \,(0.098)} & 1.404 \,(0.554) & 1.098 \,(0.151) & 1.053 \,(0.086) & 1.014 \,(0.050) \\ 
&   LL & \graytext{0.770 \,(0.329)} & 10.960 \,(8.586) & 1.397 \,(1.064) & 0.533 \,(0.271) & 0.418 \,(0.177) \\ 
&   MY & \graytext{2.244 \,(7.015)} & 19007.294 \,(188516.083) & 24.681 \,(201.610) & 23.785 \,(126.218) & 1.217 \,(3.787) \\ 
  \end{tabular}}
\end{table}

\begin{table}[htpb]
\centering
\caption{Model (M3) with $nsr = 0.2$ and $\rho=0.2$.} 
\label{tab:R-M2_100-3}
\resizebox{\textwidth}{!}{\begin{tabular}{c|c|ccccc}
&   & $a = 0$ & $a = 0.25$ & $a = 0.5$ & $a = 0.75$ & $a = 1$ \\ 
  \hline
\parbox[t]{2mm}{\multirow{4}{*}{\rotatebox[origin=c]{90}{Reg.}}}& L & 0.014 \,(0.008) & 0.030 \,(0.012) & 0.131 \,(0.016) & 0.573 \,(0.046) & 1.045 \,(0.055) \\ 
&   LC & 0.182 \,(0.038) & 0.197 \,(0.044) & 0.252 \,(0.052) & 0.470 \,(0.079) & 0.754 \,(0.090) \\ 
&   LL & 0.105 \,(0.075) & 0.111 \,(0.078) & 0.180 \,(0.071) & 0.323 \,(0.066) & 0.496 \,(0.068) \\ 
&   MY & 0.449 \,(2.173) & 0.218 \,(0.674) & 0.221 \,(0.206) & 0.617 \,(1.084) & 5.052 \,(43.219) \\ 
   \hline
\parbox[t]{2mm}{\multirow{3}{*}{\rotatebox[origin=c]{90}{Deriv.}}}& L & \graytext{1.877 \,(0.077)} & 1.551 \,(0.691) & 1.077 \,(0.076) & 1.035 \,(0.043) & 1.017 \,(0.040) \\ 
&   LL & \graytext{0.935 \,(0.276)} & 8.854 \,(8.173) & 1.804 \,(0.958) & 0.639 \,(0.244) & 0.507 \,(0.149) \\ 
&   MY & \graytext{116.568 \,(1067.638)} & 429.089 \,(3417.084) & 14.088 \,(42.670) & 10.561 \,(81.778) & 62.822 \,(575.144) \\ 
  \end{tabular}}
\end{table}

\begin{table}[htpb]
\centering
\caption{Model (M3) with $nsr = 0.4$ and $\rho=0.4$.} 
\label{tab:R-M2_100-4}
\resizebox{\textwidth}{!}{\begin{tabular}{c|c|ccccc}
&   & $a = 0$ & $a = 0.25$ & $a = 0.5$ & $a = 0.75$ & $a = 1$ \\ 
  \hline
\parbox[t]{2mm}{\multirow{4}{*}{\rotatebox[origin=c]{90}{Reg.}}}& L & 0.042 \,(0.025) & 0.059 \,(0.021) & 0.161 \,(0.028) & 0.601 \,(0.060) & 1.058 \,(0.075) \\ 
&   LC & 0.303 \,(0.065) & 0.303 \,(0.066) & 0.359 \,(0.073) & 0.632 \,(0.083) & 0.939 \,(0.101) \\ 
&   LL & 0.191 \,(0.114) & 0.184 \,(0.109) & 0.254 \,(0.110) & 0.451 \,(0.075) & 0.705 \,(0.081) \\ 
&   MY & 0.235 \,(0.191) & 0.215 \,(0.167) & 0.296 \,(0.229) & 0.544 \,(0.200) & 1.255 \,(3.361) \\ 
   \hline
\parbox[t]{2mm}{\multirow{3}{*}{\rotatebox[origin=c]{90}{Deriv.}}}& L & \graytext{1.869 \,(0.049)} & 2.023 \,(0.495) & 1.146 \,(0.075) & 1.044 \,(0.028) & 1.012 \,(0.023) \\ 
&   LL & \graytext{1.331 \,(0.213)} & 10.177 \,(7.576) & 1.902 \,(0.871) & 0.823 \,(0.177) & 0.721 \,(0.116) \\ 
&   MY & \graytext{11.547 \,(84.817)} & 309.921 \,(1678.421) & 2772.892 \,(27642.691) & 10.038 \,(37.057) & 6.362 \,(47.083) \\ 
  \end{tabular}}
\end{table}

\FloatBarrier

\subsection{Complement to the data analysis from Section~\ref{subsec:GrowthDataset}}
As explained in Section~\ref{subsec:GrowthDataset}, restricting the growth velocity profiles from ages 1--10 to 6--10 does not degrade the quality of estimation. Figure~\ref{fig:growth_sfim_obs_vs_estimations_compare_1-10_6-10} displays the observed responses versus their estimates when considering the whole growth velocity profile (1--10), or the restricted one (6--10). To quantify the performance of the estimating procedure, the empirical (Pearson's) correlation coefficient between the observations and their estimates is computed in each situation. When the regression model involves the whole trajectory of the growth velocity, the correlation equals 0.823; in the other case where estimates are based on ages 6--10, one gets 0.819. The accuracy of the estimating procedures are almost the same, which confirms that the behavior of the growth velocity profile under 6 years of age does not influence the adult height at 18.
\begin{figure}[ht]
\centering
\includegraphics[scale = 0.6]{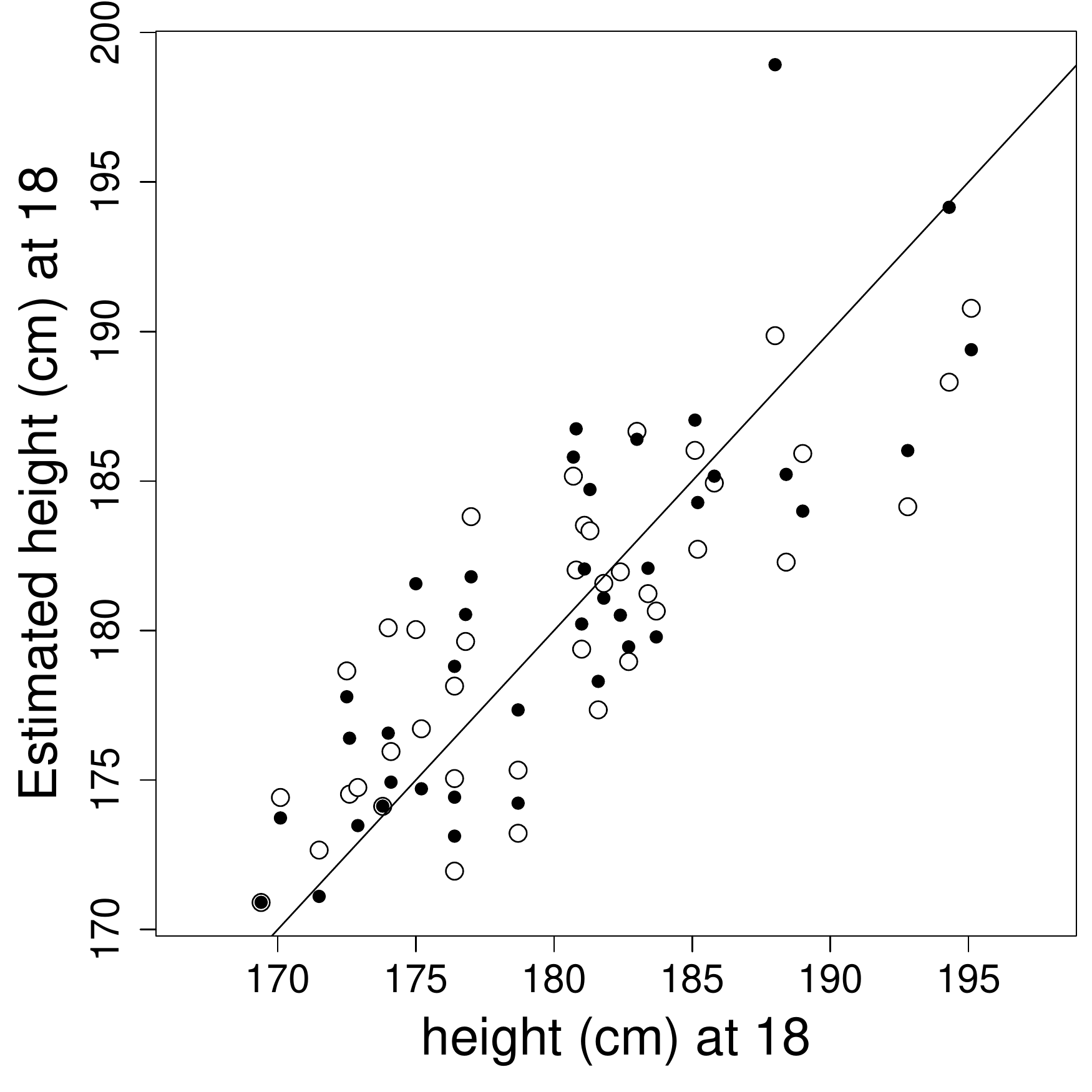}
\caption{Estimates based on the 1--10 growth velocity profiles (black points) and on the 6--10 growth velocity profiles (circles).}
\label{fig:growth_sfim_obs_vs_estimations_compare_1-10_6-10}
\end{figure}

\section{Theoretical complement on (H\ref{hypo:gammafunction})} \label{app:D}
In this section, we investigate hypothesis (H\ref{hypo:gammafunction}) that requires regularity of the family of functions $$\gamma_{j_1,\ldots,j_K}^{p_1,\ldots,p_K}(t) = \E \left( \langle \phi_{j_1}, X_1-x \rangle^{p_1}  \cdots  \langle \phi_{j_K}, X_1-x \rangle^{p_K} | \| X_1 - x \|^{p_1 + \cdots + p_K} = t \right).$$
The next lemma provides a general condition on the functional predictor $X$ in order to fulfill (H\ref{hypo:gammafunction}). 

\begin{lem}\label{lem:aboutH3}
Suppose that for some $x \in H$ the random vector 	
	\[	\left(\langle \phi_{1}, X_1-x \rangle, \dots, \langle \phi_{J}, X_1-x \rangle, \left\Vert X_1 - x \right\Vert - \sqrt{\sum_{i=1}^J \langle \phi_{i}, X_1-x \rangle^2} \right)\tr	\]
is absolutely continuous with a density in $\R^{J+1}$ that is positive at the origin, continuous at the origin in its first $J$ coordinates, and continuous at the origin from the right in its last coordinate. Then condition (H\ref{hypo:gammafunction}) is satisfied for all functions in $H$. 
\end{lem}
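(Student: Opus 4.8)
The plan is to reduce the conditional expectation defining $\gamma_{j_1,\ldots,j_K}^{p_1,\ldots,p_K}$ to an explicit ratio of integrals over a ball in $\R^J$, and then to read off the behaviour at the origin directly from the density hypothesis. Write $W_i \coloneqq \langle \phi_i, X_1 - x\rangle$ for $i \le J$ (enlarging $J$ if necessary so that all indices $j_1,\ldots,j_K$ that occur are at most $J$), set $r(w) \coloneqq (\sum_{i\le J} w_i^2)^{1/2}$, and let $R$ denote the last coordinate of the random vector, so that $\|X_1-x\| = r(W) + R$ with $R\ge 0$. Since the shear $(w,\eta)\mapsto (w, r(w)+\eta)$ has unit Jacobian, the joint density of $\big(W,\|X_1-x\|\big)$ at $(w,s)$ equals $f(w,s-r(w))$ on $\{r(w)\le s\}$, where $f$ is the assumed density. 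Conditioning on $\|X_1-x\|=s$ (equivalently on $\|X_1-x\|^{p_+}=t$ with $t=s^{p_+}$, $p_+=\sum_m p_m$) then yields
\[
\gamma_{j_1,\ldots,j_K}^{p_1,\ldots,p_K}(s^{p_+}) = \frac{\int_{\{r(w)\le s\}} \big(\prod_m w_{j_m}^{p_m}\big)\, f(w,s-r(w))\,dw}{\int_{\{r(w)\le s\}} f(w,s-r(w))\,dw}.
\]

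Next I would pass to polar coordinates $w=r\omega$ ($r\ge0$, $\omega\in S^{J-1}$, $dw=r^{J-1}\,dr\,d\sigma(\omega)$), use the homogeneity $\prod_m w_{j_m}^{p_m}=r^{p_+}\prod_m\omega_{j_m}^{p_m}$, and then rescale $r=s\xi$, $\xi\in[0,1]$. Both numerator and denominator factor as a power of $s$ times an integral over $[0,1]\times S^{J-1}$, and these powers combine to give
\[
\gamma_{j_1,\ldots,j_K}^{p_1,\ldots,p_K}(t) = t\,\frac{\tilde N(t^{1/p_+})}{\tilde D(t^{1/p_+})},
\]
where $\tilde N(s)=\int_0^1\!\int_{S^{J-1}} \xi^{p_++J-1}\big(\prod_m\omega_{j_m}^{p_m}\big) f(s\xi\omega,s(1-\xi))\,d\sigma(\omega)\,d\xi$ and $\tilde D$ is the same with $\prod_m\omega_{j_m}^{p_m}$ replaced by $1$ and $\xi^{p_++J-1}$ by $\xi^{J-1}$. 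In particular $\gamma(0)=0$, and the whole problem is reduced to the limit of the ratio $\tilde N(s)/\tilde D(s)$ as $s\to0^+$.

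The key step is to pass to the limit inside $\tilde N,\tilde D$. For each fixed $(\xi,\omega)$ the argument $(s\xi\omega,s(1-\xi))$ tends to the origin, approaching from the right in the last coordinate since $s(1-\xi)\ge0$, so continuity of $f$ there gives $f(s\xi\omega,s(1-\xi))\to f(0)$; continuity at a point also yields local boundedness of $f$, furnishing the constant dominating function needed for dominated convergence on $[0,1]\times S^{J-1}$. Hence $\tilde D(s)\to f(0)\,\sigma_{J-1}/J$, which is strictly positive by positivity of $f$ at the origin, while $\tilde N(s)\to f(0)\,I_{\mathrm{ang}}/(p_++J)$ with $I_{\mathrm{ang}}=\int_{S^{J-1}}\prod_m\omega_{j_m}^{p_m}\,d\sigma(\omega)$. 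Thus $\gamma(t)/t$ has a finite limit, so the one-sided derivative at $0$ exists and equals
\[
{\gamma_{j_1,\ldots,j_K}^{p_1,\ldots,p_K}}'(0) = \frac{J}{J+p_+}\cdot\frac{1}{\sigma_{J-1}}\int_{S^{J-1}}\prod_m\omega_{j_m}^{p_m}\,d\sigma(\omega),
\]
the value entering every asymptotic lemma; the full $C^1$ statement on a one-sided neighbourhood of $0$ follows from the identical argument applied at nearby radii whenever $f$ is continuous there. For $\boldsymbol{\Gamma}$ I then specialise to $p_1=p_2=1$ and invoke the elementary spherical second moment $\int_{S^{J-1}}\omega_j\omega_k\,d\sigma=(\sigma_{J-1}/J)\delta_{jk}$ to obtain $[\boldsymbol{\Gamma}]_{jk}={\gamma_{j,k}^{1,1}}'(0)=\delta_{jk}/(J+2)$, so that $\boldsymbol{\Gamma}=(J+2)^{-1}\bI$ and its smallest eigenvalue $\lambda_J=(J+2)^{-1}>0$, in agreement with the positive semidefiniteness already known from {\sc \prettyref{lem:gammafunction}}.

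The main obstacle is precisely this passage to the limit using only pointwise (one-sided) continuity and positivity of the density at the single point $0$: one must check that the shrinking family of evaluation points remains on the admissible side of the last coordinate, build the dominated-convergence bound from continuity-at-a-point alone, and simultaneously keep $\tilde D(s)$ bounded away from zero so that the ratio — and hence the derivative $\gamma'(0)$ — is well defined near the origin. The pleasant feature that makes the conclusion hold for \emph{every} admissible density is that the common value $f(0)$ factors out of both $\tilde N$ and $\tilde D$ and cancels in the ratio, leaving a purely geometric spherical moment.
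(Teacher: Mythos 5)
Your argument is correct and reaches the conclusion by a genuinely different route from the paper's, and in one respect it is sharper. The paper symmetrizes the last coordinate with an independent Rademacher sign, conditions on the squared norm so that the conditional law sits on a sphere of radius $\sqrt{t}$ in $\R^{J+1}$, and only \emph{lower-bounds} the conditional variance of linear projections by comparison with the uniform law on that sphere, obtaining $\bu\tr\bGamma\bu\geq 1/(J+1)$; differentiability of $\gamma_{j,k}^{1,1}$ is then sketched from a surface-integral representation. You instead use the unit-Jacobian shear $(w,\eta)\mapsto(w,r(w)+\eta)$ to turn conditioning on $\|X_1-x\|=s$ into a ratio of integrals over the ball $\{r(w)\leq s\}\subset\R^J$, and you compute every ${\gamma_{j_1,\ldots,j_K}^{p_1,\ldots,p_K}}'(0)$ exactly as a normalized spherical moment; this treats all the functions listed in (H\ref{hypo:gammafunction}) uniformly, not just $\gamma_{j,k}^{1,1}$, and yields $\bGamma=(J+2)^{-1}\bI$ in closed form. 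Be aware, though, that your constant $1/(J+2)$ and the paper's bound $1/(J+1)$ cannot both hold under one and the same hypothesis: the paper's written proof silently replaces the last coordinate $\|X_1-x\|-\sqrt{\sum_{i\leq J}\langle\phi_i,X_1-x\rangle^2}$ of the stated vector by $\sqrt{\sum_{i>J}\langle\phi_i,X_1-x\rangle^2}$, and these two are related by a coordinate change whose Jacobian degenerates at the origin, so positivity and continuity of the density at $0$ for one does not transfer to the other. Under the literal statement (which is what you prove) the rescaled conditional law of the first $J$ scores tends to the uniform law on a ball in $\R^J$ (coordinate second moment $s^2/(J+2)$); under the proof's version it tends to the projection of the uniform law on a sphere in $\R^{J+1}$ (second moment $s^2/(J+1)$). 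So you have faithfully proved the lemma as stated, while the paper proves a variant --- the variant its closing Gaussian discussion actually relies on. Finally, you share one soft spot with the paper: continuity of the density at the single point $0$ only delivers the one-sided derivative ${\gamma}'(0)$, and your appeal to ``nearby radii'' (like the paper's appeal to the fundamental theorem of calculus) tacitly requires continuity of $f$ on a neighbourhood in order to obtain the literal ``continuously differentiable around zero'' demanded by (H\ref{hypo:gammafunction}).
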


\begin{proof}
For $x \in H$ fixed and $j = 1, \dots, J$ denote $Z_j = \langle \phi_{j}, X_1-x \rangle$. We want to establish that the matrix
	\[	\mathbf{\Gamma} = \left[ {\gamma_{j,k}^{1,1}}^{'}(0) \right]_{j,k=1}^J = \lim_{t\to 0} \frac{1}{t} \left[ \E \left( Z_j Z_k | \| X_1 - x \|^2 = t \right) \right]_{j,k=1}^J	\]
is positive definite. That is equivalent with the fact that for any $\bu = \left(u_1, \dots, u_J\right)\tr \in \R^J$, $\left\Vert \bu \right\Vert = 1$
	\[
	\begin{aligned}
	0 < & \bu\tr \bGamma \bu =  \lim_{t\to 0} \frac{1}{t} \sum_{j=1}^J \sum_{k=1}^J \E \left( u_j Z_j Z_k u_k \middle| \| X_1 - x \|^2 = t \right)\\
	& = \lim_{t\to 0} \frac{1}{t} \E \left( \left(\sum_{j=1}^J u_j Z_j\right)^2 \middle| \| X_1 - x \|^2 = t \right).	
	\end{aligned}
	\]
For any univariate random variable $Z$ with variance and $z \in \R$ we know that
	\begin{equation}	\label{eq:variance inequality}
	\E \left(Z - z\right)^2 = \E \left(Z - \E Z\right)^2 + \left(\E Z - z\right)^2 \geq \var Z,	
	\end{equation}
with equality if and only if $z = \E Z$. Use the conditional version of this inequality to obtain
	\begin{equation}	\label{eq:conditional variance}
	\begin{aligned}
	\E & \left( \left(\sum_{j=1}^J u_j Z_j\right)^2 \middle| \| X_1 - x \|^2 = t \right) \\
	& = \E \left( \left(\sum_{j=1}^J u_j Z_j - \E \left( \sum_{j=1}^J u_j Z_j \middle| \left\Vert X_1 - x \right\Vert^2 = t \right) \right)^2 \middle| \| X_1 - x \|^2 = t \right) \\
	& \phantom{=} + \E \left( \left(\E \left( \sum_{j=1}^J u_j Z_j \middle| \left\Vert X_1 - x \right\Vert^2 = t \right) \right)^2 \middle| \| X_1 - x \|^2 = t \right) \\
	& \geq \E \left( \left(\sum_{j=1}^J u_j Z_j - \E \left( \sum_{j=1}^J u_j Z_j \middle| \left\Vert X_1 - x \right\Vert^2 = t \right) \right)^2 \middle| \| X_1 - x \|^2 = t \right) \\
	& = \var \left( \sum_{j=1}^J u_j Z_j \middle| \left\Vert X_1 - x \right\Vert^2 = t \right) = \var \left( \sum_{j=1}^J u_j Z_j \middle| \sum_{i=1}^\infty Z_i^2 = t \right).
	\end{aligned}
	\end{equation}
Therefore, it suffices to show that the conditional variance of no projection of the vector $\left(Z_1, \dots, Z_J\right)\tr$ into a line spanned by a unit vector is of order $o(t)$ with $t \to 0$. 

We assume that the random vector $\bZ = \left(Z_1, \dots, Z_J, \sqrt{\sum_{i=J+1}^\infty Z_i^2} \right)\tr$ is absolutely continuous in $\R^{J+1}$. For an independent Rademacher random variable $R$, i.e. $\PP\left(R = 1\right) = \PP\left( R = -1 \right) = 1/2$, define $\widetilde{\bZ} = \left(Z_1, \dots, Z_J, R \sqrt{\sum_{i=J+1}^\infty Z_i^2} \right)\tr$. This random vector is absolutely continuous, with density $f_J$ positive and continuous at the origin. It differs from the original random vector $\bZ$ only in its last coordinate, and $\bZ\tr \bZ$ has the same distribution as $\widetilde{\bZ}\tr \widetilde{\bZ}$. 
The conditional density of $\widetilde{\bZ}$ given $\sum_{i=1}^\infty Z_i^2 = \widetilde{\bZ}\tr \widetilde{\bZ} = t$ takes the form
	\begin{equation}	\label{eq:conditional density}
	\frac{f_J\left(\bz\right) 1\left[ t = \bz\tr \bz \right]}{\int_{\{\bv\tr \bv = t\}} f_J\left(\bv\right) \dd \bv} \quad \mbox{for }\bz\in\R^{J+1}, \end{equation}
where $1\left[ t = \bz\tr \bz \right]$ is $1$ if $t = \bz\tr \bz$, $0$ otherwise. The integral in \eqref{eq:conditional density}, and in analogous expressions below, is taken with respect to the Hausdorff measure on an appropriate sphere in $\R^{J+1}$. By our assumptions, $f_J$ is positive and continuous in the neighborhood of the origin. Then, for $t$ small enough, $c_{J,t} = \inf_{\{\bz\tr \bz = t\}} f_J(\bz)$ must be positive. Using \eqref{eq:variance inequality} again, we can therefore write
	\[
	\begin{aligned}
	\var & \left( \sum_{j=1}^J u_j Z_j \middle| \sum_{i=1}^\infty Z_i^2 = t \right) \\
	& = \int_{\R^{J+1}} \left( \sum_{j=1}^J u_j z_j - \E\left( \sum_{j=1}^J u_j Z_j \middle| \widetilde{\bZ}\tr \widetilde{\bZ} = t \right) \right)^2 \frac{f_J\left(\bz\right) 1\left[ t = \bz\tr \bz \right]}{\int_{\{\bv\tr \bv = t\}} f_J\left(\bv\right) \dd \bv} \dd \bz \\
	& = \int_{\{ \bz\tr \bz = t \}} \left( \sum_{j=1}^J u_j z_j - \E\left( \sum_{j=1}^J u_j Z_j \middle| \widetilde{\bZ}\tr \widetilde{\bZ} = t \right) \right)^2 \frac{f_J\left(\bz\right)}{\int_{\{\bv\tr \bv = t\}} f_J\left(\bv\right) \dd \bv} \dd \bz \\
	& \geq \int_{\{ \bz\tr \bz = t \}} \left( \sum_{j=1}^J u_j z_j - \E\left( \sum_{j=1}^J u_j Z_j \middle| \widetilde{\bZ}\tr \widetilde{\bZ} = t \right) \right)^2 \frac{c_{J,t}}{\int_{\{\bv\tr \bv = t\}} f_J\left(\bv\right) \dd \bv} \dd \bz \\
	& = \frac{c_{J,t} \int_{\{\bv\tr \bv = t\}} 1 \dd \bv}{\int_{\{\bv\tr \bv = t\}} f_J\left(\bv\right) \dd \bv} \int_{\{ \bz\tr \bz = t \}} \left( \sum_{j=1}^J u_j z_j - \E\left( \sum_{j=1}^J u_j Z_j \middle| \widetilde{\bZ}\tr \widetilde{\bZ} = t \right) \right)^2 g_t(\bz) \dd \bz \\
	& = \frac{c_{J,t} \int_{\{\bv\tr \bv = t\}} 1 \dd \bv}{\int_{\{\bv\tr \bv = t\}} f_J\left(\bv\right) \dd \bv} \E \left( \sqrt{t} \sum_{j=1}^J u_j U_j - \E\left( \sum_{j=1}^J u_j Z_j \middle| \widetilde{\bZ}\tr \widetilde{\bZ} = t \right) \right)^2 \\	
	& \geq \frac{c_{J,t} \int_{\{\bv\tr \bv = t\}} 1 \dd \bv}{\int_{\{\bv\tr \bv = t\}} f_J\left(\bv\right) \dd \bv} \var \left( \sqrt{t} \sum_{j=1}^J u_j U_j \right) \\	
	& = \frac{c_{J,t} \int_{\{\bv\tr \bv = t\}} 1 \dd \bv}{\int_{\{\bv\tr \bv = t\}} f_J\left(\bv\right) \dd \bv} t \var\left(U_1\right) =  \frac{c_{J,t} \int_{\{\bv\tr \bv = t\}} 1 \dd \bv}{\int_{\{\bv\tr \bv = t\}} f_J\left(\bv\right) \dd \bv} \frac{t}{J+1}.
	\end{aligned}
	\]
Here, $g_t(\bz) = \left(\int_{\{\bv\tr \bv = t\}} 1 \dd \bv\right)^{-1}$ is the reciprocal of the Hausdorff measure of a sphere, and $\bU = \left(U_1, \dots, U_{J+1}\right)\tr$ is a random vector distributed uniformly on the unit sphere in $\R^{J+1}$. The second inequality is from \eqref{eq:variance inequality}. The first equality on the last line in the formula above follows from the spherical symmetry of the vector $\bU$ --- any projection of $\bU$ onto a line has the same distribution as $U_1$. The final equality follows from $\var \bU = \bI/(J+1)$, for $\bI$ the $(J+1) \times (J+1)$ identity matrix.

From the continuity of $f_J$ around the origin it follows that 
	\[	\lim_{t \to 0} \frac{c_{J,t} \int_{\{\bv\tr \bv = t\}} 1 \dd \bv}{\int_{\{\bv\tr \bv = t\}} f_J\left(\bv\right) \dd \bv} = \lim_{t\to 0} \frac{ f_J\left(\bzero \right) \int_{\{\bv\tr \bv = t\}} 1 \dd \bv}{\int_{\{\bv\tr \bv = t\}} f_J\left(\bzero \right) \dd \bv} = 1.	\]
Therefore, we obtain
	\[	\lim_{t \to 0} \frac{1}{t} \var \left( \sum_{j=1}^J u_j Z_j \middle| \sum_{i=1}^\infty Z_i^2 = t \right) \geq \frac{1}{J+1},	\]
and also the desired
	\[	\bu\tr \bGamma \bu \geq \frac{1}{J+1} > 0.	\]
Note that the uniformity in $\bu$ is assured by the spherical symmetry of $\bU$ utilized above.

Finally, to see that, for instance, function ${\gamma_{j,k}^{1,1}}$ is continuously differentiable at the origin, note that from \eqref{eq:conditional density} we get
		\[	\gamma_{j,k}^{1,1}(t) = \int_{\{\bz\tr\bz = t\}} \frac{z_j z_k f_J(\bz)}{\int_{\{\bv\tr\bv = t\}} f_J(\bv) \dd \bv} \dd \bz,	\]
for $z_j$ and $z_k$ the $j$-th and $k$-th elements of the vector $\bz \in \R^{J+1}$, respectively. The assertion follows from the continuity of $f_J$, and both integrands above in the formula for $\gamma_{j,k}^{1,1}(t)$, around the origin and the fundamental theorem of calculus.
\end{proof}	

For a non-degenerate Gaussian process we know that each $Z_i$ has a univariate, non-degenerate normal distribution. The distributions of $Z_i$ are generally correlated, with
	\begin{equation}	\label{covariance of Z}
	\cov\left(Z_i, Z_j\right) = \left\langle B \phi_i, \phi_j \right\rangle,	
	\end{equation}
for $B$ the covariance operator of the process $X$. But, for the special case of the $\left\{ \phi_i \right\}$ being the eigenbasis of $B$, $\left\{ Z_i \right\}$ is a sequence of independent random variables with non-degenerate normal distributions. Therefore, in this case the elements of the random vector $\widetilde{\bZ}$ are independent, and absolutely continuous random variables. Absolute continuity of the last element $\sqrt{\sum_{i=J+1}^\infty \langle \phi_{i}, X_1-x \rangle^2}$ follows from the orthonormality of the basis $\left\{\phi_i\right\}$, independence and absolute continuity of the terms of the sequence $\left\{ Z_i \right\} = \left\{ \langle \phi_i, X_1 - x \rangle \right\}$, and absolute continuity of $\left\Vert X_1 - x \right\Vert$. All the marginal densities of vector $\widetilde{\bZ}$ are positive and continuous at the origin. Therefore, for any non-degenerate Gaussian process with $\left\{ \phi_i \right\}$ its collection of eigenfunctions, the assumptions of {\sc Lemma~\ref{lem:aboutH3}} are satisfied, and $X$ satisfies (H\ref{hypo:gammafunction}).

\end{document}